\documentclass[10pt]{amsart}

\usepackage{amssymb}

\newtheorem{thm}{Theorem}
\newtheorem{lem}[thm]{Lemma}
\newtheorem{cor}[thm]{Corollary}
\newtheorem{prop}[thm]{Proposition}

\def\sign{{\rm sign}}
\def\less{\lesssim}
\def\la{\langle}
\def\ra{\rangle}

\def\Ai{{\rm Ai}}
\def\Bi{{\rm Bi}}
\def\eps{{\varepsilon}}
\def\les{\lesssim}

\def\beeq{\begin{equation}}
\def\eneq{\end{equation}}
\def\R{{\mathbb R}}
\def\tilde{\widetilde}
\def\wt{\widetilde}
\renewcommand{\Sigma}{{\mathbb S}}

\begin{document}

\numberwithin{equation}{section}
\title[Zero energy scattering]
{Semiclassical analysis of low and zero energy scattering for one dimensional Schr\"odinger operators with inverse square potentials}
\author{Ovidiu Costin, Wilhelm Schlag, Wolfgang Staubach, Saleh Tanveer}
\thanks{The first author was supported by the National Science Foundation DMS--0406193, DMS--0601226, DMS--0600369, the second by  DMS--0653841, and the 
fourth by  DMS--0405837. The second author thanks Fritz Gesztesy and Gerald Teschl for their interest in this work and comments on a preliminary version.}
\address{Costin, Tanveer: Department of Mathematics, The Ohio State University 100 Math Tower, 231 West 18th Avenue, Columbus, OH 43210-1174, U.S.A.} 
\email{costin@math.ohio-state.edu,  tanveer@math.ohio-state.edu}
\address{Schlag: The University of Chicago, 5734 South University Avenue, Chicago, IL 60637, U.S.A.}
\email{schlag@math.uchicago.edu} 
\address{Staubach: Department of Mathematics, Colin Maclaurin Building, Heriot-Watt University, Edinburgh,  EH14 4AS, U.K.}
\email{W.Staubach@hw.ac.uk}
\maketitle

\begin{abstract}
This paper studies the scattering matrix $\Sigma(E;\hbar)$ of the problem
\[
 -\hbar^2 \psi''(x) + V(x) \psi(x) = E\psi(x)
\]
for positive potentials $V\in C^\infty(\R)$ with inverse square behavior as $x\to\pm\infty$.
It is shown that each entry takes the form $\Sigma_{ij}(E;\hbar)=\Sigma_{ij}^{(0)}(E;\hbar)(1+\hbar \sigma_{ij}(E;\hbar))$ where $\Sigma_{ij}^{(0)}(E;\hbar)$
is the WKB approximation relative to the {\em modified potential} $V(x)+\frac{\hbar^2}{4} \la x\ra^{-2}$  and the correction terms $\sigma_{ij}$
satisfy $|\partial_E^k \sigma_{ij}(E;\hbar)| \le C_k E^{-k}$ for all $k\ge0$ and uniformly in $(E,\hbar)\in (0,E_0)\times (0,\hbar_0)$
where $E_0,\hbar_0$ are small constants. This asymptotic behavior is not universal: if $-\hbar^2\partial_x^2 + V$ has a {\em zero energy resonance}, then  $\Sigma(E;\hbar)$ exhibits different asymptotic behavior as $E\to0$.
The resonant case is excluded here due to $V>0$.
\end{abstract}

\section{Introduction}

This paper revisits the much studied problem of determining
the reflection and transmission coefficients for semi-classical operators  of the form
\begin{equation}
 \label{eq:semiclass} P(x,\hbar D) := -\hbar^2 \frac{d^2}{dx^2} + V(x)
\end{equation}
where $V$ is real-valued and assumed to decay at infinity.  There are two atypical features of this work, at least relative
to the existing literature on this topic:
\begin{enumerate}
 \item[(i)]  we wish to understand
the zero energy limit, in fact uniformly\footnote{More precisely, the asymptotic analysis is carried out up to multiplicative errors of the form
$1+O(\hbar)$ where the $O(\hbar)$ needs to be uniform in small energies.
} in small $\hbar$
\item[(ii)] the smooth potential $V$ decays like an inverse square at both ends\footnote{The methods of this paper also
apply to the case where the potential exhibits  inverse square decay as $x\to\infty$ and some other decay as $x\to-\infty$; for that,
one of course needs to be able to carry out the scattering theory on $x<0$. If the decay is $|x|^{-\alpha}$ with $0<\alpha<2$, then \cite{Yaf} applies, whereas for $\alpha>2$ one can use classical scattering methods.}
\end{enumerate}
We remark that (i) and (ii) are closely related. Indeed, the $\la x\ra^{-2}$ decay is ``critical'' with respect to the
zero energy limit in the sense that $\la x\ra^{-2-\eps}$ is easier and behaves very differently.
In the semi-classical literature it is more customary to encounter the criticality of the Coulomb decay $\la x\ra^{-1}$;
the reason for this is that the Coulomb decay is critical for  {\em positive energies}. Note that the numerology around
these decay rates applies to all dimensions and not just to one dimension. The motivation for considering this particular
problem comes from several sources. First, smooth potentials which behave as an inverse square at one or both ends arise in several
contexts in physics and geometry, for example in general
relativity in connection with Schwarzschild and de-Sitter spaces, see~\cite{Chan}. 
Second, this paper is part of 
the program initiated in~\cite{SSS1} and~\cite{SSS2}. In fact, the analysis carried out here is an essential part in the
solution of the  ``large angular momentum'' problem from~\cite{SSS2}.

Let us briefly review some elementary features of scattering, cf.~\cite{DT} and~\cite{M}:
For simplicity, let $\hbar=1$ for now and write $H=P(x,D)$. Recall that the Jost solutions $f_{\pm}(x;\lambda)$ are required
to satisfy
\[ H f_{\pm}(\cdot,\lambda)=\lambda^2 f_{\pm}(\cdot,\lambda),\qquad
f_{\pm}(x,\lambda)\sim e^{\pm i\lambda x} \text{\ \ as\ \ }x\to\pm\infty
\]
Provided $V\in L^1$ and $\lambda\ne0$ they exist and are uniquely determined as solutions of the Volterra equation
\begin{equation}\label{eq:Volterra}
 f_{+}(x,\lambda) = e^{ix\lambda} + \int_x^\infty \frac{\sin(\lambda(y-x))}{\lambda} V(y) f_+(y,\lambda)\,dy
\end{equation}
and similarly for $f_-(\cdot,\lambda)$. The resolvent kernel of $H$ can now be expressed in the form
\[
 (H-(\lambda^2+i0))^{-1}(x,y) = \frac{f_+(x,\lambda) f_-(y,\lambda)}{W(\lambda)} \chi_{[x>y]} + \frac{f_+(y,\lambda) f_-(x,\lambda)}{W(\lambda)} \chi_{[x<y]}
\]
for all $\lambda>0$
where $W(\lambda)=W(f_+(\cdot,\lambda) ,f_-(\cdot,\lambda))$. The reflection and transmission coefficients are
defined by the relations
\begin{align*}
  t_+(\lambda) f_+(\cdot,\lambda) &= r_+(\lambda) f_-(\cdot,\lambda) + f_-(\cdot,-\lambda) \\
  t_-(\lambda) f_-(\cdot,\lambda) &= r_-(\lambda) f_+(\cdot,\lambda) + f_+(\cdot,-\lambda)
\end{align*}
and satisfy
\begin{align}\label{eq:rt_rel}
 t_- &= t_+,\quad 1 = |t_+|^2 + |r_+|^2 = |t_-|^2 + |r_-|^2, \qquad r_- = -\bar r_+{t}/\,{\bar t}
\end{align}
For fixed $\lambda>0$, consider the following bases of the space of solutions to the equation $Hf=\lambda^2 f$:
 \[ (f_+(\cdot,\lambda),f_-(\cdot,\lambda)),\qquad (f_+(\cdot,-\lambda),f_-(\cdot,-\lambda))\]
The former is referred to  as {\em outgoing} and the latter
 as {\em incoming}. In that case
the matrix $\Sigma(\lambda)$ which transforms the coefficients of a solution relative to these bases satisfies
\[
 \Sigma(\lambda) = \left [ \begin{matrix}  t(\lambda) &  r_-(\lambda) \\ r_+(\lambda) & t(\lambda) \end{matrix} \right ]
\]
It is called the {\em scattering matrix} and is unitary.
Of special interest to us is the behavior as $\lambda\to0+$.
Note that if \[ \int_{-\infty}^\infty \la x\ra |V(x)|\,dx<\infty \]
then $f_+(x,\lambda)\to f_+(x,0)$ as $\lambda\to0$ where the latter satisfies the limit equation of~\eqref{eq:Volterra}, viz.
\[
 f_{+}(x,0) = 1 + \int_x^\infty (y-x) V(y) f_+(y,0)\,dy
\]
It is known  that $\Sigma(\lambda)$ is continuous in $\lambda\ge0$ under this moment condition, see~\cite{K}.  To describe the possible values of $\Sigma(0)$, recall that
$H$ has a {\em zero energy resonance} iff $f_{\pm}(\cdot,0)$ are linearly dependent or, equivalently,
iff $W(0)=0$. Furthermore, since $t(\lambda)=-\frac{2i\lambda}{W(\lambda)}$ this is equivalent to $t(0)\ne0$.
In conclusion, if zero energy is not resonant, then
\[
 \Sigma(0)= \left [ \begin{matrix}  0 & -1 \\ -1 & 0 \end{matrix} \right ]
\]
whereas in the resonant case
\[
 \Sigma(0) =  \left [ \begin{matrix}  t & -r \\ r & t \end{matrix} \right ]
\]
for some real $r,t\in [-1,1]$, $t\ne0$.

\noindent  
If $\la x\ra V(x)\not\in L^1(\R)$, then the behavior of $\Sigma(\lambda)$ as $\lambda\to0$ is completely different. 
In this paper, we focus on the border line case of positive inverse square potentials for~\eqref{eq:semiclass} and $\hbar$ small
(for the remainder of he paper, we now let $\hbar$ be a small positive quantity). 
It is precisely this case which arises in the geometric problem considered in~\cite{SSS1}, \cite{SSS2}.
 Our main theorem is as follows. 
We denote the energy by $E=\lambda^2>0$, see above, and the scattering matrix of~\eqref{eq:semiclass} by
\[
 \Sigma(E;\hbar) = \left [ \begin{matrix}  t(E;\hbar) &  r_-(E;\hbar) \\ r_+(E;\hbar) & t(E;\hbar) \end{matrix} \right ]
= \left [ \begin{matrix} \Sigma_{11}(E;\hbar) & \Sigma_{12}(E;\hbar) \\ \Sigma_{21}(E;\hbar) & \Sigma_{22}(E;\hbar) \end{matrix} \right ]
\]
In view of \eqref{eq:rt_rel} it suffices to describe the first row of this matrix.  In this paper, $O(\cdot)$ terms  will
be differentiable functions and we will typically state bounds on their derivatives with regard to the relevant variables
depending on the context. 

\begin{thm}\label{thm:main}
Let $V\in C^\infty(\R)$ with $V>0$ and $V(x)=\mu_{\pm}^2 x^{-2} + O(x^{-3})$
as $x\to\pm\infty$ where $\mu_+\ne0$, $\mu_-\ne0$
and $\partial_x^k O(x^{-3}) = O(x^{-3-k})$ for all $k\ge0$.
Denote
\begin{equation}\label{eq:V0_def}
 V_0(x;\hbar):= V(x) + \frac{\hbar^2}{4}\la x\ra^{-2}
\end{equation}
and let  $E_0>0$ be such
that for all $0<E<E_0$ and $0<\hbar<1$, $V_0(x;\hbar)=E$ has a unique pair of solutions,
which we denote by $x_2(E;\hbar)<0<x_1(E;\hbar)$. Define
\begin{equation}\label{eq:STdef} \begin{split}
S(E;\hbar) &:= \int_{x_2(E;\hbar)}^{x_1(E;\hbar)} \sqrt{V_0(y;\hbar)-E}\, dy \\
 T_+(E;\hbar) &:= x_1(E;\hbar)\sqrt{E} - \int_{x_1(E;\hbar)}^\infty \big(\sqrt{E-V_0(y;\hbar)}-\sqrt{E}\big)\, dy \\
 T_-(E;\hbar) &:= -x_2(E;\hbar)\sqrt{E} -  \int_{-\infty}^{x_2(E;\hbar)} \big(\sqrt{E-V_0(y;\hbar)}-\sqrt{E}\big)\, dy
\end{split}\end{equation}
as well as $T(E;\hbar):= T_+(E;\hbar) + T_-(E;\hbar)$.
Then for all $0<\hbar<\hbar_0$ where  $\hbar_0=\hbar_0(V)>0$ is small and $0<E<E_0$
\begin{equation}\label{eq:sentries}
\begin{aligned}
 \Sigma_{11}(E;\hbar) &= e^{-\frac{1}{\hbar}(S(E;\hbar)+iT(E;\hbar))} (1+ \hbar\,\sigma_{11}(E;\hbar)) \\
 \Sigma_{12} (E;\hbar)  &= -i e^{-\frac{2i}{\hbar} T_+(E;\hbar)} (1+\hbar\, \sigma_{12}(E;\hbar))
\end{aligned}
\end{equation}
where the correction terms satisfy the bounds
\begin{equation}
\label{eq:errors}
 |\partial_E^k\, \sigma_{11}(E;\hbar)|+|\partial_E^k\, \sigma_{12}(E;\hbar)| \le C_k\, E^{-k}\quad\forall\; k\ge0,
\end{equation}
with a constant $C_k$ that only depends on $k$ and $V$. The same conclusion holds if instead of~\eqref{eq:V0_def} we
were to define $V_0$ as $V_0:=V+\hbar^2 V_1$ with $V_1\in C^\infty(\R)$, 
$V_1(x;\hbar)=\frac14\la x\ra^{-2}+O(x^{-3})$ as $x\to\pm\infty$ with $\partial_x^k O(x^{-3}) = O(x^{-3-k})$ for all $k\ge0$
and uniformly in~$0<\hbar\ll1$. 
\end{thm}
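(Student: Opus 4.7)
\bigskip

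\noindent\textbf{Proof proposal.}
My plan is to build four approximate solutions of $(-\hbar^2 \partial_x^2 + V)\psi = E\psi$ by a Liouville--Green/WKB ansatz {\em with respect to the modified potential $V_0$}, prove error bounds via Volterra iteration, bridge across the two turning points $x_2<x_1$ with an Airy parametrix, and finally compute the scattering matrix by linear algebra from the Wronskians of the resulting basis. The reason for carrying WKB relative to $V_0$ rather than $V$ is the standard Langer-type observation: conjugating by $V_0^{-1/4}|E-V_0|^{-1/4}$ and using $V_0=V+\tfrac{\hbar^2}{4}\la x\ra^{-2}$ causes the dangerous $\hbar^2/(4x^2)$ term to cancel against the $-\tfrac{1}{4}$ coefficient produced by the Schwarzian/quarter-derivative error, so that the remaining WKB error is $O(\hbar^2\la x\ra^{-3})$-integrable up to $\pm\infty$ uniformly down to $E=0$.

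First I would work in the oscillatory regions $(x_1,\infty)$ and $(-\infty,x_2)$. The ansatz $\psi_{\pm}^{\text{WKB}}(x) = (E-V_0)^{-1/4}\exp\big(\tfrac{i}{\hbar}\int_{x_1}^x \sqrt{E-V_0(y;\hbar)}\,dy\big)$ is renormalized so that after the change of variable to phase $\tau=\tfrac{1}{\hbar}\int\sqrt{E-V_0}$ it produces a Volterra integral equation for the relative error with kernel bounded by $\hbar$ times an integrable function, uniformly in $E$. The choice of $V_0$ is precisely what makes this integrability uniform as $E\to 0^+$, because in that limit the turning point $x_1(E;\hbar)\to\infty$ and the phase integral becomes $\sqrt{E}\,x - T_+(E;\hbar)/\sqrt{E}\cdot\sqrt{E}+\ldots$, matching the required asymptotics $e^{\pm i\lambda x}$ of the Jost solutions at $+\infty$ up to a computable shift that supplies the $T_+$ phase in \eqref{eq:sentries}. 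An analogous construction on $(-\infty,x_2)$ produces $T_-$.

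In the classically forbidden region $(x_2,x_1)$ I would run the parallel exponential WKB with $\psi_{\pm}^{\text{WKB}}(x) = (V_0-E)^{-1/4}\exp\big(\pm\tfrac{1}{\hbar}\int\sqrt{V_0-E}\,dy\big)$, again getting $(1+O(\hbar))$ relative errors by Volterra iteration, now with the ``small'' solution multiplying the factor $e^{-S(E;\hbar)/\hbar}$ and the ``large'' solution multiplying $e^{+S(E;\hbar)/\hbar}$. The crux is the matching across $x_1$ and $x_2$: near each turning point WKB breaks down on a scale $\hbar^{2/3}$ and must be replaced by an Airy parametrix. I would straighten the potential near $x_j$ by the standard diffeomorphism $\zeta=\zeta(x;E,\hbar)$ making $E-V_0 = -\zeta(\tfrac{d\zeta}{dx})^{-2}\hbar^{-2/3}\cdot\hbar^{2/3}$ (the usual Langer variable), so that the equation becomes a perturbation of $\partial_\zeta^2 u = \zeta u$, and then use $\Ai,\Bi$ as leading approximations with Volterra error control. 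The standard Airy connection formulae then express the coefficients of the oscillatory WKB solutions in terms of those of the exponential WKB solutions, carrying the $-i$ and the $1/2$ factors that eventually produce the $-i$ in $\Sigma_{12}$ and the phase alignment between $S$ and $T$ in $\Sigma_{11}$.

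With these three local representations and two connection matrices in hand, the last step is linear algebra: express $f_{\pm}(\cdot,\lambda)$ at $\pm\infty$ in the oscillatory WKB basis (producing the prefactors $e^{-iT_{\pm}/\hbar}$ after the phase renormalization), propagate through the forbidden region using the Airy connection at $x_1$, the $e^{\pm S/\hbar}$ growth/decay in between, and the Airy connection at $x_2$, and then read off $t=\Sigma_{11}$ and $r_+=\Sigma_{21}$ (hence $r_-=\Sigma_{12}$) from the basis-change formulas in the introduction. Differentiating all the WKB and Airy error bounds in $E$ is a routine (if tedious) induction: each $\partial_E$ costs a factor of $(E-V_0)^{-1}$ at the turning points which on a scale $\hbar^{2/3}$ contributes at most $E^{-1}$ after integrating, yielding \eqref{eq:errors}. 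The replacement of $\tfrac{\hbar^2}{4}\la x\ra^{-2}$ by a general $\hbar^2 V_1$ with the same leading behavior enters only through the Langer cancellation and so is immediate.

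The main obstacle I expect is the uniformity as $E\to 0^+$. The turning points $x_j(E;\hbar)$ escape to $\pm\infty$ at rate $E^{-1/2}$, so the Airy parametrix zone, the WKB zone, and the Jost-asymptotic zone all slide to infinity and partly overlap in a delicate way; keeping the Volterra error $O(\hbar)$ and its $E$-derivatives $O(\hbar E^{-k})$ uniformly requires that every estimate be expressed in terms of $V_0$-intrinsic quantities (e.g.\ $\sqrt{E-V_0}$ and $\int\sqrt{E-V_0}\,dy$) rather than $E$ and $V$ separately, which is precisely what the modified-potential normalization makes possible. Everything else is essentially bookkeeping around that point.
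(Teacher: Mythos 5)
Your plan is correct in spirit but takes a genuinely different route from the paper. You organize the half-line into separate WKB zones at $\pm\infty$, a forbidden-region WKB zone $(x_2,x_1)$, and two local Airy patches at the turning points, glued by classical connection formulae, then read off $\Sigma$ by propagating a solution through the chain. The paper instead introduces a \emph{single global} Langer variable $\zeta(x,E;\hbar)$ on each of $x\ge 0$ and $x\le 0$ (via $\zeta'=\sqrt{q}$, $q=-(V_0-E)/\zeta$, not just locally near $x_1$), converting the whole half-line problem into one perturbed Airy equation $-\hbar^2\ddot w=(\zeta+\hbar^2\tilde V)w$ with $\zeta\in[\zeta(0,E),\infty)$. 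It then builds two bases — $\Ai,\Bi$ with correction factors $1+\hbar a_j$ on $\zeta\le0$, and $\Ai\pm i\Bi$ with factors $1+\hbar b_j$ on $\zeta\ge0$ — and reads off the scattering data by evaluating $W(f_+,f_-)$ and $W(f_+,\overline{f_-})$ at the fixed anchor point $x=0$ (exploiting $V(x)=V(-x)$ first, then relaxing). This sidesteps the overlapping-validity bookkeeping that you flag as the main obstacle: as $E\to0^+$ all your zone boundaries and patches slide to infinity at rate $E^{-1/2}$, and one must uniformly verify that adjacent parametrices match with $O(\hbar)$ errors in common domains; the paper's appendix explicitly notes that using distinct changes of variables in distinct regions ``leads to a number of complications compared to the global action-based coordinates.'' What your approach buys is proximity to the classical Fr\"oman/Olver Chapter 13 framework. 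The two ingredients you correctly isolate — performing WKB relative to $V_0$ so the $\hbar^2/(4x^2)$ quarter-derivative term cancels (the paper's Section~2 motivation), and the $E^{-1}$ cost per $\partial_E$ — are exactly what the paper also relies on, so your plan should close with careful uniformity bookkeeping, though you should fix the garbled Langer-variable relation (the correct one is $V_0-E=-\zeta(\zeta')^2$, i.e., $q=(\zeta')^2=-(V_0-E)/\zeta$) and replace the appeal to ``standard connection formulae'' by an explicit Wronskian computation showing the connection coefficients are $1+O(\hbar)$ and $-i+O(\hbar)$ with $\partial_E^k O(\hbar)=O(E^{-k}\hbar)$, as these produce the $-i$ prefactor and the error structure of \eqref{eq:sentries}--\eqref{eq:errors}.
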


The addition of $\frac{\hbar^2}{4}\la x\ra^{-2}$ to $V(x)$ is crucial and similar to the ``Langer modification'', see for example~\cite{FT}; indeed, if we were to use $V$ instead of $V_0$ in~\eqref{eq:STdef},
then the bounds \eqref{eq:errors} would fail due to a factor of $\log E$ as $E\to0$. 
This is in contrast to potentials decaying like $|x|^{-\alpha}$ with $0<\alpha<2$ for which the modification is not needed, i.e., the
usual WKB ansatz works, see~\cite{Yaf}. 
On the other
hand, note that as long as $E_0>E>\eps>0$ the turning points $x_j(E;\hbar)$ will remain bounded and the distinction between
$V_0$ and $V$ is therefore moot. Indeed, the effect of passing from $V$ to $V_0$ and vice versa is merely a harmless factor of
the form $1+O(\hbar)$ where the $O(\cdot)$ term of course depends on~$\eps$. In the range $E_0>E>\eps>0$ Theorem~\ref{thm:main}
is well-known and classical. See for example Chapter~13 of~\cite{Olver} as well as Ramond's work~\cite{Ramond} for a more recent reference (Ramond, however, is more concerned with the scattering problem for energies close to the maximum of a barrier and he also assumes that
the potential is dilation analytic).

We remark that the infinite differentiability assumption on $V$ can be relaxed to some finite amount of smoothness (in which
case we can  only ask for correspondingly many derivatives with respect to~$E$), but we do not elaborate on this issue here.
A more substantial problem is that of relaxing the positivity assumption.
We conjecture that  $V>0$ can be replaced by the strictly weaker assumption that {\em zero energy is not a resonance} of $P(x,\hbar D)$. Recall
the definition of a zero energy resonance in this context, cf.~\cite{BGW}, \cite{Yaf}, and Section~3 of~\cite{SSS2}: it means that the two subordinate zero-energy solutions at $\pm\infty$ are linearly dependent (a ``subordinate solution'' at either end refers
to the nonzero solution of $P(x,\hbar D)f=0$ with the slowest possible growth at that end; it is unique up to a nonzero scalar factor).

Note, however, that some condition is needed in Theorem~\ref{thm:main}; indeed, in \cite{SSS2} 
it was shown that for operators of the form considered
in Theorem~\ref{thm:main}  with $\mu_+^2=\mu_-^2=\nu^2-\frac14$, $\nu>\frac12$, and $\hbar=1$ 
\begin{equation}\label{eq:Wlam_asymp}
 W(E;\hbar) \sim  E^{\frac12-\nu} ( W_0 + O(E^\eps)) \text{\ \ as\ \ }E\to0+
\end{equation}
 for some $W_0\ne0$ and $\eps>0$ 
{\em provided there is no zero energy resonance}. In the resonant case, it was shown in~\cite{SSS2} that $W_0=0$. 
The following relation  between $\Sigma_{11}$ in~\eqref{eq:sentries} and the Wronskian $W(E;\hbar)$
\begin{equation}\label{eq:Sigma11W} W(E;\hbar)=\frac{-2i\sqrt{E}}{\hbar \Sigma_{11}(E;\hbar)}\end{equation} 
allows one to deduce~\eqref{eq:Wlam_asymp} with $W_0\ne0$ from Theorem~\ref{thm:main} (note that for inverse square potentials $S(E;\hbar)$
behaves like $|\log E|$ so that the apparent exponential behavior in~\eqref{eq:sentries} turns into a power-law in~$E$). 
This deduction also proves that Theorem~\ref{thm:main} necessarily fails in the presence of a zero energy resonance. 
Another aspect of~\eqref{eq:Wlam_asymp} concerns the case of {\em large}~$\hbar$, say $\hbar=1$. Indeed, it shows that Theorem~\ref{thm:main} gives
the correct behavior of the scattering matrix even in that case, but then the energy takes over as the small parameter. 

This paper is organized as follows. Section~\ref{sec:zero} constructs a fundamental system of zero energy solutions
to~\eqref{eq:semiclass} via the usual WKB ansatz but for $V_0$ rather than for~$V$. Since we require uniform bounds in 
Theorem~\ref{thm:main} as $E\to0$, the construction of Jost solutions for positive energies which is carried out in 
Section~\ref{sec:Langer} needs to yield the zero energy solutions in the limit $E\to0$. We choose to
reverse this process and show that $V_0$ is precisely the right potential to use in the WKB method at zero energy. 
The logic is simple: the WKB ansatz \[\psi_{0,\pm}(x):=V^{-\frac14}(x) \exp\big(\pm\hbar^{-1}\int_{x_0}^x \sqrt{V(y)}\,dy\big)\]
satisfies an equation of the form 
\[
 (-\hbar^2\partial_x^2 + V)\psi_{0,\pm} = \hbar^2 (-x^{-2}/4 + O(x^{-3})) \psi_{0,\pm}
\]
where the $x^{-2}/4 $ term on the right-hand side is universal for all potentials that have an inverse square
decay as $x\to\infty$ as specified in Theorem~\ref{thm:main}. 
Since this term  has the same decay as~$V$ we need to bring it to the left-hand side leading to our choice of~$V_0$. 

The main technical work of this paper is carried out in Section~\ref{sec:Langer}. It is here that the (semi-classical)
Jost solutions are constructed for all energies in the range $0<E<E_0$. We use Langer's method which is based on the
Liouville-Green transform, see Chapters~6 and~11 in~\cite{Olver}: switching to the new independent variable
\[
 \zeta=\zeta(x,E;\hbar):= \sign (x-x_1(E;\hbar)) \Big|\frac32 \int_{x_1(E;\hbar)}^x
\sqrt{|V_0(x;\hbar)-E|}\, d\eta\Big|^{\frac23},\qquad x\ge0
\]
and to the new dependent variable $w(\zeta)=\sqrt{\zeta'}f$ reduces $P(x,\hbar D)f=Ef$, see~\eqref{eq:semiclass}, to an Airy equation perturbed by a potential of size $\hbar^2$.  It is here that $V>0$ becomes relevant: it ensures that for all small $E>0$ there is a unique
turning point $x_1(E)>0$ and that $V_0(x;\hbar)>E$ for all $0<x<x_1(E;\hbar)$. Hence we can cover $x\ge0$ by the intervals $\zeta(0,E;\hbar)<\zeta\le 0$ and $\zeta\ge0$. In each of these intervals we solve the perturbed Airy equations up to 
multiplicative errors of the form $1+O(\hbar)$ where the $O(\cdot)$ term is uniform in~$E$. It is in the range $\zeta(0,E;\hbar)<\zeta\le 0$
that the choice of $V_0$ (rather than $V$) becomes decisive; this of course is to be expected as this range turns into 
the whole interval $x\ge0$ as $E\to0$ and WKB applied to~$V$ instead of~$V_0$ fails at $E=0$, see Section~\ref{sec:zero}. 
Theorem~\ref{thm:main} is proved in Section~\ref{sec:proof} by evaluating the Wronskians
\[
 W(f_+(\cdot,E), f_-(\cdot,E)),\qquad W(f_+(\cdot,E), \overline{f_-(\cdot,E)})
\]
at $x=0$.  Section~\ref{sec:largeE} discusses the range of validity of Theorem~\ref{thm:main} as the energy
increases towards a unique non-degenerate maximum of a barrier potential. Finally, the appendix describes a certain ``normal-form'' reduction of \eqref{eq:semiclass} to a Bessel equation
on a region containing the turning point. Even though we do not base our asymptotic analysis on this reduction (but rather
the Airy equation), we
still believe that this is of independent interest. 

Needless to say, there is a vast literature related to the semi-classical analysis of the Schr\"odinger equation
and it is impossible to do any justice to it here. Somewhat curiously, however, there does not seem to be
any  literature on potentials which are globally smooth on the line and which exhibit inverse square decay. 
On the other hand, potentials which are {\em exactly} inverse square are of course ubiquitous, especially in the physics
literature. For a recent paper in this direction involving WKB see~\cite{FT} and for a time-dependent analysis
see the recent papers~\cite{PST1}, \cite{PST2},
as well as~\cite{BPST1}, \cite{BPST2} and the references cited there. Potentials which decay of the form $|x|^{-\alpha}$, $0<\alpha<2$, 
have been studied with similar objectives as here, see~\cite{Kvi}, \cite{Nak}
and~\cite{Yaf}. For other work on low energies see \cite{BGS}, \cite{BGW}, \cite{DS}, and~\cite{Yaf2}, as well as~\cite{GH}.

\section{Zero energy solutions}\label{sec:zero}

In order to motivate the choice of $V_0$ in Theorem~\ref{thm:main} we will now obtain
a fundamental system for the equation

\begin{equation}                                                                    \label{eq1}
-\hbar^{2} f''(x) + V(x)f(x) = 0
\end{equation}
on the half axis $x>x_0$. Here we assume that $V(x)=\mu^2 x^{-2} +O(x^{-3})$ with $\mu>0$ as $x\to\infty$ and $x_0$ is chosen so
large that $V(x)>0$ for $x>x_0$.  As before, we require $\partial_x^k O(x^{-3}) = O(x^{-3-k})$ for all $k\ge0$.
Ignoring the $O(\cdot)$ term, we have on the one hand
\[
 \big(-\hbar^{2}\partial_x^2 + \mu^2 x^{-2}\big) x^{\frac12 \pm \alpha}= 0,\quad \alpha^2=\frac14+\mu^2\hbar^{-2}
\]
On the other hand, with $Q(x;\hbar):=\mu^2 x^{-2} +\hbar^2 x^{-2}/4$,
\[
 Q^{-\frac14}(x;\hbar) e^{\pm\frac{1}{\hbar}\int_{x_0}^x \sqrt{Q(y;\hbar)}\,dy} = c x^{\frac12 \pm\alpha}
\]
with some $c\ne0$.
This motivates the following result.

\begin{prop}\label{zeroenergi thm}
On $x>x_0$ a fundamental system of solutions for \eqref{eq1} is given by
\begin{equation}                                                                        \label{eq2}
\begin{split}
\psi_{j}(x;\hbar)=\,& \tilde{\psi}_{j}(x;\hbar) (1+h a_{j}(x;\hbar)),\quad j=1,2
\end{split}
\end{equation}
with
\begin{equation}                                                                  \label{jost}
\begin{split}
\tilde{\psi}_{1}(x;\hbar)=& \,V_0(x;\hbar)^{\frac{-1}{4}}e^{\frac{1}{\hbar}S(x;\hbar)}\\
\tilde{\psi}_{2}(x;\hbar)=& \,V_0(x;\hbar)^{\frac{-1}{4}}e^{\frac{-1}{\hbar}S(x;\hbar)},
\end{split}
\end{equation}
where $V_0(x;\hbar)=V(x)+ \frac{\hbar^{2}}{4}\langle x\rangle ^{-2}$, $S(x;\hbar)=\int_{x_{0}}^{x} \sqrt{ V_0(t;\hbar)}\,dt$
and
\begin{equation}                                                                                \label{eq3}
\sup_{0<\hbar<1}\vert \partial^{\ell}_{x}a_{j}(x;\hbar)\vert \le C_{\ell,\,\mu}\, x^{-\ell}
\end{equation}
for $x>x_{0}$, $j=1, 2$ and $\ell=0,1$. Their Wronskian satisfies
\begin{equation}
W(\psi_1, \psi_2)= \frac{-2}{\hbar}(1+O(\hbar)).
\end{equation}
as $\hbar\to0$.
\end{prop}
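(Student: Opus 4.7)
\medskip
\noindent\textbf{Proof sketch.} The plan is the standard WKB error-correction scheme, but the point of the proposition is the observation that the specific choice of Langer modification $V_0=V+\frac{\hbar^2}{4}\la x\ra^{-2}$ makes the WKB residual integrable against the relevant Green's function uniformly down to $\hbar=0$. Concretely, I would begin by substituting $\tilde\psi_j$ from \eqref{jost} into $-\hbar^2\partial_x^2+V$. Using $S'=\sqrt{V_0}$ and the identity
\[
\frac{\tilde\psi_1''}{\tilde\psi_1}=\frac{V_0}{\hbar^2}+\frac{5(V_0')^2}{16V_0^2}-\frac{V_0''}{4V_0},
\]
(and the same with $S\to -S$ for $\tilde\psi_2$), one finds $(-\hbar^2\partial_x^2+V)\tilde\psi_j=-\hbar^2\tilde R(x;\hbar)\tilde\psi_j$ with
\[
\tilde R(x;\hbar)=\frac{1}{4}\la x\ra^{-2}+\frac{5(V_0')^2}{16V_0^2}-\frac{V_0''}{4V_0}.
\]
The key calculation is the cancellation at the leading inverse-square order: plugging in $V_0\sim \mu^2 x^{-2}$ yields $\frac{5(V_0')^2}{16V_0^2}-\frac{V_0''}{4V_0}\sim-\frac{1}{4}x^{-2}$, which exactly kills the $\frac{1}{4}\la x\ra^{-2}$ contributed by the Langer term. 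The $O(x^{-3})$ corrections to $V$ and the difference $\la x\ra^{-2}-x^{-2}$ then produce $|\partial_x^\ell \tilde R(x;\hbar)|\le C_\ell\, x^{-3-\ell}$ uniformly in $0<\hbar<1$, which is what the later integration requires.

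Next, I would seek $\psi_j=\tilde\psi_j(1+\hbar a_j)$. Substituting and using $(-\hbar^2\partial_x^2+V)\tilde\psi_j=-\hbar^2\tilde R\,\tilde\psi_j$ reduces the equation for $u_j:=1+\hbar a_j$ to
\[
(\tilde\psi_j^{2}u_j')'=-\tilde R(x;\hbar)\,\tilde\psi_j^{2}\,u_j.
\]
I convert this into a Volterra equation with kernel $K_j(x,y)=\tilde R(y)\tilde\psi_j(y)^2\int_y^x \tilde\psi_j(s)^{-2}ds$, choosing the lower endpoint for $j=1$ (the growing solution) to be $x_0$, and for $j=2$ (the decaying solution) to be $+\infty$, so that both integrations are taken in the direction that keeps the kernel bounded. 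Since $\tilde\psi_1(x)^2\sim c\,x^{1+2\mu/\hbar}$ and $\tilde\psi_2(x)^2\sim c\,x^{1-2\mu/\hbar}$ for large~$x$, a direct estimate shows in both cases
\[
\int |K_j(x,y)|\,dy \;\lesssim\; \hbar \int_{x_0}^{\infty} y^{-2}\,dy \;\lesssim\; \hbar,
\]
where the gain of a factor $\hbar$ comes from the integration against $\tilde\psi_j^{\mp 2}$ near the relevant endpoint (the factor $\hbar/(2\mu)$ produced by the exponent $2\mu/\hbar$). Hence the Neumann series converges in $L^\infty(x_0,\infty)$ and defines $a_j$ with $\sup_x|a_j(x;\hbar)|\le C$ uniformly in~$\hbar$.

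To obtain the derivative estimate $|\partial_x a_j(x;\hbar)|\le C\,x^{-1}$, I would differentiate the Volterra equation once, use the representation $u_j'(x)=\mp\tilde\psi_j(x)^{-2}\int \tilde\psi_j^{2}\tilde R u_j$, and estimate as before; the factor of $x$ gained from $\tilde\psi_j^{-2}\cdot\int^{\cdot}\tilde\psi_j^{2}$ is now absent (we have only the inner integral), so one ends up with $O(\hbar\, x^{-1})$ rather than $O(\hbar)$, which yields~\eqref{eq3} for $\ell=1$. Finally, the Wronskian of $\tilde\psi_1,\tilde\psi_2$ is computed directly: since $\tilde\psi_1\tilde\psi_2=V_0^{-1/2}$ and $(\log\tilde\psi_1)'-(\log\tilde\psi_2)'=2\sqrt{V_0}/\hbar$, one obtains $W(\tilde\psi_1,\tilde\psi_2)=-2/\hbar$. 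Using the product decomposition
\[
W(\psi_1,\psi_2)=u_1u_2\,W(\tilde\psi_1,\tilde\psi_2)+\tilde\psi_1\tilde\psi_2\,W(u_1,u_2)
\]
together with $u_j=1+O(\hbar)$ and $u_j'=O(\hbar/x)$ and $\tilde\psi_1\tilde\psi_2=V_0^{-1/2}=O(x)$ gives $W(\psi_1,\psi_2)=-\frac{2}{\hbar}(1+O(\hbar))$.

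The main obstacle I anticipate is simply the careful bookkeeping of the weighted estimates on the Volterra kernel uniformly in $\hbar$; the $\hbar$-dependent exponents $2\mu/\hbar$ in $\tilde\psi_j^2$ are large, and one must check that the gain of $\hbar$ from integration by parts against these exponentials exactly compensates the factor $x$ produced by the ratio $\tilde\psi_1(y)^2/\tilde\psi_1(s)^2$, so that the resulting bound $O(\hbar y^{-2})$ is integrable against $x_0<y<\infty$.
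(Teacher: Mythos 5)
Your argument coincides with the paper's proof essentially step for step: your residual $\tilde R$ is the paper's $V_2$, your Volterra kernel and Neumann-series bound are the paper's iteration scheme rephrased in terms of $u_j=1+\hbar a_j$, and the Wronskian computation agrees. The one step you state too quickly is the derivative bound: the direct estimate on $u_j'(x)=\mp\tilde\psi_j(x)^{-2}\int\tilde\psi_j^{2}\tilde R\,u_j$ (bounding the exponential by $1$ and using $V_0(x)^{1/2}/V_0(y)^{1/2}\lesssim y/x$) yields only $O(x^{-1})$, not $O(\hbar x^{-1})$; to gain the extra $\hbar$ one must integrate by parts once against $d\big[e^{\pm 2S(y)/\hbar}\big]$ in the remaining single integral, exactly as the paper does — you allude to this mechanism in your final paragraph, but it should be invoked explicitly at the derivative step rather than deferred to ``estimate as before.''
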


\begin{proof}
Let us consider the case of $\psi_{1}$. Hence, we need to find $a_1$ so that $\psi_{1}$ is a
solution to the differential equation
\begin{equation}                                                                         \label{eq4}
-\hbar^{2}u''(x)+V(x) u(x)=0.
\end{equation}
Substituting the first expression of \eqref{eq2} into the
differential equation \eqref{eq4} yields
\begin{equation}                                                                         \label{eq7}
\begin{split}
-\hbar^{2}[{\tilde{\psi}}''_{1}(1+\hbar a_{1}) +2\hbar{\tilde{\psi}}'_{1}a'_{1}
+\hbar {\tilde{\psi}}_{1}
a''_{1}  ] +V\tilde{\psi}_{1}(1+\hbar a_{1})=0
\end{split}
\end{equation}
Setting $V_2:= \frac{1}{4}\langle x\rangle ^{-2}-\frac{1}{4}\frac{
V_0''}{V_0}+\frac{5}{16}\frac{V_0'^2}{V_0^2}$ and observing
that
$-\hbar^{2}{\tilde{\psi}}''_{1}+V\tilde{\psi}_{1}=-\hbar^{2}V_{2}\tilde{\psi}_{1}$,
we deduce after dividing the equation by $\tilde{\psi}_{1}$
\begin{equation}                                                                          \label{eq7'}
-(1+ \hbar a_{1})V_2 = \hbar (a''_{1} +
2\frac{\tilde{\psi}'_{1}}{\tilde{\psi}_{1}}a'_{1}),
\end{equation}
We now note the following essential feature of $V_2$ (which was the reason for defining $V_0$ as above):
\[
|V_2(x)|\le C\, x^{-3},\qquad |\partial_x^k V_2(x)|\le C_k\, x^{-3-k}\qquad\forall\;k\ge0
\]
To solve \eqref{eq7'} we multiply  both sides by
$\tilde{\psi}^{2}_{1}$ and obtain
\begin{equation}\label{eq:ODE}
(a'_{1}\tilde{\psi}^{2}_{1})'=
\frac{-1}{\hbar}V_{2}\tilde{\psi}^{2}_{1}-a_{1}V_{2}\tilde{\psi}^{2}_{1}.
\end{equation}
Integration and using the definition of the
$\tilde{\psi}_{1}$ yield
\begin{equation}                                                                    \label{a1derivative}
\begin{split}
a'_{1}(x) &=
\frac{1}{\hbar}\int_{x}^{x_{0}}V_{2}(y)\tilde{\psi}^{-2}_{1}(x)\tilde{\psi}^{2}_{1}(y)\,
dy +
\int_{x}^{x_{0}}a_{1}(y)V_{2}(y)\tilde{\psi}^{-2}_{1}(x)\tilde{\psi}^{2}_{1}(y)\,
dy\\
& =
 \frac{1}{\hbar}\int_{x}^{x_{0}} V_0(x)^{\frac{1}{2}} V_0(y)^{\frac{-1}{2}}
 e^{\frac{2}{\hbar}(S(y)-S(x))}V_{2}(y) dy \\&\quad +\int_{x}^{x_{0}}
 V_0(x)^{\frac{1}{2}} V_0(y)^{\frac{-1}{2}}e^{\frac{2}{\hbar}(S(y)-S(x))}
 V_{2}(y)a_{1}(y)\, dy.
\end{split}
\end{equation}
Strictly speaking, $a_1=a_1(x,\hbar)$ but we suppress the $\hbar$ from the notation here.
After integration in \eqref{a1derivative} we obtain
\begin{equation}
\begin{split}
a_{1}(x)=
 &\frac{-1}{\hbar}\int_{x}^{x_{0}}\int_{x'}^{x_{0}} V_0(x')^{\frac{1}{2}}
 V_0(y)^{\frac{-1}{2}}e^{\frac{2}{\hbar}(S(y)-S(x'))}V_{2}(y)\, dy\, dx' \\- &
\int_{x}^{x_{0}}\int_{x'}^{x_{0}}V_0(x')^{\frac{1}{2}}
V_0(y)^{\frac{-1}{2}}e^{\frac{2}{\hbar}(S(y)-S(x'))}V_{2}(y)a_{1}(y)\, dy\, dx' \\
= & \frac{-1}{\hbar}\int_{x}^{x_{0}}\int_{x}^{y}
V_0(x')^{\frac{1}{2}}
V_0(y)^{\frac{-1}{2}}e^{\frac{2}{\hbar}(S(y)-S(x'))}V_{2}(y)\, dx'\, dy \\- &
\int_{x}^{x_{0}}\int_{x}^{y}V_0(x')^{\frac{1}{2}}
V_0(y)^{\frac{-1}{2}}e^{\frac{2}{\hbar}(S(y)-S(x'))}V_{2}(y)a_{1}(y)\,
dx'\, dy.
\end{split}
\end{equation}
Furthermore,
\begin{equation}
\int_{x}^{y} V_0(x')^{\frac{1}{2}}
e^{\frac{-2}{\hbar}S(x')}
dx'=-\frac{\hbar}{2}[e^{\frac{-2}{\hbar}S(y)}-e^{\frac{-2}{h}S(x)}].
\end{equation}
and
\begin{equation}
V_0(y)^{\frac{-1}{2}}
e^{\frac{2}{\hbar}S(y)}\int_{x}^{y}
V_0(x')^{\frac{1}{2}} e^{\frac{-2}{\hbar}S(x')} dx'
= \frac{-\hbar}{2}
V_0(y)^{\frac{-1}{2}}[1-e^{\frac{2}{\hbar}(S(y)-S(x))}].
\end{equation}
From this it follows that
\begin{equation}                                                                  \label{a1}
a_{1} (x) =
\frac{1}{2}\int_{x_{0}}^{x}V_0(y)^{\frac{-1}{2}}[e^{\frac{2}{\hbar}(S(y)-S(x))}-1]V_{2}(y)
[1+\hbar a_{1}(y)]\,dy.
\end{equation}
This is a standard Volterra equation.
To solve it, we first introduce a new
function $\rho(x)$ given by
\begin{equation}
\rho(x)= \int_{x_{0}}^{x}
\vert V_0(y)^{\frac{-1}{2}}V_{2}(y)\vert dy.
\end{equation}
In view of the decay of $V_2$ we see that the integrand here decays like $y^{-2}$ so that $\rho\in L^\infty(x_0,\infty)$.
Then we define a sequence $a^{s}_{1}(x)$, $s=0,1,\,\dots ,\,$ with
$a^{0}_{1}(x)=0$ and
\begin{equation}                                                                                \label{eq12}
a^{s}_1 (x) = \frac{1}{2}\int_{x_{0}}^{x}
V_0(y)^{\frac{-1}{2}}[e^{\frac{2}{\hbar}(S(y)-S(x))}-1]V_{2}(y)[1+\hbar a^{s-1}_{1}(y)]\,
dy
\end{equation}
We claim that
\begin{equation}                                                                                   \label{eq13}
\vert a^{s}_1 (x)-a^{s-1}_1 (x)\vert \leq
\frac{\rho^{s}(x) h^{s-1}}{s!}.
\end{equation}
To prove this we proceed by induction and observe that
\[S(y)-S(x)=-\int_{y}^{x} V_0^{\frac{1}{2}}(t,\hbar)\, dt\] and hence for $0<x_{0} <y<x,$
$\vert e^{\frac{2}{\hbar}(S(y)-S(x))}-1 \vert\leq 2$. Therefore, \eqref{eq13}
is valid for $s=1$. Furthermore, if we assume the validity of
\eqref{eq13} for $s=k$ then since
\begin{equation}                                                                                  \label{eq51}
a^{k+1}_1 (x)-a^{k}_1 (x)= \frac{\hbar}{2}\int_{x}^{x_{0}}\\
V_0(y)^{\frac{-1}{2}}[e^{\frac{2}{\hbar}(S(y)-S(x))}-1]V_{2}(y)
(a^{k}_{1}(y)-a^{k-1}_{1}(y))dy,
\end{equation}
we have
\begin{equation}      \begin{split}                                                                            \label{eq61}
\vert a^{k+1}_1 (x)-a^{k}_1 (x)\vert &\leq \frac{\hbar^{k}}{k!}\int_{x}^{x_{0}}
\vert V_0(y)^{\frac{-1}{2}}V_{2}(y)\vert \rho^{k}(y)
dy \\&=-\frac{\hbar^{k}}{k!}\int_{x}^{x_{0}}
\rho'(y) \rho^{k}(y) dy =
\frac{\rho^{k+1}(x)\hbar^{k}}{(k+1)!}.
\end{split}
\end{equation}
We would like to have an estimate for the function $\rho(x)$, therefore it suffices that we obtain an
estimate for $\int_{x_{0}}^{x}\vert
V_0(y)\vert^{\frac{-1}{2}}\vert V_{2}(y)\vert\, dy$.
As already noted above, for $x>x_0$
\begin{equation*}
\rho(x)\le  \int_{x_{0}}^\infty \vert
V_0(y)\vert^{\frac{-1}{2}}\vert V_{2}(y)\vert\, dy \le C(\mu) <\infty
\end{equation*}
Hence, the solution to the integral equation \eqref{a1} is given by
\begin{equation}
a_{1}(x)=\sum_{s=1}^{\infty}(a^{s}_1 (x)-a^{s-1}_1 (x))
\end{equation}
and satisfies $\sup_{x>x_0}\vert a_{1}(x)\vert \leq C(\mu)<\infty$ uniformly in $0<\hbar<1$.
To derive the estimate for $a'_{1}(x)$, we observe that \[\frac{1}{\hbar} e^{\frac{2}{h} (S(y)-S(x))}=\frac{1}{2}\, V_0^{\frac{-1}{2}}(y)\, \partial_{y} e^{\frac{2}{\hbar} (S(y)-S(x))}.\] Therefore, using this observation and integrating by parts in \eqref{a1derivative} yields
\begin{align*}
 a_1'(x) &= \frac{1}{2}\int_{x_{0}}^{x} \Big(\frac{V_0(x)}{V_0(y)}\Big)^{\frac{1}{2}} \partial_y [e^{\frac{2}{\hbar}(S(y)-S(x))}-1]\,V_0(y)^{\frac{-1}{2}} V_{2}(y)\,dy \\
&\quad + \int_{x_{0}}^{x}[e^{\frac{2}{\hbar}(S(y)-S(x))}-1] \Big(\frac{V_0(x)}{V_0(y)}\Big)^{\frac{1}{2}} V_{2}(y) a_{1}(y)\,dy \\
& =-\frac{1}{2}\int_{x_{0}}^{x} V_0(x)^{\frac{1}{2}}  [e^{\frac{2}{\hbar}(S(y)-S(x))}-1]\, \partial_y [V_0(y)^{-1} V_{2}(y)]\,dy\\
&\quad - \frac{1}{2}[e^{\frac{2}{\hbar}(S(x_0)-S(x))}-1] \Big(\frac{V_0(x)}{V_0(x_0)}\Big)^{\frac{1}{2}} V_{2}(x_0) \\ & \quad + \int_{x_{0}}^{x}[e^{\frac{2}{\hbar}(S(y)-S(x))}-1] \Big(\frac{V_0(x)}{V_0(y)}\Big)^{\frac{1}{2}} V_{2}(y) a_{1}(y)\,dy.
\end{align*}
At this point we note that for $x>x_0$ ($x_0$ large enough), we have $\vert \partial_{x} ^{\ell} V_0(x, \hbar)\vert \leq c_{\ell,\,\mu} x^{-2-\ell}$, uniformly in $\hbar$.
Hence
\begin{equation}
\vert \partial_y [V_0(y)^{-1} V_{2}(y)]\vert \lesssim y^{-2},
\end{equation}
and using the boundedness of $a_{1}$, together with  that of $\rho(x)$, obviously implies that for $x>x_0$
\[
|a_1'(x)| \le C_{\mu} x^{-1},
\]
uniformly in $0<\hbar<1$ as desired.  For the case of $a_2$ one
proceeds in essentially the same way using, however,  the forward Green function rather than the
backward one. This yields
 \[
  a_{2} (x) =
\frac{1}{2}\int_{x}^\infty V_0(y)^{\frac{-1}{2}}[e^{\frac{-2}{\hbar}(S(y)-S(x))}-1]V_{2}(y)
[1+\hbar a_{2}(y)]\,dy.
 \]
The same arguments as before now show that $a_2$ satisfies~\eqref{eq3}.
The Wronskian $W(\psi_1, \psi_2)$ is obtain by evaluating at $x=\infty$.
\end{proof}

The same analysis of course yields zero energy solutions with the correct
asymptotic behavior as $x\to-\infty$. Note that the solution $\psi_2$ is the sub-ordinate one, i.e., it
is the unique (up to a nonzero scalar multiple) solution with the slowest possible growth. Hence, a zero energy
resonance in the context of Theorem~\ref{thm:main} would mean the existence of a nonzero solution to $P(x,\hbar D)f=0$
with $f(x)\sim c_{\pm} |x|^{\frac12-\alpha_{\pm}}$ as $x\to\pm\infty$ and with
\[
 \alpha_{\pm} = \sqrt{\frac14 + \mu_{\pm}^2\hbar^{-2}}
\]
It is easy to see that such a solution cannot exist if $V>0$. Indeed, let $\chi$ be a standard cut-off function with $\chi(0)=1$
and set $\chi_R(x):=\chi(x/R)$. If a globally subordinate solution $f(x)$  did exist, then
\[
 0= \limsup_{R\to\infty}\la P(\cdot,\hbar D) f,\chi_R f\ra  = \int [(f')^2(x) + V(x)f^2(x) ]\, dx
\]
implies that $f=0$, which is a contradiction.

\section{The Liouville-Green transform for small energies}
\label{sec:Langer}

In this section, we consider the equation
\begin{equation}                                                                    \label{nonzeroenergy.eq}
-\hbar^{2} f''_{\pm}(x) + V(x)f_{\pm}(x)= Ef_{\pm}(x)
\end{equation}
where $V$ is as in Theorem~\ref{thm:main}.  As explained in the introduction, we will
 use the Liouville-Green transform to reduce~\eqref{nonzeroenergy.eq} to a perturbed Airy equation.
We begin with a statement of the formal aspects (i.e., not involving estimates) of this transform, cf.\ Chapter~6 in~\cite{Olver}
and Langer's papers~\cite{Lang1}--\cite{Lang3}.
Henceforth, $V, V_0$ are as in Theorem~\ref{thm:main}.  Throughout this section $x\ge0$.

\begin{lem}\label{lem:langer}
There exists $E_0=E_0(V)>0$ so that for all $0<E<E_0$ one has the following properties:
the equation $V_0(x;\hbar)-E=0$ has a unique (simple) solution on $x>0$
which we denote by $x_1=x_1(E;\hbar)$. With $Q_0:=V_{0}-E$
\begin{equation}
  \label{eq:zeta}
   \zeta=\zeta(x,E;\hbar):= \sign (x-x_1(E;\hbar)) \Big|\frac32 \int_{x_1(E;\hbar)}^x
\sqrt{|Q_0(u,E;\hbar)|}\, du\Big|^{\frac23}
\end{equation}
defines a smooth change of variables $x\mapsto\zeta$
for all $x\ge0$. Let $q:=-\frac{Q_0}{\zeta}$. Then  $q>0$, $\frac{d\zeta}{dx}=\zeta'=\sqrt{q}$, and
\[
-\hbar^2f''+(V-E)f= 0
\]
transforms into
\begin{equation}\label{eq:Airy}
-\hbar^2 \ddot w(\zeta) = (\zeta+\hbar^2 \tilde V(\zeta,E;\hbar))w(\zeta)
\end{equation}
under $w= \sqrt{\zeta'} f = q^{\frac14} f$. Here $\dot{\
}=\frac{d}{d\zeta}$ and
\[
\tilde V := \frac{1}{4} q^{-1} \langle x\rangle ^{-2} - q^{-\frac14} \frac{d^2
q^{\frac14}}{d\zeta^2}
\]
\end{lem}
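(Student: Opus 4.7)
The plan is to verify four assertions in order: existence and simplicity of a single turning point $x_1$ on $x>0$; smoothness of the map $x\mapsto\zeta$ globally on $x\ge0$, in particular across $x_1$; the positivity $q>0$ together with $\zeta'=\sqrt{q}$; and the algebraic reduction to the perturbed Airy equation together with the stated formula for $\tilde V$.

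For the first assertion, I would exploit that $V_0(\cdot;\hbar)$ is smooth, satisfies $V_0\ge V>0$ uniformly in $0<\hbar<1$, and behaves like $(\mu_+^2+\hbar^2/4)x^{-2}+O(x^{-3})$ as $x\to\infty$ uniformly in $\hbar$. Hence there exists $R_0=R_0(V)$ beyond which $\partial_x V_0(x;\hbar)<0$ uniformly in $\hbar$ and $V_0\to 0$. Taking $E_0<\inf_{0\le x\le R_0,\,0<\hbar<1}V_0(x;\hbar)$---positive by compactness and $V>0$---forces every solution of $V_0=E$ to lie in the region of strict monotonicity, yielding a unique simple root $x_1(E;\hbar)$ with $V_0'(x_1)<0$.

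For the second assertion, smoothness of $\zeta$ away from $x_1$ is immediate from the integral formula. To handle a neighborhood of $x_1$ I would factor $Q_0(x,E;\hbar)=(x-x_1)\,g(x,E;\hbar)$ with $g$ smooth and $g(x_1)=V_0'(x_1)<0$, and then make the substitution $u=x_1+t^2(x-x_1)$ in the integral. This recasts
\[
\Bigl|\int_{x_1}^x \sqrt{|Q_0(u)|}\,du\Bigr|=2|x-x_1|^{3/2}\int_0^1 t^2\sqrt{-g(x_1+t^2(x-x_1))}\,dt\,,
\]
with the right-hand integral smooth in $x$ and strictly positive at $x=x_1$. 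Taking the $2/3$-power and reintroducing the sign yields $\zeta=(x-x_1)\,h(x,E;\hbar)$ with $h$ smooth and positive near $x_1$, so $\zeta$ extends smoothly across the turning point with $\zeta'(x_1)=h(x_1)>0$. Consequently $q=-Q_0/\zeta=-g/h>0$ near $x_1$, and since $-Q_0$ and $\zeta$ share the same sign on each side of $x_1$, $q>0$ throughout $x\ge0$.

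For the remaining points, on each side of $x_1$ direct differentiation of the defining integral identity yields $\zeta(\zeta')^2=-Q_0$, hence $(\zeta')^2=q$; the sign convention gives $\zeta'=+\sqrt{q}$, and this extends continuously to $x_1$. To derive the transformed equation I would substitute $f=q^{-1/4}w(\zeta)$ into $-\hbar^2 f''+(V-E)f=0$, writing $V-E=Q_0-\frac{\hbar^2}{4}\la x\ra^{-2}$. A direct chain-rule computation, in which the coefficient of $\dot w$ cancels thanks to $\zeta'=q^{1/2}$, produces
\[
f''=q^{3/4}\ddot w+\Bigl(\tfrac{d}{dx}\bigl[-\tfrac14 q^{-5/4}q'\bigr]\Bigr)w\,,\qquad Q_0f=-q^{3/4}\zeta w\,.
\]
Dividing the rearranged equation $-\hbar^2 f''+Q_0f=\frac{\hbar^2}{4}\la x\ra^{-2}f$ by $q^{3/4}$ yields
\[
-\hbar^2\ddot w=\zeta w+\hbar^2\Bigl[\tfrac14 q^{-1}\la x\ra^{-2}-q^{-3/4}\tfrac{d}{dx}\bigl(\tfrac14 q^{-5/4}q'\bigr)\Bigr]w\,,
\]
and the identity $q^{-3/4}\frac{d}{dx}(\frac14 q^{-5/4}q')=q^{-1/4}\frac{d^2 q^{1/4}}{d\zeta^2}$, a consequence of $\frac{d}{d\zeta}=q^{-1/2}\frac{d}{dx}$, matches the stated formula for $\tilde V$. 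The main technical obstacle is the smoothness claim across the turning point; once that is in hand, the remaining steps are standard Liouville--Green bookkeeping.
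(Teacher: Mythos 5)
Your proposal is correct and follows essentially the same route as the paper's proof. The only difference is in the level of detail: where the paper simply asserts that ``Taylor-expanding $Q_0$ near $x_1$ and using $V_0'(x_1)<0$ implies $\zeta$ is smooth there,'' you make this explicit with the substitution $u=x_1+t^2(x-x_1)$ leading to $\zeta=(x-x_1)h$ with $h$ smooth and positive, and you also supply an explicit argument for the existence and uniqueness of the turning point $x_1$ (monotonicity of $V_0$ for large $x$ together with compactness on $[0,R_0]$), which the paper folds into the choice of $E_0$. The algebraic reduction to the perturbed Airy equation is the same computation, presented as differentiating $f=q^{-1/4}w$ rather than $w=q^{1/4}f$; the cancellation of the $\dot w$ term and the identity $q^{-3/4}\frac{d}{dx}\bigl(\tfrac14 q^{-5/4}q'\bigr)=q^{-1/4}\frac{d^2 q^{1/4}}{d\zeta^2}$ are correctly verified.
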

\begin{proof}
  Let $E_0>0$ be such that $V_0(x;\hbar)=E$ has a unique pair of solutions denoted by
$x_2(E;\hbar)<0<x_1(E;\hbar)$.   It is clear that~\eqref{eq:zeta} defines a smooth map away from $x=x_1(E;\hbar)$.
Taylor-expanding $Q_0(x,E;\hbar)$ in a neighborhood of that point and using that $V_0'(x_1(E;\hbar))<0$ implies
that $\zeta(x,E;\hbar)$ is smooth around $x=x_1$ as well with $\zeta'(x_1,E;\hbar)>0$.  Next, one checks that
\begin{align*}
 \dot w &= q^{-\frac14} f' + \frac{d q^{\frac14}}{d\zeta} f,\quad
 \ddot w = q^{-\frac34} f'' + \frac{d^2 q^{\frac14}}{d\zeta^2} f
\end{align*}
and thus, using $-\hbar^2f'' = (E-V)f$,
\begin{align*}
 -\hbar^2 \ddot w &= q^{-1} (E-V) w - \hbar^2 q^{-\frac14} \frac{d^2 q^{\frac14}}{d\zeta^2} w \\
&=  q^{-1} (-Q_0 + \hbar^2 \la x\ra^{-2} /4) w - \hbar^2 q^{-\frac14} \frac{d^2 q^{\frac14}}{d\zeta^2} w \\
&= \zeta w(\zeta) + \hbar^2 \big(q^{-1}\la x\ra^{-2}/4 - q^{-\frac14} \frac{d^2 q^{\frac14}}{d\zeta^2}\big) w
\end{align*}
as claimed.
\end{proof}

We now analyze the  properties of the change of variables introduced in the previous lemma.
Recall that
\[
 V_0(x;\hbar)= (\mu_+^2+\hbar^2/4)x^{-2}(1+O(x^{-1}))
\]
which implies that
\begin{equation}
 x_1(E;\hbar)= c(\hbar) E^{-\frac12} (1+O(E^{\frac12})),\quad c(\hbar)=\sqrt{\mu_+^2+\hbar^2/4}
\end{equation}
It will be convenient for us to normalize the constants here so that $c(\hbar)=\sqrt{2}$ and we shall assume
that for the remainder of this section. Moreover, we shall mostly suppress the harmless $\hbar$ dependence of various
functions in our notation.
 We begin with the following ``normal form'' lemma which will allow us to describe the function $\zeta$
in any region of the form $x\ge \eps E^{-\frac12}$ (which, in particular, contains the turning point). Note that Lemma~\ref{lem:eta} normalizes the
turning point $x_1$ to  $\xi=1$ by scaling out the energy.

\begin{lem}
  \label{lem:eta} Let $\eps>0$ but fixed. There exists a smooth map $\xi=\xi(y,E)$ on
  $(y,E)\in(\eps,\infty)\times (0,E_0)$ with $E_0$ small so that $\xi(E^{\frac12} x_1(E),E)=1$ and  for
  all $(y,E)$ in this range,
  \begin{equation}
    \label{eq:xy_corr} 1 - E^{-1} V_0(E^{-\frac12} y) =
    \Big(\frac{d\xi}{dy}\Big)^2 (1-\xi^{-2})
  \end{equation}
and such that, with some constant $\xi_0(E)$,
\begin{align}
\xi(y,E) &=  y + \xi_0(E) + y^{-1}\rho_0(y,E)
\label{eq:xforlargey}\\
|\partial_E^k\partial_y^{\ell} \xi(y,E)| &\le C_{k,\ell}\, E^{-k}
y^{1-\ell} \label{eq:xdiffinyE}
\end{align}
for all $k,\ell\ge0$. The functions $\xi_0$ and $\rho_0$
from~\eqref{eq:xforlargey} satisfy
\begin{equation}\label{eq:xi0rho0}
|\partial_E^k \xi_0(E)|\le C_k\, E^{-k},\qquad  |\partial_E^k
\partial_y^j \rho_0(y,E)|\le C_{jk}\, y^{-j}E^{-k}
\end{equation}
for all $k,j\ge0$ and uniformly in $(y,E)\in(1,\infty)\times
(0,E_0)$. For fixed $0<E<E_0$ the map $y\mapsto \xi(y,E)$ is a global
diffeomorphism whose inverse $y=y(\xi,E)$  satisfies the bounds
\begin{align}
y(\xi,E) &=  \xi + y_0(E) + \xi^{-1}\tilde\rho_0(\xi,E)  \label{eq:yforlargex}\\
|\partial_E^k\partial_\xi^{\ell} y(\xi,E)| &\le C_{k,\ell}\, E^{-k}
\xi^{1-\ell} \label{eq:ydiffinxE}
\end{align}
 for all $k,\ell\ge0$ and with functions $y_0$,
$\tilde\rho_0$ satisfying~\eqref{eq:xi0rho0} but relative to $\xi$ rather than~$y$.
All constants are allowed to depend on $\eps>0$.
\end{lem}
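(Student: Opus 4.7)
The plan is to define $\xi(y,E)$ implicitly via the integrated form of \eqref{eq:xy_corr}. Writing $Q(y,E) := 1 - E^{-1} V_0(E^{-1/2} y)$, which vanishes simply at $y_*(E) := E^{1/2} x_1(E)$, I take square roots in \eqref{eq:xy_corr} with consistent signs on either side of the turning point and integrate to get
\begin{equation*}
\sign(\xi-1)\int_1^{\xi} \sqrt{|1 - s^{-2}|}\, ds \;=\; \sign(y-y_*(E))\int_{y_*(E)}^{y} \sqrt{|Q(u,E)|}\, du.
\end{equation*}
Both sides are strictly monotone (onto $\R$ in the relevant range), so this defines $\xi = \xi(y,E)$ uniquely with the required normalization $\xi(y_*(E),E) = 1$, and differentiating recovers \eqref{eq:xy_corr}. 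Smoothness through the turning point follows from the factorizations $Q(u,E) = (u - y_*(E))\widetilde Q(u,E)$ and $1 - \xi^{-2} = (\xi - 1)\widetilde q(\xi)$ with $\widetilde Q,\widetilde q$ smooth and strictly positive (using $V_0'(x_1) < 0$), so the square roots become analytic in $\sqrt{u - y_*}$ and $\sqrt{\xi - 1}$ and the implicit function theorem applies.

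For the asymptotics \eqref{eq:xforlargey}, I would use the explicit antiderivative
\begin{equation*}
\int_1^{\xi} \sqrt{1 - s^{-2}}\, ds \;=\; \sqrt{\xi^2 - 1} - \sec^{-1}(\xi) \;=\; \xi - \tfrac{\pi}{2} + O(\xi^{-1}),
\end{equation*}
combined with the fact that, under the standing normalization $c(\hbar) = \sqrt{2}$, $E^{-1}V_0(E^{-1/2}u) = 2 u^{-2}(1 + O(E^{1/2}u^{-1}))$, hence $\sqrt{Q(u,E)} - 1 = O(u^{-2})$ uniformly in $E$. Therefore
\begin{equation*}
\int_{y_*(E)}^{y} \sqrt{Q(u,E)}\, du \;=\; y - y_*(E) + \gamma(E) + O(y^{-1}),\qquad \gamma(E) := \int_{y_*(E)}^{\infty}\bigl(\sqrt{Q(u,E)} - 1\bigr)\, du.
\end{equation*}
Equating the two expansions identifies $\xi_0(E) = \gamma(E) + \pi/2 - y_*(E)$ and produces the remainder $y^{-1}\rho_0(y,E)$ with decay inherited from the tail bound on $\sqrt{Q} - 1$.

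For the derivative bounds \eqref{eq:xdiffinyE} and \eqref{eq:xi0rho0}, I differentiate the implicit definition. The first $y$-derivative is $\xi_y = \sqrt{Q(y,E)}/\sqrt{1 - \xi^{-2}}$, which by the factorizations above is smooth and bounded through the turning point, and higher $y$-derivatives follow inductively from this formula. For $E$-derivatives, $E$ enters only through the rescaling $V_0(E^{-1/2} u)$ and through $y_*(E) = \sqrt{2}(1 + O(E^{1/2}))$; each $\partial_E$ applied to $E^{-1/2}u$ produces a factor of order $E^{-1}$, so by induction the bounds $|\partial_E^k\partial_y^\ell \xi| \le C_{k,\ell} E^{-k} y^{1-\ell}$ hold, and specializing to $y \to \infty$ gives the corresponding bounds for $\xi_0$ and $\rho_0$. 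The inverse map $y(\xi,E)$ exists globally since $\xi_y > 0$, and the expansion $\xi = y + \xi_0 + O(y^{-1})$ directly yields $y(\xi,E) = \xi - \xi_0(E) + \xi^{-1}\widetilde\rho_0(\xi,E)$ with $y_0(E) = -\xi_0(E)$; the derivative bounds for $y$ follow from those for $\xi$ by the inverse function theorem.

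The main obstacle I anticipate is verifying uniformity of the $E$-derivative bounds through the turning point and confirming that the tail integral defining $\gamma(E)$ (and the remainder $\rho_0$) loses at most $E^{-k}$ per $\partial_E$. The underlying mechanism is that all $E$-dependence enters through either the scale $E^{-1/2}$ or the smooth function $y_*(E)$, both producing only polynomial blowup in $E^{-1}$ upon repeated differentiation, so the $u^{-2}$ decay of the integrand more than absorbs the loss and yields uniform bounds.
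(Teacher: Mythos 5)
Your proposal is correct and follows the paper's own route: you define $\xi$ via exactly the same action-integral identities, handle the turning point through a factorization argument plus the implicit function theorem, and obtain the large-$y$ expansion from the tail estimate $\sqrt{Q}-1 = O(u^{-2})$. The only step to tighten is the turning-point smoothness: after integrating $\sqrt{Q} = \sqrt{u-y_*}\sqrt{\widetilde Q}$ you get $(y-y_*)^{3/2}$ times a smooth function, so you should raise both sides of the integrated identity to the $2/3$ power (arriving, as the paper does, at an equation of the form $(y-y_*)\,Y(y,E) = (\xi-1)\,X(\xi)$ with $Y,X$ smooth and positive) before invoking the implicit function theorem, since the raw expressions are not smooth in $y$ at $y=y_*$.
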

\begin{proof}
   Set $y_1=y_1(E):= E^{\frac12} x_1(E)$. Then $y_1=\sqrt{2}+O(E^{\frac12})$
   as $E\to0+$.
Note that the $O(\cdot)$ term here satisfies \[ \partial_E^k\, O(E^{\frac12}) = O(E^{\frac12-k})\quad \forall\;k\ge0\]
due to the corresponding assumption on the error term of~$V$. The same comment applies to every
$O(\cdot)$ term appearing in this proof, both with respect to derivatives in $E$ and spatial variables\footnote{We will say
that a $O(\cdot)$ term behaves like a symbol if its derivatives are governed by such power-laws.}.
Define the change of
   variables $\xi=\xi(y,E)$ via
   \begin{align}
     \int_{y_1}^y \sqrt{1-E^{-1} V_0(E^{-\frac12} u)}\; du &=
     \int_{1}^\xi \sqrt{1-t^{-2}}\, dt \qquad y>y_1 \label{eq:xdef1}\\
\int_{y}^{y_1} \sqrt{E^{-1} V_0(E^{-\frac12} u)-1}\; du &=
     \int_\xi^1 \sqrt{t^{-2}-1}\, dt \qquad \eps <y<y_1 \label{eq:xdef2}
   \end{align}
   By monotonicity, these identities define a unique correspondence
   between $\xi$ and~$y$ on these ranges which is, moreover, smooth
   and strictly increasing on $\eps<y<y_1$ and $y_1<y<\infty$.
   By inspection, they also satisfy~\eqref{eq:xy_corr}.
Since
\[
1-E^{-1} V_0(E^{-\frac12}
u)=1-\frac{2}{u^2} + O\big(E^{\frac12}{u}^{-3}\big)
\]
it follows furthermore that the interval $\eps \le y<\infty$ is
transformed into one of the form $0<\xi_1(E)<\xi<\infty$ where
\begin{equation}
  \label{eq:x1E} \xi_1(E)=\xi_1(0)+O(E^{\frac12})\text{\ \ as\ \ }E\to0+
\end{equation}
and $\xi_1(0)>0$ is a constant.  We first show that the map $\xi=\xi(y,E)$
so defined, is smooth for all $(y,E)\in (\eps,2)\times (0,E_0)$
together with the desired estimates. To this end, write
\begin{align*}
1-E^{-1} V_0(E^{-\frac12} y) &= (y-y_1) U(y,E) \\
U(y,E)&:= -E^{-\frac32}\int_0^1
V_0'\big(E^{-\frac12}(y_1+t(y-y_1))\big)\, dt
\end{align*}
Then for all $0<E\le E_0$ and all $k,\ell\ge0$, \begin{equation}
\label{eq:Ubd} \max_{1\le y\le2}|\partial_E^k \partial_y^\ell U(y,E)
|\le C_{k,\ell}\, E^{-k},\qquad \min_{1\le y\le2} U(y,E)\ge c_0>0
\end{equation}
For all $\eps<y<2$ we rewrite \eqref{eq:xdef1} and~\eqref{eq:xdef2} in
the form
\begin{equation}\label{eq:xy_corr2}
(y-y_1) Y(y,E) = (\xi-1) X(\xi)
\end{equation}
where
\begin{align*}
Y(y,E) &:=\Big(\int_0^1 \sqrt{(1-t)U(y_1+t(y-y_1),E)}\,
dt\Big)^{\frac23} \\ X(\xi) &:=\Big( \int_0^1
\frac{\sqrt{s(2+s(\xi-1))}}{1+s(\xi-1)}\, ds \Big)^{\frac23}
\end{align*}
By the preceding, \[ |\partial_E^k\partial_y^\ell Y(y,E)|\le
C_{k,\ell}E^{-k},\qquad |\partial_\xi^j\,X(\xi)|\le C_j  \qquad\forall\;k,\ell,j\ge0
\]
uniformly on the interval $\eps \le y\le2$ and the corresponding
interval in~$\xi$.  By the inverse function theorem,
\eqref{eq:xy_corr2} defines a (unique) smooth map also locally
around $\xi=1$ and $y=y_1$; this agrees with the previous definition
for $y\ne y_1$ and thus furnishes the desired smooth extension
through the point~$y=y_1$. Furthermore, from
\[
y_1= \sqrt{2}+O(E^{\frac12}),\qquad \partial_E^k y_1 =
O(E^{\frac12-k})
\]
and \eqref{eq:xy_corr2}, \eqref{eq:Ubd} we conclude  that
\[
\max_{1\le y\le2}|\partial_E^k \partial_y^\ell\, \xi(y,E) |\le
C_{k,\ell}\, E^{-k}
\]
for all $k,\ell\ge0$, $0<E<E_0$. For large $y$, we write
\[
\int_{y_1}^y \Big\{1-\frac{2}{u^2}\Big(1+O\big(
E^{\frac12}{u}^{-1}\big)\Big)\Big\}^{\frac12}\;du = \int_1^\xi
\sqrt{1-v^{-2}}\;dv
\]
The integral on the right-hand side satisfies
\[
\int_1^\xi \sqrt{1-v^{-2}}\;dv = \xi + \kappa + O(\xi^{-1})
\]
with a constant $\kappa$, whereas the one
on the left-hand side is equal to
\[
y+y_0 + O(y^{-1}) + O(E^{\frac12} y^{-2})
\]
with a constant $y_0(E)$. It is
easy to see that
\[
\xi + \kappa + O(\xi^{-1}) = y+y_0 + O(y^{-1}) + O(E^{\frac12} y^{-2})
\]
implies \eqref{eq:xforlargey} and we are done. The statements about
the inverse follow easily.
\end{proof}

We refer to Lemma~\ref{lem:eta} as a ``normal form''  since \eqref{nonzeroenergy.eq}, on the interval $x>E^{-\frac12}$,
turns into a suitably normalized (perturbed) Bessel equation in the variable~$\xi$, see the appendix.
By means of the change of variables introduced in Lemma~\ref{lem:eta} it is now
an easy matter to describe $\zeta(x,E)$ from Lemma~\ref{lem:langer} on the interval $x\ge E^{-\frac12}$.

\begin{cor}
 \label{cor:zeta} For all $0<E<E_0$ the following holds: there exists a constant $c_0>\sqrt{2}$ so that
on the interval $\eps <\sqrt{E}x<c_0$
\begin{equation}\label{eq:xinear1}
 \zeta(x,E) = 2^{\frac13}  (\xi-1)\big[1+ O(\xi-1) \big]
\end{equation}
with $O(\cdot)$ analytic and $\xi=\xi(\sqrt{E}x,E)$. For all $x\ge E^{-\frac12} c_0$,
\begin{equation}\label{eq:xibigger1}
 \frac23\zeta^{\frac23}(x,E) = \xi + \gamma + O(\xi^{-1})
\end{equation}
where $\gamma$ is some constant and $O(\cdot)$ is analytic.  Neither of the $O(\cdot)$ terms here
depend on~$E$ (other than through $\xi$).
\end{cor}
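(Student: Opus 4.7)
The plan is to express $\tfrac{2}{3}\zeta^{3/2}(x,E)$ as a clean integral in the new variable $\xi$ from Lemma~\ref{lem:eta} and then to extract the two asymptotic regimes by direct expansion. First I would unwind the definition of $\zeta$: for $x>x_1$ we have $\tfrac{2}{3}\zeta^{3/2}(x,E)=\int_{x_1}^x\sqrt{E-V_0(u)}\,du$, and after substituting $u=E^{-1/2}v$ and using $V_0-E=E(E^{-1}V_0-1)$ the factors of $\sqrt{E}$ cancel and we obtain
\[
 \tfrac{2}{3}\zeta^{3/2}(x,E)=\int_{y_1}^{y}\sqrt{1-E^{-1}V_0(E^{-1/2}v)}\,dv,\qquad y=\sqrt{E}\,x.
\]
Now apply the defining identity (\ref{eq:xy_corr}) to write the integrand as $(d\xi/dv)\sqrt{1-\xi^{-2}}$ and change variables once more from $v$ to $\eta=\xi(v,E)$. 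This reduces everything to the $E$-independent master integral
\[
 \tfrac{2}{3}\zeta^{3/2}=\int_1^{\xi}\sqrt{1-\eta^{-2}}\,d\eta,
\]
with $\xi=\xi(\sqrt{E}x,E)$. For $x<x_1$ the same computation on the opposite side gives $\tfrac{2}{3}(-\zeta)^{3/2}=\int_\xi^1\sqrt{\eta^{-2}-1}\,d\eta$; the two branches glue into a single analytic function of $\xi-1$ near $\xi=1$.

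Having reduced the problem to this one-variable integral I would then carry out the two asymptotic analyses separately. For $\xi$ near $1$ (equivalently $\sqrt{E}x$ in a fixed neighbourhood of $\sqrt{2}$ and hence $\eps<\sqrt{E}x<c_0$ for any $c_0>\sqrt{2}$ with $\hbar_0$ small): set $\eta=1+s$ and expand $1-\eta^{-2}=2s(1-\tfrac32 s+\cdots)$, so that $\sqrt{1-\eta^{-2}}=\sqrt{2s}\,g(s)$ with $g$ analytic at $0$ and $g(0)=1$. Integrating, $\tfrac{2}{3}\zeta^{3/2}=\tfrac{2\sqrt{2}}{3}(\xi-1)^{3/2}(1+O(\xi-1))$, and raising to the $2/3$ power (and tracking the sign on both sides of $\xi=1$) yields (\ref{eq:xinear1}) with an $O(\cdot)$ that is a convergent power series in $\xi-1$. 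For $\xi\geq c_0$ (with $c_0>\sqrt{2}$ chosen so $\xi>1$ throughout this range) I would split
\[
 \int_1^\xi\sqrt{1-\eta^{-2}}\,d\eta=(\xi-1)+\int_1^\infty(\sqrt{1-\eta^{-2}}-1)\,d\eta-\int_\xi^\infty(\sqrt{1-\eta^{-2}}-1)\,d\eta.
\]
The middle term is a convergent numerical constant (call it $C_0$), and since $\sqrt{1-\eta^{-2}}-1=-\tfrac12\eta^{-2}+O(\eta^{-4})$ as $\eta\to\infty$, the last integral equals $O(\xi^{-1})$ with an expansion in inverse powers of $\xi$ that is manifestly analytic. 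Setting $\gamma:=C_0-1$ gives~(\ref{eq:xibigger1}).

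Finally, the statement that the $O(\cdot)$ terms are $E$-independent is automatic because the master integral $\int_1^\xi\sqrt{1-\eta^{-2}}\,d\eta$ depends only on $\xi$; all $E$-dependence has been absorbed into the change of variables $\xi=\xi(\sqrt{E}x,E)$ from Lemma~\ref{lem:eta}. There is really no serious obstacle: the only point requiring some care is keeping track of signs near the turning point when $\xi$ crosses $1$, and verifying that both halves of the definition of $\zeta$ assemble into a single analytic branch in $\xi-1$. The rest is routine Taylor/asymptotic expansion of the explicit integrand $\sqrt{1-\eta^{-2}}$.
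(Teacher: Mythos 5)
Your proof is correct and follows essentially the same route as the paper: both reduce $\frac23|\zeta|^{3/2}$ to the explicit master integral $\int_1^\xi\sqrt{1-\eta^{-2}}\,d\eta$ via the rescaling $u=E^{-1/2}v$ and the defining identity of $\xi$ from Lemma~\ref{lem:eta}, and then perform the two routine Taylor/asymptotic expansions near $\xi=1$ and $\xi\to\infty$. The paper phrases this through the signed action $S(x,E)$ rather than $\zeta^{3/2}$ directly, but the substance and even the intermediate formulas are identical.
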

\begin{proof} We begin with $\xi$ close to $\xi=1$.
 The action $S(x,E)$ then satisfies, with $\xi$ as in Lemma~\ref{lem:eta},
\begin{align*}
 S(x,E) &= \sign(x-x_1(E)) \int_{x_1(E)}^x \sqrt{|E-Q_0(u)|}\, du \\
&=  \sign(x-x_1(E)) \int_{\sqrt{E}x_1(E)}^{\sqrt{E}x} \sqrt{|1-E^{-1}V_0(E^{-\frac12}y)|}\, dy \\
& = \sign(\xi(\sqrt{E}x)-1)  \int_{1}^{\xi(\sqrt{E}x,E)} \sqrt{|1-\eta^{-2}|}\, d\eta \\
& = \sqrt{2}\frac23 \sign(\xi-1) |\xi-1|^{\frac32} (1+ O(\xi-1))
\end{align*}
where the $O(\cdot)$ term is analytic in $|\xi-1|<1$  and $\xi=\xi(\sqrt{E}x,E)$.
In terms of~$\zeta$ this means that
\[
 \zeta(x,E) = 2^{\frac13}  (\xi-1)\big[1+ O(\xi-1) \big]
\]
which is \eqref{eq:xinear1}.  The constant $c_0$ is chosen so that $1<\xi(c_0 E^{-\frac12},E)<2$ for all $0<E<E_0$.
Since $x_1(E)=\sqrt{2/E} + o(1)$ as $E\to0$ we see that $c_0>\sqrt{2}$.
As for \eqref{eq:xibigger1},
\begin{align*}
 S(x,E) &=  \int_{1}^{\xi(\sqrt{E}x,E)} \sqrt{1-\eta^{-2}}\, d\eta \\
& = \int_1^\xi (1+ O(\eta^{-2}))\, d\eta = \xi + \gamma + O(\xi^{-1})
\end{align*}
Since $\zeta = (\frac32 S)^{\frac23}$, we are done.
\end{proof}

In the region $0<x<\eps x_1(E)$ we have the following description of $\zeta(x,E)$ with $\eps$
the same as in Lemma~\ref{lem:eta}. In fact, in the following lemma we will need $\eps$ small and
then use this choice in Lemma~\ref{lem:eta}.

\begin{lem}
 \label{lem:zetasmallx} For sufficiently small and fixed $\eps>0$  there exists a smooth function  $\tilde x(x,E)$
on $0\le x\le \eps x_1(E)$ with
\[
 \tilde x(x,E) = x(1+ O(Ex^2))
\]
and such that
\begin{equation}\label{eq:zetasmallx}
 \frac23 \zeta^{\frac32}(x,E) = \int_{\tilde x(x,E)}^{x_1(E)} \sqrt{V_0(v)}\, dv + O(E\log E)
\end{equation}
for all $0\le x\le \eps x_1(E)$ and $0<E<E_0$.  The $O(\cdot)$ here behave like symbols.
\end{lem}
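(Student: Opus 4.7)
The plan is to relate the Langer phase $\frac{2}{3}|\zeta|^{3/2}(x,E) = \int_x^{x_1(E)}\sqrt{V_0(v)-E}\,dv$ (from Lemma~\ref{lem:langer}) to the zero-energy phase $\int_{\tilde x}^{x_1(E)}\sqrt{V_0(v)}\,dv$, with the $E$-dependent discrepancy absorbed into a small shift $x\mapsto\tilde x$ of the lower limit. I would split the range of integration at an intermediate scale $v_*:=c\,x_1(E)$ with a fixed $c\in(0,1)$, so the bulk piece $[x,v_*]$ lies in the region $V_0\gg E$ and admits Taylor expansion, while the near-turning-point piece $[v_*,x_1(E)]$ is handled via the normalized variable $\xi$ from Lemma~\ref{lem:eta}.

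On the bulk I use
\[
\sqrt{V_0(v)-E}\;=\;\sqrt{V_0(v)}\;-\;\frac{E}{2\sqrt{V_0(v)}}\;+\;O\!\bigl(E^{2}\,V_0(v)^{-3/2}\bigr),
\]
whose leading $O(E)$ correction is what I want to absorb into the shift. On $[v_*,x_1(E)]$, the substitution $y=\sqrt{E}\,v$, $\xi=\xi(y,E)$ supplied by Lemma~\ref{lem:eta} turns both $\int_{v_*}^{x_1}\sqrt{V_0-E}\,dv$ and the analogous $\sqrt{V_0}$-integral into universal Bessel-type expressions $\int\sqrt{t^{-2}-1}\,dt$ and its $E=0$ analogue, up to symbol-type errors in $E$ stemming from the $O(v^{-3})$ tail of $V_0$. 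Combining the two regions, I define $\tilde x$ implicitly as the unique solution of
\[
\int_x^{\tilde x}\sqrt{V_0(v)}\,dv \;=\; \frac{E}{2}\int_x^{v_*}\frac{dv}{\sqrt{V_0(v)}}\;+\;R(E,c),
\]
where $R(E,c)$ collects the $x$-independent contributions coming from the near-turning-point piece. Existence, uniqueness, and smoothness of $\tilde x$ follow from the implicit function theorem using $\sqrt{V_0}>0$, while the large-$v$ asymptotic $V_0\sim 2/v^2$ (giving $\int_x^{v_*}\frac{dv}{\sqrt{V_0}}\sim(v_*^2-x^2)/(2\sqrt2)$ and $\sqrt{V_0(x)}\cdot x = O(1)$) shows that $\tilde x-x$ is of order $Ex^3$; i.e., $\tilde x=x(1+O(Ex^2))$, consistent with $Ex^2$ ranging from $O(E)$ at $x=O(1)$ up to $O(\eps^2)$ at $x\sim\eps x_1(E)$.

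The main technical obstacle is to show, uniformly in $x\in[0,\eps x_1(E)]$, that once the shift is in place the cumulative remainder is actually $O(E\log E)$ and not merely $O(1)$. This requires careful bookkeeping of three sources of error: the quadratic-and-higher Taylor remainder $\int_x^{v_*}E^2 V_0^{-3/2}\,dv$, the difference between the exact $\xi$-integrand and its leading inverse-square form on $[v_*,x_1]$, and the $E$-dependent correction to $\xi(\sqrt{E}v,E)$ provided by the symbol bounds of Lemma~\ref{lem:eta}. Delicate cancellations between the bulk and the turning-point contributions are essential here; the logarithmic factor arises from the integration of $V_0^{-1/2}\sim v/\sqrt2$ against the $O(v^{-3})$ tail of $V_0$. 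The claimed symbol-type estimates on the $O(E\log E)$ remainder then follow by repeated $E$-differentiation of the implicit equation defining $\tilde x$, combined with the symbol bounds on $\xi$ and the derivative control on $V_0$ inherited from Theorem~\ref{thm:main}.
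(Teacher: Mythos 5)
Your route is considerably more elaborate than the paper's, and the extra machinery is where the trouble enters. The paper defines $\tilde x$ simply by $\int_0^{\tilde x}\sqrt{V_0}\,dv=\int_0^x\sqrt{V_0-E}\,du$; since $x\le\eps x_1(E)$ places the \emph{entire} interval $[0,x]$ in the region $V_0\gg E$, the Taylor expansion $\sqrt{V_0-E}=\sqrt{V_0}\,(1+O(Eu^2))$ holds on all of $[0,x]$ with no splitting, and $\tilde x=x(1+O(Ex^2))$ falls out at once. No intermediate scale $v_*$ and no appeal to Lemma~\ref{lem:eta} is needed for the $\tilde x$ estimate; the only remaining step in the paper is the single claim $\int_0^{x_1}[\sqrt{V_0}-\sqrt{V_0-E}]\,du=O(E\log E)$.

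The serious gap in your version is the role of $R(E,c)$. In your implicit equation for $\tilde x$ both $\frac{E}{2}\int_x^{v_*}V_0^{-1/2}\,dv$ and $R(E,c)$ are $O(1)$ quantities — indeed $\frac{E}{2}\int_0^{v_*}V_0^{-1/2}\,dv\to\frac{c^2}{2\sqrt2}$ as $E\to0$. If these $x$-independent $O(1)$ pieces are absorbed into $\tilde x$, then at $x=O(1)$ one gets $\int_x^{\tilde x}\sqrt{V_0}=O(1)$, hence $\tilde x/x=e^{O(1)}$, contradicting $\tilde x=x(1+O(Ex^2))$; if instead $\tilde x=x(1+O(Ex^2))$, then $\int_x^{\tilde x}\sqrt{V_0}=O(Ex^2)$ and the constant cannot be absorbed. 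You flag that "delicate cancellations" must reduce the leftover to $O(E\log E)$, but they are not demonstrated, and in a model they do not occur: with $V_0(u)=2/u^2$ on $[1,x_1]$, $x_1=\sqrt{2/E}$, a direct computation gives $\int_1^{x_1}\sqrt{V_0}\,du-\int_1^{x_1}\sqrt{V_0-E}\,du=\sqrt2\,(1-\log 2)+O(E)$, a nonzero constant. Your proposed source of the logarithm — integrating $V_0^{-1/2}\sim v/\sqrt2$ against an $O(v^{-3})$ tail — produces the convergent $\int v^{-2}\,dv$, not a log. (The paper's own proof asserts the $O(E\log E)$ step without justification, and the model shows it is really $O(1)$; the only downstream use in Lemma~\ref{lem:zeta2} is the two-sided bound $|\zeta|\sim|\log(E\la x\ra^2)|^{2/3}$, which needs only $O(1)$. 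So the stated error term appears to be an overclaim in the lemma, which explains why your attempt to prove it literally cannot be completed.)
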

\begin{proof} Define $\tilde x$ via
\[\begin{split}
 \int_0^{\tilde x(x,E)} \sqrt{V_0(v)}\, dv &= \int_0^x \sqrt{V_0(u)-E}\,du = \int_0^x \sqrt{V_0(u)}(1+O(E u^2))\,du \\
&= \int_0^x \sqrt{V_0(u)}\,du + O(E x^2)
\end{split}\]
Provided $\eps>0$ is sufficiently small (independently of $E$, of course), it follows from monotonicity considerations
that $\tilde x(x,E)$ exists with the desired properties. Next, note that for all $0<E<E_0$,
\[
 \int_0^{x_1(E)}\sqrt{V_0(u)}\, du - \int_0^{x_1(E)}\sqrt{V_0(u)-E}\, du = O(E\log E)
\]
and thus
\[\begin{split}
 \frac23 \zeta^{\frac32} &= \int_0^{x_1(E)}\sqrt{V_0(u)-E}\, du - \int_0^{x}\sqrt{V_0(u)-E}\, du \\
&=  \int_0^{x_1(E)}\sqrt{V_0(u)}\, du - \int_0^{\tilde x(x,E)} \sqrt{V_0(v)}\, dv + O(E\log E) \\
&= \int_{\tilde x(x,E)}^{x_1(E)} \sqrt{V_0(v)}\, dv + O(E\log E)
\end{split}
\]
as claimed.
\end{proof}

The point of~\eqref{eq:zetasmallx} is that $O(E\log E)$ is negligible as compared to the integral
on the right-hand side which is on the order of $|\log (E \la \tilde x\ra^2)|\gtrsim 1$. Thus, $\zeta$ behaves
to leading order like $|\log (E \la x\ra^2)|^{\frac23}$.  We now turn to estimating the functions
$q,\tilde V$ from Lemma~\ref{lem:langer}. In what follows, the notation $A\sim B$ will denote proportionality of
$A,B>0$ by some constants that are only allowed to depend on~$V$.
Also, $A\lesssim B$ will denote $A\le CB$ where $C$ is a constant, and similarly for $A\gtrsim B$.

\begin{lem}
  \label{lem:zeta2} Using the notations of Lemma~\ref{lem:langer}, let $0<E<E_0$. Then
on the interval $\zeta\ge-1$ the functions $q=q(\zeta,E)$ and $\tilde V=\tilde V(\zeta,E)$ satisfy
\begin{equation}\begin{aligned}
|\partial_E^k\partial_\zeta^\ell q| &\le C_{k,\ell}\,
E^{1-k}\la \zeta\ra^{-1-\ell} \\
|\partial_E^k\partial_\zeta^\ell
\tilde V(\zeta,E)| &\le C_{k,\ell}\, E^{-k} \la \zeta\ra^{-2-\ell}\qquad\forall\,k,\ell\ge0
\end{aligned}\label{eq:zetagr1}
\end{equation}
On
the interval $\zeta(0,E)\le\zeta\le -1$ we view $q,\tilde V$ as functions of~$x$ via~\eqref{eq:zeta}. Then one has $q\sim
|\zeta|^{-1}\,\la x\ra^{-2}$ and there is the representation
\begin{equation}\label{eq:vtilde_rep}
 \tilde V(\zeta,E)  = -\frac{5}{16\zeta^2}  +  q^{-1} (E\,
\beta_0(x,E) + \la x\ra^{-3} \beta_1(x,E))
\end{equation}
where $\beta_j$ satisfy the bounds
\begin{equation}\label{eq:zetaklmin1}\begin{split} |
\partial_E^k\partial_x^\ell \beta_j(x,E)| &\le C_{k,\ell}\, E^{-k}\la
 x\ra^{-\ell} \qquad j=0,1\\
| \partial_E^k\partial_x^\ell  q| &\le C_{k,\ell}\,E^{-k}
|\zeta|^{-1}\,\la x\ra^{-2-\ell} \qquad\forall\,k,\ell\ge0
\end{split}\end{equation}
All constants are  independent of $E$.
\end{lem}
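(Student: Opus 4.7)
The plan is to handle the two regions $\zeta\ge -1$ and $\zeta(0,E)\le\zeta\le -1$ separately, in each case starting from the explicit formulas $q = -Q_0/\zeta$ and $\tilde V = \tfrac14 q^{-1}\la x\ra^{-2} - q^{-1/4}\,d^2 q^{1/4}/d\zeta^2$, together with the identity $\zeta'(x) = \sqrt{q}$, which lets one translate $\partial_\zeta$ and $\partial_x$ into each other.

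For $\zeta\ge -1$, the natural move is to pass to the scaled variable $y = \sqrt E\,x$ and its normal-form image $\xi(y,E)$ from Lemma~\ref{lem:eta}. The rescaled potential $\tilde V_0(y,E):=E^{-1}V_0(E^{-1/2}y)$ is smooth with symbol-type behavior in $(y,E)$, and the identity $E-V_0 = E(\partial_y\xi)^2(1-\xi^{-2})$ combined with Corollary~\ref{cor:zeta} (so that $\zeta\sim\xi-1$ near the turning point and $\zeta\sim(\tfrac32\xi)^{2/3}$ at infinity, with $E$-independent leading order) gives at once $q\sim E/\la\zeta\ra$ and the $\la\zeta\ra^{-1-\ell}$ decay under $\partial_\zeta^\ell$. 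The $E^{-k}$ losses under $\partial_E^k$ are supplied by the symbol bounds already recorded in Lemma~\ref{lem:eta}, translated via $\partial_E\big|_x = -\tfrac{y}{2E}\partial_y\big|_E + \partial_E\big|_y$. For $\tilde V$, the key point is that $\la x\ra\gtrsim E^{-1/2}$ on this region, so $\la x\ra^{-2}\lesssim E$, which combined with $q^{-1}\sim E^{-1}\la\zeta\ra$ gives the first summand the required $\la\zeta\ra^{-2}$ size (a finer computation yields $\xi^{-4/3}\sim\la\zeta\ra^{-2}$ at infinity), and the Schwarzian-type second summand is $E$-independent to leading order once written in the $\xi$-variable and produces the same $\la\zeta\ra^{-2}$ decay both near $\zeta=0$ and as $\zeta\to\infty$.

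For $\zeta(0,E)\le\zeta\le -1$ we lie far below the turning point and $V_0\gg E$, hence $q = (V_0-E)/|\zeta|\sim\la x\ra^{-2}/|\zeta|$ is immediate from the regularity and decay of $V_0$. The subtle point is the leading term $-5/(16\zeta^2)$ in $\tilde V$, which reflects the model computation: if $q=c/|\zeta|$ with $c$ constant, then $q^{1/4}=c^{1/4}|\zeta|^{-1/4}$ and a direct calculation yields $-q^{-1/4}\,d^2q^{1/4}/d\zeta^2 = -5/(16\zeta^2)$ exactly, the constant coming from $(p/4)(p/4-1) = 5/16$ at $p=-1$. To extract this leading behavior in our situation, I would introduce $\phi(x) := \tfrac23|\zeta(x,E)|^{3/2}$, so that $\phi'(x) = \sqrt{V_0-E}$ and $q = (\phi')^2/(\tfrac32\phi)^{2/3}$, and then compute $-q^{-1/4}\,d^2q^{1/4}/d\zeta^2$ via $d/d\zeta = q^{-1/2}d/dx$. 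The result is $-5/(16\zeta^2)$ plus explicit corrections, which split into a piece proportional to $E$ (coming from $V_0-E\ne V_0$ and from the $O(E\log E)$ term in Lemma~\ref{lem:zetasmallx}; this becomes the $E\beta_0$ piece after dividing out $q^{-1}\sim|\zeta|\la x\ra^2$) and a piece proportional to $\la x\ra^{-3}$ (coming from the subleading $O(x^{-3})$ in $V_0$, the slow variation of the coefficient $c(x)$, and the first summand $\tfrac14 q^{-1}\la x\ra^{-2}$ of $\tilde V$; this becomes the $\la x\ra^{-3}\beta_1$ piece). The symbol-type bounds on the $\beta_j$ then follow from the assumed symbolic behavior of the error term in $V$ and the corresponding bounds on $\tilde x(x,E)$ and $x_1(E)$.

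The main obstacle will be the careful chain-rule bookkeeping across the changes of variable $x\leftrightarrow\zeta$ (and $x\leftrightarrow\xi$ in region~1), ensuring that each $\partial_E$ costs at most a factor $E^{-1}$ and each $\partial_x$ at most $\la x\ra^{-1}$. This is a routine but tedious induction, resting in region~1 on iterating the normal-form relation $(y-y_1)Y = (\xi-1)X$ from the proof of Lemma~\ref{lem:eta} together with Corollary~\ref{cor:zeta}, and in region~2 on iterating the identity $\tfrac23|\zeta|^{3/2} = \int_{\tilde x(x,E)}^{x_1(E)}\sqrt{V_0}\,dv + O(E\log E)$ from Lemma~\ref{lem:zetasmallx} and differentiating the defining equation for $\tilde x(x,E)$. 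Once the inductive symbol bounds are in place, the model-case identity for $-5/(16\zeta^2)$ propagates through the actual situation with the claimed remainder structure.
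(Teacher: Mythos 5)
Your proposal follows the paper's proof almost step for step: region $\zeta\ge-1$ is handled by pulling back through the normal form $\xi$ of Lemma~\ref{lem:eta} and Corollary~\ref{cor:zeta}, region $\zeta\le -1$ by writing $\tilde V$ in $x$-coordinates via $\partial_\zeta=q^{-1/2}\partial_x$ and peeling off the $-5/(16\zeta^2)$ model term, and the same $E\beta_0 + \la x\ra^{-3}\beta_1$ split is used. Your introduction of $\phi=\tfrac23|\zeta|^{3/2}$ in region~2 is cosmetically different from the paper's direct manipulation of $Q_0$, but both reduce to the identity
\[
q^{-\frac14}\frac{d^2 q^{\frac14}}{d\zeta^2} = \frac{5}{16\zeta^2} + q^{-1}\Big[\frac14\frac{V_0''}{Q_0} - \frac{5}{16}\Big(\frac{V_0'}{Q_0}\Big)^2\Big],
\]
which is the computational heart of the argument and which you reach by an equivalent route.

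The one place you should be more careful is the claim that the combination of the Langer term $\tfrac14 q^{-1}\la x\ra^{-2}$ with the Schwarzian-type term produces a $q^{-1}\la x\ra^{-3}$ contribution. Each of those pieces is individually of size $q^{-1}\la x\ra^{-2}$, and your phrasing ("coming from the slow variation of $c(x)$, and the first summand... ; this becomes the $\la x\ra^{-3}\beta_1$ piece") asserts the outcome but never names or verifies the cancellation: using $V_0\sim\mu^2 x^{-2}$ one has $V_0''/V_0\sim 6x^{-2}$ and $(V_0'/V_0)^2\sim 4x^{-2}$, so that
\[
\frac14\la x\ra^{-2}-\frac14\frac{V_0''}{V_0}+\frac5{16}\Big(\frac{V_0'}{V_0}\Big)^2 \sim \Big(\frac14-\frac64+\frac54\Big)x^{-2} = 0,
\]
and the leading surviving term is $O(x^{-3})$. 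This cancellation is not an incidental bookkeeping detail — it is the entire reason for replacing $V$ by $V_0$ (without the Langer correction the bracket is $O(x^{-2})$, the $\beta_1$-integral diverges logarithmically, and the bounds \eqref{eq:errors} fail). The paper makes exactly this point, so a complete write-up must display the cancellation rather than only list its ingredients. Similarly, in region~1, the remark "$\la x\ra^{-2}\lesssim E$ combined with $q^{-1}\sim E^{-1}\la\zeta\ra$" only yields $\la\zeta\ra$, not $\la\zeta\ra^{-2}$; the parenthetical finer estimate $q^{-1}\la x\ra^{-2}\sim\zeta^{-2}$ via $\la x\ra^2\sim E^{-1}\zeta^3$ is what actually carries the day, and should be stated as the argument rather than as an afterthought.
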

\begin{proof} The case $\zeta\ge-1$ corresponds to $x\ge \eps x_1(E)$ by Lemma~\ref{lem:eta} and Corollary~\ref{cor:zeta}.
We now use that corollary to write
\[
 \zeta=\zeta(\xi,E), \quad \xi=\xi(y,E),\quad y=E^{\frac12} x
\]
Then
\[ q= (\zeta')^2 = E (\partial_\xi \zeta(\xi,E))^2 (\partial_y \xi(y,E))^2 \sim E
 \]
The derivative bounds on $q$ now follow from those obtained in Lemma~\ref{lem:eta} and Corollary~\ref{cor:zeta}.
As for $\tilde V$, we compute, with $\dot{\;}=\frac{d}{d\zeta}$,
\begin{equation}\label{eq:tilV}
\tilde V= \frac{1}{4}q^{-1} \langle x\rangle ^{-2} -q^{-\frac14} \frac{d^2
q^{\frac14}}{d^2\zeta} = \frac{1}{4}q^{-1} \langle x\rangle ^{-2}+\frac{3}{16} q^{-2}
\dot q^2 - \frac14 q^{-1} \ddot q
\end{equation}
From the bounds on $q$ which we just derived, the last two terms on the right-hand side
of~\eqref{eq:tilV} are $\les \zeta^{-2}$ and behave as stated under differentiation. To treat the first
term, we invoke \eqref{eq:yforlargex}, \eqref{eq:xinear1}, and \eqref{eq:xibigger1} to write
\begin{align*}
 q^{-1}\la x\ra^{-2} &= q^{-1} \la E^{-\frac12} y\ra^{-2} = q^{-1} \la E^{-\frac12} y(\xi,E)\ra^{-2}\\
&= q^{-1} \la E^{-\frac12} y(\xi(\zeta,E),E)\ra^{-2}
\end{align*}
which implies the correct bounds. Indeed, if $|\zeta|\les 1$, then $q\sim E$ and the change of variables $\zeta\mapsto\xi\mapsto y$
has derivatives of size $\les 1$ relative to $\zeta$ uniformly in~$E$. This implies that $q^{-1}\la x\ra^{-2}\sim 1$ with
derivatives with respect to $\zeta$ of size $\les 1$; furthermore, each derivative in~$E$ costs one power of~$E$. Next, if $\zeta\ge1$,
then the change of variables $\zeta\mapsto\xi\mapsto y$ acts like $\zeta^{\frac23}$ by Corollary~\ref{cor:zeta}. Thus,
\[
 q^{-1}\la x\ra^{-2} \sim E^{-1}\zeta (E^{-\frac12} \zeta^{\frac32})^{-2} \sim \zeta^{-2}
\]
with each $\zeta$ derivative gaining one more power of decay in~$\zeta$.

\noindent In the remaining case $\zeta\le -1$ one first calculates, on the one
hand,
\begin{align*}
q^{-\frac14} \frac{d^2 q^{\frac14}}{d\zeta^2} &=
\frac{5}{16\zeta^2} +\frac{1}{4}\frac{\ddot Q_0}{Q_0} -
\frac18\frac{\dot Q_0}{\zeta
Q_0} -\frac{3}{16} \Big(\frac{\dot Q_0}{Q_0}\Big)^2 \\
& = \frac{5}{16\zeta^2} + q^{-1}\Big[ \frac14 \frac{V_0''}{Q_0}
-\frac{5}{16} \Big(\frac{V_0'}{Q_0}\Big)^2\Big]
\end{align*}
where ${\ }'=\frac{d}{dx}$. Thus, from \eqref{eq:tilV},
\begin{align*}
 \tilde V &= \frac{1}{4}q^{-1} \langle x\rangle ^{-2}- q^{-\frac14} \frac{d^2 q^{\frac14}}{d\zeta^2} \\
&= -\frac{5}{16\zeta^2} + q^{-1}\Big[\frac14\langle x\rangle ^{-2}- \frac14 \frac{V_0''}{Q_0}
+\frac{5}{16} \Big(\frac{V_0'}{Q_0}\Big)^2\Big] \\
&= -\frac{5}{16\zeta^2} + q^{-1}\Big[E\beta_0(x,E) + \la x\ra^{-3}\beta_1(x,E)\Big]
\end{align*}
where we have set\footnote{$\beta_1$ does not depend on $E$, but this makes no difference.}
\begin{align}
\beta_0(x,E) &:= E^{-1} \Big[ \frac14\Big(\frac{V_0''}{V_0} -\frac{V_0''}{Q_0}\Big)
+\frac{5}{16} \Big(\Big(\frac{V_0'}{Q_0}\Big)^2  - \Big(\frac{V_0'}{V_0}\Big)^2 \Big) \Big] \label{eq:beta0def}\\
 \beta_1(x,E) &:= \la x\ra^{3}\Big[\frac14\langle x\rangle ^{-2}- \frac14 \frac{V_0''}{V_0}
+\frac{5}{16} \Big(\frac{V_0'}{V_0}\Big)^2 \Big] \label{eq:beta1def}
  \end{align}
As already noted in Section~\ref{sec:zero}, the $x^{-2}$ terms inside the brackets in~\eqref{eq:beta1def}
cancel so that the leading order is $x^{-3}$. In fact,
$|\partial_x^\ell \beta_1(x,E)|\le C_\ell \la x\ra^{-\ell}$ in view of our assumptions on~$V$, see
Theorem~\ref{thm:main}. As for $\beta_0$, we note that in the range $\zeta\le -1$, one has $Q_0\sim V_0$.
Since $Q_0=V_0-E$, this implies that the expression in brackets in~\eqref{eq:beta0def} is $\les E$ together with the natural
derivative bounds.
The bounds on
\[q=(V_0-E)|\zeta|^{-1} \sim \la x\ra^{-2} |\zeta|^{-1},\qquad |\zeta|\sim |\log(E\la x\ra^2)|^{\frac23}
\]
follow from~\eqref{eq:zetasmallx} and we are done.
\end{proof}

In Lemma~\ref{lem:zeta2} the modification of $V$ to $V_0$ only played a role in the regime $\zeta\ll-1$
which is the same as $x< \eps x_1(E)$. This is natural, since we know from Section~\ref{sec:zero} that this modification
really comes from the $E=0$ case which corresponds to $x_1=+\infty$. We will see this mechanism at work in the following
section, too.

\section{Solving the perturbed Airy equation}\label{sec:Airy}

This section is devoted to solving \eqref{eq:Airy}, at least in the asymptotic sense relative to~$\hbar$.
We shall use the notations and results of the previous section. For the properties of the
Airy functions $\Ai, \Bi$ listed below we refer the reader to Chapter~11 of~\cite{Olver}.

\begin{prop}
  \label{prop:AiryI} Let $\hbar_0>0$ be small. A fundamental system of solutions to~\eqref{eq:Airy} in the range
$\zeta\le0$ is given by
\begin{equation}\nonumber
\begin{aligned}
  \phi_1(\zeta,E,\hbar) &=
  \Ai(\tau) [1+\hbar a_1(\zeta,E,\hbar)] \\
  \phi_2(\zeta,E,\hbar) &=
  \Bi(\tau) [1+\hbar a_2(\zeta,E,\hbar)]
\end{aligned}
\end{equation}
 with
$\tau:=-\hbar^{-\frac23}\zeta$.
 Here $a_1, a_2$ are smooth, real-valued, and they satisfy
the bounds, for all $k\ge0$ and $j=1,2$, and with $\zeta_0:=\zeta(0,E)$,
\begin{equation}\label{eq:aj_est}
\begin{aligned}
  |\partial_E^k  a_j(\zeta,E,\hbar)| &\less  E^{-k} \min\big[\hbar^{\frac13} \la
  \hbar^{-\frac23}
  \zeta\ra^{\frac12}, 1\big] \\
 |\partial_E^k \partial_\zeta a_j(\zeta,E,\hbar)| &\les E^{-k} \Big[\hbar^{-\frac13} \la
\hbar^{-\frac23}\zeta\ra^{-\frac12}\chi_{[-1\le \zeta\le0]} +
|\zeta|^{\frac12}  \chi_{[ \zeta_0\le \zeta\le -1]}\Big ]
\end{aligned}
\end{equation}
uniformly in the parameters $0<\hbar<\hbar_0$, $0<E<E_0$.
\end{prop}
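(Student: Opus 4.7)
The plan is to set up a Volterra integral equation for each correction factor $a_j$ via variation of parameters against the pair $(\Ai(\tau),\Bi(\tau))$, and to solve it by Picard iteration. I outline the case $\phi_1=\Ai(\tau)(1+\hbar a_1)$; the construction of $\phi_2=\Bi(\tau)(1+\hbar a_2)$ is parallel with $\Ai$ and $\Bi$ interchanged and the base point of the integral moved from $\zeta_0$ to $0$. Substituting the ansatz into \eqref{eq:Airy} and using $-\hbar^2\ddot\Ai(\tau)=\zeta\Ai(\tau)$ (which follows from $\Ai''(\tau)=\tau\Ai(\tau)$ together with the chain rule under $\tau=-\hbar^{-2/3}\zeta$) reduces the equation to
\[
\frac{d}{d\zeta}\bigl(\Ai^2(\tau)\,\dot a_1\bigr)=-\hbar^{-1}\tilde V(\zeta,E)\,\Ai^2(\tau)\,\bigl(1+\hbar a_1\bigr).
\]
Integrating twice with vanishing initial conditions at $\zeta=\zeta_0$ and using that the Wronskian of $(\Ai(\tau),\Bi(\tau))$ in $\zeta$ equals $-\hbar^{-2/3}/\pi$, this becomes
\[
a_1(\zeta)=\pi\hbar^{-1/3}\!\int_{\zeta_0}^\zeta\frac{\Ai(\tau_\eta)\bigl[\Bi(\tau_\eta)\Ai(\tau)-\Bi(\tau)\Ai(\tau_\eta)\bigr]}{\Ai(\tau)}\,\tilde V(\eta,E)\bigl(1+\hbar a_1(\eta)\bigr)\,d\eta,
\]
where $\tau_\eta:=-\hbar^{-2/3}\eta$. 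The base point $\zeta_0$ (rather than $0$) is forced by the requirement that the exponentially growing factor $\Bi(\tau)/\Ai(\tau)\sim 2e^{\frac43\tau^{3/2}}$ be absorbed by the Airy weight $\Ai(\tau_\eta)^2$ in the integrand, which works because $\tau_\eta\ge\tau$ throughout the range of integration.

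The main work is estimating the Volterra kernel $K(\zeta,\eta)$. I split $[\zeta_0,\zeta]$ into the asymptotic zone $\tau_\eta\gtrsim 1$ (equivalently $|\eta|\gtrsim\hbar^{2/3}$) and the transition zone $|\tau|,|\tau_\eta|\les 1$. In the asymptotic zone the standard Airy asymptotics $\Ai(\tau)\sim\tfrac{1}{2\sqrt\pi}\tau^{-1/4}e^{-\frac23\tau^{3/2}}$ and $\Bi(\tau)\sim\tfrac{1}{\sqrt\pi}\tau^{-1/4}e^{\frac23\tau^{3/2}}$ from Chapter 11 of \cite{Olver} yield
\[
|K(\zeta,\eta)|\les\la\tau_\eta\ra^{-1/2}\,|\tilde V(\eta,E)|\,\bigl(1+e^{-\frac43(\tau_\eta^{3/2}-\tau^{3/2})}\bigr),
\]
so the exponential factor in $\Bi(\tau)/\Ai(\tau)$ is absorbed by $\Ai(\tau_\eta)^2$. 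In the transition zone the Airy functions are uniformly bounded above and below, so $|K|\les|\tilde V|$. Combining with the bounds on $\tilde V$ from Lemma~\ref{lem:zeta2} --- $|\tilde V|\les\la\zeta\ra^{-2}$ for $\zeta\ge-1$ and the decomposition \eqref{eq:vtilde_rep} for $\zeta\le-1$ --- the majorant
\[
\Phi(\zeta):=\hbar^{-1/3}\int_{\zeta_0}^\zeta|K(\zeta,\eta)|\,d\eta
\]
satisfies $\Phi(\zeta)\les\min[\hbar^{1/3}\la\tau\ra^{1/2},\,1]$ uniformly in $(E,\hbar)\in(0,E_0)\times(0,\hbar_0)$. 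The $\hbar^{1/3}$ improvement for small $\tau$ reflects the correct scaling of the integrand against the transition-zone length $\hbar^{2/3}$, while saturation at $1$ for $\tau\gtrsim\hbar^{-2/3}$ comes from the asymptotic-zone contribution.

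Picard iteration $a_1^{(0)}\equiv 0$, $a_1^{(s+1)}(\zeta)=\pi\hbar^{-1/3}\int_{\zeta_0}^\zeta K(\zeta,\eta)(1+\hbar a_1^{(s)})\,d\eta$ then produces the solution via the inductive estimate $|a_1^{(s+1)}-a_1^{(s)}|\les(C\hbar)^s\Phi(\zeta)$, convergent for $\hbar<\hbar_0$ small. The $E$-derivative bound follows by differentiating the Volterra equation in $E$: the only $E$-dependence is through $\tilde V(\eta,E)$ (the $\zeta_0(E)$ boundary contribution vanishes because the integrand does), and each $\partial_E$ on $\tilde V$ costs a factor $E^{-1}$ by Lemma~\ref{lem:zeta2}, which the iteration absorbs. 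The $\zeta$-derivative bound follows from $\dot a_1(\zeta)=-\hbar^{-1}\Ai^{-2}(\tau)\int_{\zeta_0}^\zeta\tilde V(\eta,E)\,\Ai^2(\tau_\eta)(1+\hbar a_1)\,d\eta$: splitting at $\zeta=-1$ and using $\int_\tau^\infty\Ai^2(s)\,ds\sim\tau^{-1/2}\Ai^2(\tau)$ for large $\tau$ produces the two indicator-function terms in \eqref{eq:aj_est}.

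The hardest part is the precise exponential bookkeeping in the asymptotic-zone kernel estimate together with matching the two structurally distinct representations of $\tilde V$ across $\zeta=-1$ from Lemma~\ref{lem:zeta2}. The cancellation of $e^{\frac43\tau^{3/2}}$ from $\Bi(\tau)/\Ai(\tau)$ against $\Ai(\tau_\eta)^2$ is the algebraic heart of Langer's method and works uniformly in $E$ precisely because the modification $V\to V_0$ ensures $\tilde V$ decays at the correct algebraic rate, so that the Volterra majorant is uniformly controlled.
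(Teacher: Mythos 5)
Your construction of $\phi_2$ matches the paper's (Volterra equation for $a_2$ with base point $0$, Picard iteration against the $\tilde V$ bounds of Lemma~\ref{lem:zeta2}). But your treatment of $\phi_1$ is genuinely different from the paper's, and it contains a gap that invalidates the claimed estimate near $\zeta=0$.

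\textbf{Main gap.} You set up an independent Volterra equation for $a_1$ with base point $\zeta_0$, so that
\[
a_1(\zeta)=\pi\hbar^{-1/3}\int_{\zeta_0}^{\zeta}\Bigl[\Ai^2(\tau_\eta)\tfrac{\Bi}{\Ai}(\tau)-\Bi(\tau_\eta)\Ai(\tau_\eta)\Bigr]\tilde V(\eta,E)(1+\hbar a_1)\,d\eta.
\]
At $\zeta=0$ ($\tau=0$, $\tfrac{\Bi}{\Ai}(0)=\sqrt3$) the bracket is $\Ai(\tau_\eta)\bigl[\sqrt3\,\Ai(\tau_\eta)-\Bi(\tau_\eta)\bigr]$, which vanishes linearly as $\tau_\eta\to0$ but tends to $-\Bi\Ai(\tau_\eta)\sim-\tfrac{1}{2\pi}\tau_\eta^{-1/2}$ for $\tau_\eta\gtrsim1$: there is no $\hbar$-smallness in the asymptotic zone. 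Consequently, taking e.g.\ the model bound $|\tilde V|\lesssim\la\eta\ra^{-2}$ on $[-1,0]$,
\[
\Phi(0)=\hbar^{-1/3}\int_{\zeta_0}^{0}|K(0,\eta)|\,d\eta
\gtrsim\int_{-1}^{-\hbar^{2/3}}|\eta|^{-1/2}\,d\eta\sim1,
\]
not $\hbar^{1/3}$. Your stated majorant bound $\Phi(\zeta)\les\min[\hbar^{1/3}\la\tau\ra^{1/2},1]$ therefore fails at $\zeta\to0$, so your $a_1$ need not satisfy the proposition's bound $|a_1(0)|\les\hbar^{1/3}$; one only gets $|a_1(0)|\les1$. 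The intuition you invoke --- that the $\hbar^{1/3}$ improvement for small $\tau$ comes from the transition-zone length --- is correct for $a_2$ (where the integral runs only from $0$ to $\zeta$, a short interval when $\zeta$ is small), but not for $a_1$, because $\int_{\zeta_0}^{\zeta}$ traverses the entire asymptotic zone regardless of how small $|\zeta|$ is. This is precisely why the paper does \emph{not} set up an independent Volterra equation for $a_1$; it instead obtains $\phi_1$ from $\phi_2$ by reduction of order, $\phi_1=\pi\hbar^{-2/3}\phi_2\int_{-\infty}^{\zeta}\phi_2^{-2}$ (after extending $\phi_2$ past $\zeta_0$), and then peels off the exact Airy identity $\pi\hbar^{-2/3}\Bi(\tau)\int_{-\infty}^{\zeta}\Bi^{-2}(\tau_\eta)d\eta=\Ai(\tau)$. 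The residual $\tilde a_1$ is driven by $\tilde a_2$, which is already $\les\min[\hbar^{1/3}\la\tau\ra^{1/2},1]$, not by $\tilde V$ (which is $O(1)$); it is this propagated smallness that yields the $\hbar^{1/3}$ bound at $\zeta=0$.

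\textbf{Secondary gap.} You assert that the $\zeta_0(E)$ boundary term produced when differentiating your Volterra equation in $E$ ``vanishes because the integrand does.'' That is not the case: the integrand at $\eta=\zeta_0$ equals $K(\zeta,\zeta_0)(1+\hbar a_1(\zeta_0))=K(\zeta,\zeta_0)$ (using $a_1(\zeta_0)=0$), and $K(\zeta,\zeta_0)\ne0$ for $\zeta\ne\zeta_0$. What saves the day is that $K(\zeta,\zeta_0)$ contains a factor $\Ai^2(\tau_{\zeta_0})\sim e^{-\frac43\hbar^{-1}|\zeta_0|^{3/2}}$, which together with $|\partial_E\zeta_0|\sim E^{-1}$ gives a term $\sim E^{-1}E^{c/\hbar}$ that is negligible for $\hbar<\hbar_0$ small; but this needs to be said and tracked through the iteration, not waved away. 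The paper avoids the problem entirely since its integrals are anchored at $0$ and $-\infty$, both $E$-independent.

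In summary, the $\phi_2$ part of your argument is essentially the paper's, but your $\phi_1$ construction takes a different route that, as written, does not establish the sharp $\min[\hbar^{1/3}\la\tau\ra^{1/2},1]$ bound for $a_1$ near $\zeta=0$.
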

\begin{proof}
Let $\phi_{1,0}(\zeta,\hbar):=\Ai(\tau)$ and
$\phi_{2,0}(\zeta,\hbar):=\Bi(\tau)$. We seek a basis of the form
\[
\phi_j(\zeta)=\phi(\zeta,\hbar,E) = \phi_{j,0}(\zeta,\hbar)(1+ h
a_{j}(\zeta,\hbar,E))
\] for $\zeta\le0$
This representation is meaningless for $\zeta>0$ since $\phi_{j,0}$
have real zeros there. On the other hand, on $\zeta\le 0$ they do
not vanish.  We obtain the
equation
\begin{equation}
  \label{eq:mult_ODE} (\phi_{j,0}^2 \dot a_j)^{\dot{}} = - \frac{1}{\hbar } \tilde V \phi_{j,0}^2
(1+ \hbar a_{j})
\end{equation} for $j=1,2$ where $\dot{\;}=\partial_\zeta$. A solution of~\eqref{eq:mult_ODE} on
$\zeta\le0$ is given by, with $a_2(\zeta)=a_2(\zeta,\hbar,E)$,
\begin{align}
  a_2(\zeta) &:= -\frac{1}{\hbar } \int_\zeta^0
\phi_{2,0}^2(\eta,\hbar) \int_{\zeta}^{\eta}
\phi_{2,0}^{-2}(\tilde\eta,\hbar)\,d\tilde\eta\,
\tilde V(\eta,E)
(1+ \hbar a_{2}(\eta))\, d\eta \nonumber \\
&= -\hbar^{\frac13}\int_{0}^{-\hbar^{-\frac23}\zeta} \Bi^2(u)
\Big[\int_u^{-\hbar^{-\frac23}\zeta} \Bi^{-2}(v)\, dv\Big]
\tilde V(-\hbar^{\frac23}u,E) (1+ \hbar a_{2}(\hbar^{\frac23}u))\, du
\label{eq:a2}
\end{align}
This solution is unique with the property that $a_2(0)=\dot a_2(0)=0$.
Recall the asymptotic behavior, see~\cite{Olver},
\[\begin{split}
\Bi(x) &= \pi^{-\frac12} x^{-\frac14} e^{\frac23 x^{\frac32}}
\big[1+O(x^{-\frac32})\big] \text{\ \ as\ \ } x\to\infty
\\\Bi(x)&\ge\Bi(0) >0\quad \forall\;x\ge0 \\
\Ai(x) &= \frac12\pi^{-\frac12} x^{-\frac14} e^{-\frac23
x^{\frac32}} \big[1+O( x^{-\frac32})\big] \text{\ \ as\ \ }
x\to\infty
\\\Ai(x)& >0\quad \forall\;x\ge0
\end{split}\]
Also note the useful fact, valid for any $0\le x_0<x_1$,
\begin{equation}\label{eq:AiBiint}
\int_{x_0}^{x_1} \Bi^{-2}(y)\, dy = \pi^{-1}
\Big(\frac{\Ai(x_0)}{\Bi(x_0)} - \frac{\Ai(x_1)}{\Bi(x_1)}\Big)
\end{equation}
which implies that
\[
\Big| \Bi^2(x_0) \int_{x_0}^{x_1} \Bi^{-2}(y)\, dy  \Big| \les \la
x_0\ra^{-\frac12}
\]
The leading term in~\eqref{eq:a2}, i.e.,
\[
a_{2,0}(\zeta,E,\hbar) :=
-\hbar^{\frac13}\int_{0}^{-\hbar^{-\frac23}\zeta} \Bi^2(u)
\Big[\int_u^{-\hbar^{-\frac23}\zeta} \Bi^{-2}(v)\, dv\Big]\,
\tilde V(-\hbar^{\frac23}u,E) \, du
\]
therefore satisfies the bound (dropping $E,\hbar$ from $a_{2,0}$ for simplicity)
\[
|a_{2,0}(\zeta)| \les \hbar^{\frac13}\int_{0}^{-\hbar^{-\frac23}\zeta}
\la u\ra^{-\frac12}  |\tilde V(-\hbar^{\frac23}u,E)| \, du
\]
We now use the estimates from  Lemma~\ref{lem:zeta2} to bound the right-hand side.
If $-1\le\zeta\le0$, then this yields
\begin{equation}\label{eq:a20_bd}
|a_{2,0}(\zeta)|\lesssim  \hbar^{\frac13}
\int_{0}^{-\hbar^{-\frac23}\zeta} \la u\ra^{-\frac12}\, du
\lesssim \hbar^{\frac13} \la \hbar^{-\frac23}\zeta\ra^{\frac12}
\end{equation} 
On the other hand, if $\zeta_0:=\zeta(0,E)\le\zeta\le -1$, then we obtain
\begin{align}
|a_{2,0}(\zeta)| &\lesssim  \hbar^{\frac13}
\int_{0}^{\hbar^{-\frac23}} \la u\ra^{-\frac12}\, du \label{eq:parta}\\
&\quad +
\hbar^{\frac13} \int_{\hbar^{-\frac23}}^{-\hbar^{-\frac23}\zeta}
u^{-\frac12}\Big[(\hbar^{\frac23} u)^{-2} + q^{-1}(-\hbar^{\frac23}u)
(E+ \la z\ra^{-3})\Big]\, du \label{eq:partb}
\end{align}
The variable $z$ appearing in~\eqref{eq:partb} is tied to the integration variable $u$ via $-\hbar^{_\frac23}u=\zeta(z,E)$,
see Lemma~\ref{lem:langer}.
The integral in~\eqref{eq:parta} and the first term inside the brackets in~\eqref{eq:partb} contribute
\[
 \hbar^{\frac13}
\int_{0}^{\hbar^{-\frac23}} \la u\ra^{-\frac12}\, du + \hbar^{\frac13} \int_{\hbar^{-\frac23}}^{-\hbar^{-\frac23}\zeta}
u^{-\frac12} (\hbar^{\frac23} u)^{-2}\, du \les 1
\]
Next, with $\zeta(x_2,E)=-1$, and $v=\zeta(z,E)$, $dv=\sqrt{q}dz$,
\begin{align*}
 \hbar^{\frac13} \int_{\hbar^{-\frac23}}^{-\hbar^{-\frac23}\zeta}
u^{-\frac12} q^{-1}(\hbar^{\frac23}u) E\, du &= E \int_1^{-\zeta}
v^{-\frac12} q^{-1}(v) \, dv \\
& = E \int_{x}^{x_2}
Q_0^{-\frac12}(z,E)\, dz
\les E \int_{x}^{x_1} z \, dz \\ &\les E x_1(E)^2 \les 1
\end{align*}
Finally,  using that $\frac{dv}{dz}=\sqrt{q}$ once again one obtains
\begin{align*}
 \hbar^{\frac13} \int_{\hbar^{-\frac23}}^{-\hbar^{-\frac23}\zeta}
u^{-\frac12} q^{-1}(\hbar^{\frac23}u) z^{-3}\, du &=  \int_1^{-\zeta}
v^{-\frac12} q^{-1}(v) \la z\ra^{-3}\, dv \\
& = \int_{x}^{x_2}
Q_0^{-\frac12}(z,E)\la z\ra^{-3}\, dz \\
&\les  \int_{x}^{x_1} \la z\ra^{-2} \, dz \les  1
\end{align*}
In summary\footnote{Had we used $V$ instead of $V_0$ in our definition of~$\zeta$, then  we would be
losing a factor of $\log E$ at this point. Indeed, for the case of  $V$ we would need to replace $E+\la z\ra^{-3}$
 by the strictly weaker $\la z\ra^{-2}$
in~\eqref{eq:partb} which then leads to the logarithmically divergent integral $\int_{x}^{x_1} \la z\ra^{-1} \, dz$.},
\[
|a_{2,0}(\zeta,E,\hbar)| \les \min(1, \hbar^{\frac13} \la
\hbar^{-\frac23}\zeta\ra^{\frac12})
\]
uniformly in $\zeta\in[\zeta_0,0]$, $0<E<E_0$, and $0<\hbar<\hbar_0$. Due to
the linear nature of~\eqref{eq:a2},  a contraction argument now
yields the same bound for $a_2$; in fact, due to the derivative
bounds of Lemma~\ref{lem:zeta2}, we obtain the more general estimate
\[
|\partial_E^k a_{2}(\zeta,E,\hbar)| \le C_k\, E^{-k} \min(1, \hbar^{\frac13}
\la \hbar^{-\frac23}\zeta\ra^{\frac12})\quad \forall \;k\ge0
\]
uniformly in the parameters. As for the first derivative in $\zeta$,
observe that
\begin{equation}\label{eq:aminprim}
\dot a_2(\zeta) = \frac{\hbar^{-1}}{
\phi_{2,0}^{2}(\zeta,\hbar)} \int_\zeta^0
\phi_{2,0}^2(\eta,\hbar)
 \tilde
V(\eta,E) (1+ \hbar a_{2}(\eta))\, d\eta
\end{equation}
whence, for all $-1\le\zeta\le0$,
\begin{align*}
|\dot a_2(\zeta,E,\hbar)| & \les \hbar^{-1}\,
\phi_{2,0}^{-2}(\zeta,\hbar) \int_\zeta^0 \phi_{2,0}^2(\eta,\hbar)
|\tilde
V(\eta,E)|\, d\eta  \\
&\les \hbar^{-\frac13} \Bi^{-2}(-\hbar^{-\frac23}\zeta) \int_{0}^{-\hbar^{-\frac23}\zeta} \Bi^2(u)\, du \\
&\les \hbar^{-\frac13} \la \hbar^{-\frac23}\zeta\ra^{\frac12} e^{-\frac43 \hbar^{-1} |\zeta|^{\frac32}} \int_{0}^{-\hbar^{-\frac23}\zeta} \la u\ra^{-\frac12} e^{\frac43 u^{\frac32}}\, du\\
&\les  \hbar^{-\frac13} \la \hbar^{-\frac23}\zeta\ra^{\frac12} e^{-\frac43 \hbar^{-1} |\zeta|^{\frac32}} \int_{0}^{\hbar^{-1}|\zeta|^{\frac32}} \la v\ra^{-\frac13} |v|^{-\frac13}e^{\frac43 v}\, dv\\
&\les \hbar^{-\frac13} \la \hbar^{-\frac23}\zeta\ra^{\frac12} e^{-\frac43 \hbar^{-1} |\zeta|^{\frac32}} \la \hbar^{-1}|\zeta|^{\frac32}\ra^{-\frac23} e^{\frac43 \hbar^{-1}|\zeta|^{\frac32}}\\
&\les \hbar^{-\frac13} \la \hbar^{-\frac23}\zeta\ra^{-\frac12}
\end{align*}
If
$\zeta_0\le \zeta\le -1$, then
\begin{align}
&|\dot a_2(\zeta,E,\hbar)|  \les \hbar^{-1}\,
\phi_{2,0}^{-2}(\zeta,\hbar) \int_\zeta^0 \phi_{2,0}^2(\eta,\hbar)
|\tilde
V(\eta,E)|\, d\eta  \nonumber \\
&\les \hbar^{-\frac13} \Bi^{-2}(-\hbar^{-\frac23}\zeta)
\int_{0}^{\hbar^{-\frac23}} \Bi^2(u)\, du \label{eq:teil1}\\
&\quad + \hbar^{-\frac13} \Bi^{-2}(-\hbar^{-\frac23}\zeta)
\int_{\hbar^{-\frac23}}^{-\hbar^{-\frac23}\zeta}\Bi^2(u)\Big[ (\hbar^{\frac23}
u)^{-2} +  \frac{E+\la z\ra^{-3}}{q(-\hbar^{\frac23}u)} \Big]\, du \label{eq:teil2}
\end{align}
where $z$ has the same meaning as in~\eqref{eq:partb}.
First, note that \eqref{eq:teil1} is rapidly (in fact, super
exponentially) decreasing as $|\zeta|$ increases: $
|\eqref{eq:teil1}|\less e^{-|\zeta|^{\frac32}}.$  Second, we bound
the first part of~\eqref{eq:teil2} by
\begin{align*}
& \hbar^{-\frac13} \Bi^{-2}(-\hbar^{-\frac23}\zeta)
\int_{\hbar^{-\frac23}}^{-\hbar^{-\frac23}\zeta}\Bi^2(u) (\hbar^{\frac23}
u)^{-2} \, du \nonumber \\
&\les \hbar^{-\frac53} \Bi^{-2}(-\hbar^{-\frac23}\zeta)
\int_{\hbar^{-\frac23}}^{-\hbar^{-\frac23}\zeta} u^{-\frac52} e^{\frac43
u^{\frac32}} \, du  \nonumber\\
&\les \hbar^{-\frac53} (\hbar^{-\frac23}|\zeta|)^{-\frac52} \les
|\zeta|^{-\frac52}
\end{align*}
The contribution to~\eqref{eq:teil2} involving $E$ is
\begin{align}
& \hbar^{-\frac13} \Bi^{-2}(-\hbar^{-\frac23}\zeta)
\int_{\hbar^{-\frac23}}^{-\hbar^{-\frac23}\zeta}\Bi^2(u)
(q(\hbar^{\frac23}u))^{-1} E \, du \nonumber\\
&\les E \hbar^{-\frac13} (\hbar^{-\frac23}|\zeta|)^{\frac12}
e^{-\frac{4}{3\hbar}|\zeta|^{\frac32}}
\int_{\hbar^{-\frac23}}^{-\hbar^{-\frac23}\zeta}
u^{-\frac12} e^{\frac43 u^{\frac32}} (q(\hbar^{\frac23} u))^{-1}\, du \label{eq:myst_form1} \\
&\les E\hbar^{-1} |\zeta|^{\frac12}
e^{-\frac{4}{3\hbar}|\zeta|^{\frac32}}\int_1^{-\zeta} z\,
e^{\frac{4}{3\hbar} v^{\frac32}}\, dz \les |\zeta|^{\frac12} E\la x\ra^2
\label{eq:myst_form2}
\end{align}
To pass from \eqref{eq:myst_form1} to \eqref{eq:myst_form2}, we
 substituted $u= \hbar^{-\frac23}v$ and then changed variables $dv =
 \sqrt{q(-v)}\, dz$ followed by $v q(-v)= Q_0(z)$; in~\eqref{eq:myst_form2} the relation
 between $v$ and~$z$, as well as $\zeta$ and $x$, is given by~\eqref{eq:zeta}. i..e, $v=\zeta(z,E)$, $\zeta=\zeta(x,E)$. To pass to the
 final inequality in~\eqref{eq:myst_form2} we integrate by parts so as to gain a factor of~$\hbar$:
\begin{align*}
 \int_1^{-\zeta} z\, e^{\frac{4}{3\hbar} v^{\frac32}}\, dz &\les
\int_x^{x_2} z\, e^{\frac{2}{\hbar } \int_{z}^{x_1}
\sqrt{Q_0(\eta)}\, d\eta} \, dz \\
&\les \hbar \la x\ra^2 e^{\frac{2}{\hbar } \int_{x}^{x_1}
\sqrt{Q_0(\eta)}\, d\eta} = \hbar\la x \ra^2
e^{\frac{4}{3\hbar}|\zeta|^{\frac32}}
\end{align*}
where $\zeta(x_2,\hbar)=-1$.   Finally, we turn the contribution of $\la z\ra^{-3}$ in~\eqref{eq:teil2}.
Using the same conventions regarding the relation between the variables this contribution is of the form
\begin{align*}
& \hbar^{-\frac13} \Bi^{-2}(-\hbar^{-\frac23}\zeta)
\int_{\hbar^{-\frac23}}^{-\hbar^{-\frac23}\zeta}\Bi^2(u)
(q(\hbar^{\frac23}u))^{-1} \la z\ra^{-3} \, du \nonumber\\
&\les  \hbar^{-\frac13} (\hbar^{-\frac23}|\zeta|)^{\frac12}
e^{-\frac{4}{3\hbar}|\zeta|^{\frac32}}
\int_{\hbar^{-\frac23}}^{-\hbar^{-\frac23}\zeta}
u^{-\frac12} e^{\frac43 u^{\frac32}} (q(\hbar^{\frac23} u))^{-1}\la z\ra^{-3} \, du  \\
&\les \hbar^{-1} |\zeta|^{\frac12}
e^{-\frac{4}{3\hbar}|\zeta|^{\frac32}}\int_1^{-\zeta}\la z\ra^{-2}\,
e^{\frac{4}{3\hbar} v^{\frac32}}\, dz \les |\zeta|^{\frac12} \la x\ra^{-2}
\end{align*}
The final inequality here is based on the same kind of integration by parts as before:
\begin{align*}
 \int_1^{-\zeta} \la z\ra^{-2}\, e^{\frac{4}{3\hbar} v^{\frac32}}\, dz &\les
\int_x^{x_2} \la z\ra^{-2}\, e^{\frac{2}{\hbar } \int_{z}^{x_1}
\sqrt{Q_0(\eta)}\, d\eta} \, dz \\
&\les \hbar \la x\ra^{-2} e^{\frac{2}{\hbar } \int_{x}^{x_1}
\sqrt{Q_0(\eta)}\, d\eta} = \hbar\la x \ra^{-2}
e^{\frac{4}{3\hbar}|\zeta|^{\frac32}}
\end{align*}
with $x_2$ as above.
In conclusion, we estimate the
contributions of~\eqref{eq:teil1} and~\eqref{eq:teil2} by
\begin{equation}\label{eq:erst_ableit}
|\dot a_2(\zeta,E,\hbar)| \les \hbar^{-\frac13} \la
\hbar^{-\frac23}\zeta\ra^{-\frac12}\chi_{[-1\le \zeta\le0]} +
|\zeta|^{\frac12}  \chi_{[ \zeta_0\le \zeta\le -1]}
\end{equation}
as claimed.

\noindent
Next, we turn to $\phi_1(\zeta,E)$ (dropping $\hbar$ for simplicity). As usual we make the
reduction ansatz
\[
\phi_1(\zeta,E) = g(\zeta,E) \phi_2(\zeta,E)
\]
which leads to the equation $(\phi_2^2 \dot g)^{\dot{\;}}=0$.  At this point
it is convenient to extend the solutions $\phi_2$, which are originally
defined on the interval $\zeta(0,E)\le \zeta\le0$, to all of $\zeta\le0$.
This is done in such a way that the bounds~\eqref{eq:aj_est} remain valid for $\zeta\le\zeta_0$ without,
however, making any reference to the ODE~\eqref{eq:Airy} for those~$\zeta$.
We can now solve for
for $g$ in the form
\[
 \phi_1(\zeta,E) = \pi\hbar^{-\frac23}\phi_2(\zeta,E)
 \int_{-\infty}^\zeta \phi_2(\eta,E)^{-2}\, d\eta
\]
 Inserting our representation
of~$\phi_2$ into this formula yields
\begin{align*}
\phi_1(\zeta,E) &=  \pi\hbar^{-\frac23}
  \Bi(-\hbar^{-\frac23}\zeta) [1+\hbar a_2(\zeta,E)]  \int_{-\infty}^\zeta
  \Bi^{-2}(-\hbar^{-\frac23}\eta) [1+\hbar a_2(\eta,E)]^{-2} \,
  d\eta
\end{align*}
First, we note that from~\eqref{eq:AiBiint},
\[
 \pi\hbar^{-\frac23}
  \Bi(-\hbar^{-\frac23}\zeta)  \int_{-\infty}^\zeta
  \Bi^{-2}(-\hbar^{-\frac23}\eta) \,
  d\eta = \Ai(-\hbar^{-\frac23}\zeta)
\]
Second, $[1+\hbar a_2]^{-2}=1+\hbar \tilde a_2$ where $\tilde a_2$ satisfies the same bounds as $a_2$
(since $|a_2|\les 1$).  Thus, inspection of our formula for $\phi_1$ reveals that $a_1=\pi(a_2 + \tilde a_1)$ where
\[
\begin{split}
 \tilde a_1(\zeta) &:= \hbar^{-\frac23}
  \frac{\Bi}{\Ai}(-\hbar^{-\frac23}\zeta) [1+\hbar a_2(\zeta,E)]  \int_{-\infty}^\zeta
  \Bi^{-2}(-\hbar^{-\frac23}\eta) \tilde a_2(\eta,E) \,
  d\eta \\
& =
  \frac{\Bi}{\Ai}(-\hbar^{-\frac23}\zeta) [1+\hbar a_2(\zeta,E)]  \int_{-\hbar^{-\frac23}\zeta}^{\infty}
  \Bi^{-2}(\eta) \tilde a_2(\hbar^{\frac23}\eta,E) \,
  d\eta
\end{split}
\]
Furthermore, from~\eqref{eq:AiBiint},
\begin{align}
 &\pi\int_{-\hbar^{-\frac23}\zeta}^{\infty}
  \Bi^{-2}(\eta) \tilde a_2(\hbar^{\frac23}\eta,E) \,
  d\eta = -\int_{-\hbar^{-\frac23}\zeta}^{\infty} \tilde a_2(-\hbar^{\frac23}\eta,E)\,d\Big[\frac{\Ai}{\Bi}(\eta)\Big] \nonumber\\
& = \frac{\Ai}{\Bi}(-\hbar^{-\frac23}\zeta) \tilde a_2(\zeta,E) - \hbar^{\frac23} \int_{-\hbar^{-\frac23}\zeta}^{\infty} \frac{\Ai}{\Bi}(\eta) (\partial_1 {\tilde a_2})(-\hbar^{\frac23}\eta,E)\,d\eta \label{eq:pain}
\end{align}
where $\partial_1$ refers to the derivative in the first variable.  The first term in~\eqref{eq:pain} makes an admissible
contribution to~$a_1$ whereas the second one is controlled as follows:
\begin{align*}
 & \hbar^{\frac23}\frac{\Bi}{\Ai}(-\hbar^{-\frac23}\zeta)  \int_{-\hbar^{-\frac23}\zeta}^{\infty} \frac{\Ai}{\Bi}(\eta) \big|(\partial_1 {\tilde a_2})(-\hbar^{\frac23}\eta,E)\big|\,d\eta \\
&\les \hbar^{\frac23} e^{\frac{4}{3\hbar}\la\zeta\ra^{\frac32}} \int_{-\hbar^{-\frac23}\zeta}^{\infty}  e^{-\frac{4}{3\hbar}\la\eta\ra^{\frac32}} \Big[\hbar^{-\frac13}\la\eta\ra^{-\frac12}\chi_{[-1\le\hbar^{\frac23}\eta\le0]} +
|\hbar^{\frac23}\eta|^{\frac12} \chi_{[\le\hbar^{\frac23}\eta\le-1]} \Big] \,d\eta \\
&\les \hbar^{\frac13}\la \hbar^{-\frac23}\zeta\ra^{-1}  \chi_{[-1\le \zeta\le0]} + \hbar \chi_{[\zeta_0\le\zeta\le-1]}
\les \hbar^{\frac13}\la \hbar^{-\frac23}\zeta\ra^{\frac12}  \chi_{[-1\le \zeta\le0]} + \chi_{[\zeta_0\le\zeta\le-1]}
\end{align*}
as desired.  For the derivative in $\zeta$,
\begin{align*}
 \partial_\zeta \tilde a_1(\zeta)  & =  -\pi\hbar^{-\frac23}
  \Ai^{-2}(-\hbar^{-\frac23}\zeta) [1+\hbar a_2(\zeta,E)]  \int_{-\hbar^{-\frac23}\zeta}^{\infty}
  \Bi^{-2}(\eta) \tilde a_2(\hbar^{\frac23}\eta,E) \,
  d\eta  \\
&\quad + \hbar^{-\frac23} (\Ai\Bi)^{-1}(-\hbar^{-\frac23}\zeta) [1+\hbar a_2(\zeta,E)]
\tilde a_2(\zeta,E) \\
& \quad + \hbar \frac{\Bi}{\Ai}(-\hbar^{-\frac23}\zeta) \dot a_2(\zeta,E)  \int_{-\hbar^{-\frac23}\zeta}^{\infty}
  \Bi^{-2}(\eta) \tilde a_2(\hbar^{\frac23}\eta,E) \,
  d\eta
\end{align*}
Using \eqref{eq:pain} we remove the dangerous $\hbar^{-\frac23}$ terms from the first two lines here whence
\begin{align}
 \partial_\zeta \tilde a_1(\zeta)  & = \Ai^{-2}(-\hbar^{-\frac23}\zeta) [1+\hbar a_2(\zeta,E)] \int_{-\hbar^{-\frac23}\zeta}^{\infty} \frac{\Ai}{\Bi}(\eta) (\partial_1 {\tilde a_2})(-\hbar^{\frac23}\eta,E)\,d\eta \nonumber\\
& \quad + \hbar \frac{\Bi}{\Ai}(-\hbar^{-\frac23}\zeta) \dot a_2(\zeta,E)  \int_{-\hbar^{-\frac23}\zeta}^{\infty}
  \Bi^{-2}(\eta) \tilde a_2(\hbar^{\frac23}\eta,E) \,
  d\eta  \label{eq:schl}
\end{align}
The contribution by the first integral here is treated as the integral in~\eqref{eq:pain} and is bounded
by
\begin{align*}
 &\les \la \hbar^{-\frac23}\zeta\ra^{\frac12} \Big[ \hbar^{-\frac13}\la \hbar^{-\frac23}\zeta\ra^{-1}  \chi_{[-1\le \zeta\le0]} + \hbar^{\frac13} \chi_{[\zeta_0\le\zeta\le-1]} \Big] \\
&\sim \hbar^{-\frac13}\la \hbar^{-\frac23}\zeta\ra^{-\frac12}  \chi_{[-1\le \zeta\le0]} +  |\zeta|^{\frac12} \chi_{[\zeta_0\le\zeta\le-1]}
\end{align*}
which is exactly as needed. Finally, the contribution of~\eqref{eq:schl} is bounded by
\begin{align*}
 &\les \hbar \Big[ \chi_{[-1\le \zeta\le0]} + |\zeta|^{\frac12} \chi_{[\zeta_0\le\zeta\le-1]} \Big] \\
&\sim \hbar \la\zeta\ra^{\frac12} \les  \hbar^{-\frac13}\la \hbar^{-\frac23}\zeta\ra^{-\frac12}  \chi_{[-1\le \zeta\le0]} +  |\zeta|^{\frac12} \chi_{[\zeta_0\le\zeta\le-1]}
\end{align*}
and we are done with the $k=0$ case of~\eqref{eq:aj_est} for $a_1$. However, since $E$ enters
into $a_1$ only through $a_2, \tilde a_2$ which do satisfy \eqref{eq:aj_est} for all $k\ge0$, we see that the previous estimates
carry over unchanged and provide the stated estimates for $\partial_E^k a_1(\zeta,E)$ and $\partial_E^k \partial_\zeta a_1(\zeta,E)$.
\end{proof}

\noindent We remark that the method employed in the previous proof does not extend easily to
derivatives $\partial_\zeta^\ell a_j$ with $\ell\ge2$ (that is, without losing excessive powers of $\hbar^{-1}$).
In principle, it is possible to treat $\ell=2$ by a similar method, but instead of a sharp cut-off  at $\zeta=0$ one needs to
use a smooth cut-off function in~\eqref{eq:a2}. However, the calculations  are quite involved and it is not clear
how to extend this approach systematically to higher~$\ell$ (the same comment applies to Proposition~\ref{prop:AiryII} below).
On the other hand, for the purposes of Theorem~\ref{thm:main}, as well as
for those of~\cite{SSS1} and~\cite{SSS2}, it suffices to treat the first derivative in~$\zeta$ (however, we do need many
derivatives relative to~$E$).
Next, we turn to $\zeta\ge0$ which requires an oscillatory basis.

\begin{prop}
  \label{prop:AiryII} Let $\hbar_0>0$ be small. In the range
$\zeta\ge 0$ a basis of solutions to~\eqref{eq:Airy} is given by
\begin{equation}\nonumber
\begin{aligned}
  \psi_1(\zeta,E;\hbar) &=
  (\Ai(\tau)+i\Bi(\tau)) [1+\hbar b_1(\zeta,E;\hbar)] \\
  \psi_2(\zeta,E;\hbar) &=
  (\Ai(\tau)-i\Bi(\tau)) [1+\hbar b_2(\zeta,E;\hbar)]
\end{aligned}
\end{equation}
 with
$\tau:=-\hbar^{-\frac23}\zeta$ and where
$b_1, b_2$  are smooth, complex-valued, and satisfy the bounds
for all $k\ge0$, and $j=1,2$
\begin{equation}\label{eq:bj_est}
\begin{aligned}
 | \partial_E^k\, b_j(\zeta,E;\hbar)| &\le C_{k}\, E^{-k} \la\zeta\ra^{-\frac32} \\
|\partial_\zeta\partial_E^k b_j(\zeta,E)| &\le C_{k}\, E^{-k} \hbar^{-\frac13} \la \hbar^{-\frac23} \zeta \ra^{-\frac12} \la \zeta\ra^{-2}
\end{aligned}
\end{equation}
uniformly in
the parameters $0<\hbar<\hbar_0$, $0<E<E_0$, $\zeta\ge0$.
\end{prop}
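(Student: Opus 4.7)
The strategy mirrors Proposition~\ref{prop:AiryI}: substitute the ansatz $\psi_j=\psi_{j,0}(1+\hbar b_j)$ with $\psi_{1,0}=\Ai(\tau)+i\Bi(\tau)$, $\psi_{2,0}=\Ai(\tau)-i\Bi(\tau)$, $\tau=-\hbar^{-2/3}\zeta$, into \eqref{eq:Airy}, use $-\hbar^2\ddot\psi_{j,0}=\zeta\psi_{j,0}$, and divide out, producing the multiplicative equation
\[
(\psi_{j,0}^{2}\,\dot b_j)\dot{\;}=-\hbar^{-1}\tilde V\,\psi_{j,0}^{2}(1+\hbar b_j),
\]
exactly as in \eqref{eq:mult_ODE}. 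Because we are now in the oscillatory regime $\zeta\ge0$ the functions $\psi_{j,0}$ have no zeros as complex functions, and the natural boundary condition is $b_j(\zeta)\to0$ as $\zeta\to\infty$. Accordingly we integrate \emph{from infinity}, obtaining the Volterra integral equation
\[
\dot b_j(\zeta)=\hbar^{-1}\psi_{j,0}^{-2}(\zeta)\int_\zeta^\infty \tilde V(\eta,E)\,\psi_{j,0}^{2}(\eta)\,(1+\hbar b_j(\eta))\,d\eta,\qquad
b_j(\zeta)=-\int_\zeta^\infty \dot b_j(\eta)\,d\eta.
\]

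\textbf{Main obstacle and how to handle it.} In contrast to Proposition~\ref{prop:AiryI}, the weight $\psi_{j,0}^{2}$ is now \emph{oscillatory}, so bounding the inner integral in modulus loses the crucial factor of~$\hbar$. The key is to exploit the oscillation. Using the standard asymptotic expansion of the Airy functions one has, for $\tau\to-\infty$, the representation $\psi_{j,0}^{2}(\zeta)=m^{2}(\tau)\,e^{\mp 2i\theta(\tau)}(1+O(|\tau|^{-3/2}))$ with $m(\tau)\sim\pi^{-1/2}|\tau|^{-1/4}$, $\theta(\tau)=\tfrac23|\tau|^{3/2}+\tfrac\pi4$, and $\frac{d}{d\zeta}\theta=\hbar^{-1}\sqrt{\zeta}$. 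Integration by parts in
\[
\int_\zeta^\infty \tilde V(\eta,E)\,\psi_{j,0}^{2}(\eta)\,d\eta
=\int_\zeta^\infty \tilde V(\eta,E)\,m^{2}(\tau(\eta))\,\frac{\mp 1}{2i\,\hbar^{-1}\sqrt{\eta}}\,\frac{d}{d\eta}e^{\mp 2i\theta}\,d\eta
\]
produces a boundary term plus a volume integral, each gaining a factor $\hbar\la\eta\ra^{-1/2}$ at the cost of derivatives falling on $\tilde V$, on $m^{2}$, and on $\hbar^{-1}\sqrt{\eta}$; since $|\partial_\zeta^\ell\tilde V|\les\la\zeta\ra^{-2-\ell}$ by Lemma~\ref{lem:zeta2} and the other derivatives are harmless, this yields $|\int_\zeta^\infty\tilde V\psi_{j,0}^2 d\eta|\les \hbar^{4/3}\la\zeta\ra^{-2}\la\hbar^{-2/3}\zeta\ra^{-1/2}$ at large $\zeta$, and simply $\les\hbar^{1/3}\min(1,\la\hbar^{-2/3}\zeta\ra^{-1/2})$ near $\zeta=0$. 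Multiplying by $|\psi_{j,0}^{-2}(\zeta)|\sim\hbar^{-1/3}\la\hbar^{-2/3}\zeta\ra^{1/2}$ and the $\hbar^{-1}$ in front gives precisely the target bound
\[
|\dot b_j(\zeta)|\les \hbar^{-1/3}\la\hbar^{-2/3}\zeta\ra^{-1/2}\la\zeta\ra^{-2},
\]
and integrating once from $\zeta$ to $\infty$ yields $|b_j(\zeta)|\les\la\zeta\ra^{-3/2}$ (the factor $\la\hbar^{-2/3}\zeta\ra^{-1/2}$ integrates to something $\les\hbar^{1/3}\la\hbar^{-2/3}\zeta\ra^{1/2}\la\zeta\ra^{-1}\les\la\zeta\ra^{-1/2}$, so no $\hbar$-loss occurs).

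\textbf{Contraction and derivatives.} With these bounds on the inhomogeneous term, the map
\[
T[b_j](\zeta):=-\int_\zeta^\infty\hbar^{-1}\psi_{j,0}^{-2}(\eta')\!\int_{\eta'}^\infty \tilde V\,\psi_{j,0}^{2}(1+\hbar b_j)\,d\eta\,d\eta'
\]
is a contraction (for $\hbar<\hbar_0$ small) on the Banach space of continuous functions on $[0,\infty)$ weighted by $\la\zeta\ra^{3/2}$, because the ``$\hbar b_j$'' term in the integrand introduces an extra factor of $\hbar$. The unique fixed point yields $b_j$ satisfying the $k=0$ bounds of \eqref{eq:bj_est}. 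The estimates for $\partial_E^k b_j$ follow by induction on $k$: differentiating the Volterra equation $k$ times in $E$ produces inhomogeneous terms involving $\partial_E^{k'}\tilde V$ with $k'\le k$, which by Lemma~\ref{lem:zeta2} satisfy the same spatial bounds as $\tilde V$ but with an additional factor $E^{-k'}$. Running the same integration-by-parts argument (and contraction) in each iteration preserves the spatial decay and accumulates exactly the factor $E^{-k}$, giving \eqref{eq:bj_est}. The main obstacle throughout is the oscillatory cancellation in the inner integral; once that is extracted via a single integration by parts, the remaining analysis is parallel to that in Proposition~\ref{prop:AiryI}.
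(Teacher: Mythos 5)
Your overall strategy is the one the paper actually uses: the same multiplicative Volterra setup as in Proposition~\ref{prop:AiryI}, with the cancellation in the oscillatory regime extracted by a single integration by parts against the phase $e^{\mp 2i\theta}$ before applying Gronwall/contraction. The only structural difference is the order: you estimate $\dot b_j$ first by integration by parts and then integrate to recover $b_j$, whereas the paper bounds $b_j$ directly via the closed-form antiderivative of $\psi_{j,0}^{-2}$ (the analogue of \eqref{eq:AiBiint} for $\Ai\pm i\Bi$, which already encodes the oscillatory cancellation without a stationary-phase argument) and only uses integration by parts for $\dot b_j$. Both orders close; your remark that ``bounding the inner integral in modulus loses the crucial factor of $\hbar$'' for $b_j$ itself is not quite right, since the magnitude bound $\bigl|(\Ai+i\Bi)^2(x_1)\int_{x_0}^{x_1}(\Ai+i\Bi)^{-2}\bigr|\les\la x_1\ra^{-1/2}$ suffices there.

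However, the magnitude bookkeeping in your sketch contains arithmetic errors that prevent the argument from actually delivering the stated bound on $\dot b_j$. First, $|\psi_{j,0}^{-2}(\zeta)|\sim\la\hbar^{-2/3}\zeta\ra^{1/2}$; the extra factor $\hbar^{-1/3}$ you inserted is spurious (note $\Ai(0),\Bi(0)$ are $O(1)$, so $\psi_{j,0}(0)^{-2}=O(1)$). Second, the boundary term of the integration by parts is $\tilde V(\zeta)\,m^2(\tau(\zeta))\cdot\frac{\hbar}{2\sqrt\zeta}$, which in terms of the uniform weights has magnitude $\sim\hbar^{2/3}\la\hbar^{-2/3}\zeta\ra^{-1}\la\zeta\ra^{-2}$ for $\zeta\gtrsim\hbar^{2/3}$, not the $\hbar^{4/3}\la\zeta\ra^{-2}\la\hbar^{-2/3}\zeta\ra^{-1/2}$ you wrote. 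If one substitutes your claimed quantities into $\hbar^{-1}|\psi_{j,0}^{-2}(\zeta)|\cdot(\text{inner integral})$ one gets $\la\zeta\ra^{-2}$, which for $\zeta\gtrsim 1$ is strictly weaker than the target $\hbar^{-1/3}\la\hbar^{-2/3}\zeta\ra^{-1/2}\la\zeta\ra^{-2}\sim\zeta^{-5/2}$; a $\la\zeta\ra^{-2}$ bound on $\dot b_j$ would only yield $|b_j|\les\la\zeta\ra^{-1}$, not the required $\la\zeta\ra^{-3/2}$. The correct accounting (boundary term cancels $\psi_{j,0}^{-2}(\zeta)$ against $m^2(\tau(\zeta))e^{\mp2i\theta}$ exactly, leaving $\hbar^{-1}\tilde V(\zeta)\cdot\frac{\hbar^{2/3}}{2}\la\hbar^{-2/3}\zeta\ra^{-1/2}$) does produce the target. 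Finally, you should phrase the oscillatory decomposition in the uniform form $\psi_{j,0}^2(\zeta)=e^{\frac{4i}{3}\la\hbar^{-2/3}\zeta\ra^{3/2}}\omega(\hbar^{-2/3}\zeta)$ with $|\omega(u)|\les\la u\ra^{-1/2}$, $|\omega'(u)|\les\la u\ra^{-3/2}$, rather than the large-$|\tau|$ asymptotics $m^2(\tau)\sim|\tau|^{-1/2}$, $\theta(\tau)=\frac23|\tau|^{3/2}+\frac\pi4$, since the latter breaks down near $\zeta=0$ while the bound must hold uniformly down to $\zeta=0$.
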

\begin{proof}
Let $\psi_{1,0}(\zeta;\hbar):=(\Ai+i\Bi)(\tau)$ and
$\psi_{2,0}(\zeta;\hbar):=(\Ai-i\Bi)(\tau)$.
We seek a basis of the form (dropping $\hbar$  as an independent variable from the notation)
\[
\psi_j(\zeta)=\psi_j(\zeta,E) = \psi_{j,0}(\zeta)(1+ \hbar
b_{j}(\zeta,E))
\] for $\zeta\ge0$.
This representation is meaningful since $\Ai$ and $\Bi$ have no common zeros (as
their Wronskian does not vanish). We obtain the
equation
\begin{equation}
  \label{eq:mult_ODEi} (\psi_{j,0}^2 \dot b_j)^{\dot{}} = - \frac{1}{\hbar } \tilde V \psi_{j,0}^2
(1+ \hbar b_{j})
\end{equation} for $j=1,2$, c.f.~\eqref{eq:mult_ODE}. A solution of~\eqref{eq:mult_ODEi} on
$\zeta\ge0$ is given by, with $b_j(\zeta)=b_j(\zeta,E)$,
\begin{align}
  b_j(\zeta) &:= \frac{-1}{\hbar } \int_\zeta^\infty
\psi_{j,0}^2(\eta) \int_{\zeta}^{\eta}
\psi_{j,0}^{-2}(\tilde\eta)\,d\tilde\eta\,
\tilde V(\eta,E)
(1+ \hbar b_{j}(\eta))\, d\eta
\label{eq:b2}
\end{align}
Recall the asymptotic behavior, see~\cite{Olver},
\begin{equation}
\Ai(-x)\pm i\Bi(-x) = \frac{1}{\pi^{\frac12} x^\frac14} e^{\mp i(\xi-\frac{\pi}{4})} (1+O(\xi^{-1}))
\label{eq:Aiasymp}
\end{equation}
as $x\to\infty$. Here $\xi=\frac23 x^{\frac32}$ and the $O(\cdot)$ term is complex-valued and exhibits symbol behavior:
\[
 \partial_\xi^k O(\xi^{-1}) = O(\xi^{-1-k}) \quad \forall\; k\ge0
\]
Therefore,  for any  $0>x_0>x_1$,
\[
\Big| (\Ai+i\Bi)^2(x_1) \int_{x_0}^{x_1} (\Ai+i\Bi)^{-2}(y)\, dy  \Big| \les \la
x_1\ra^{-\frac12}
\]
The leading term in~\eqref{eq:b2}, i.e.,
\[
b_{1,0}(\zeta,\hbar,E) :=
\hbar^{\frac13}\int_{\hbar^{-\frac23}\zeta}^\infty (\Ai+i\Bi)^2(-u)
\Big[\int_{\hbar^{-\frac23}\zeta}^u (\Ai+i\Bi)^{-2}(-v)\, dv\Big]\,
\tilde V(-\hbar^{\frac23}u,E) \, du
\]
therefore satisfies the bound, see  Lemma~\ref{lem:zeta2},
\begin{align*}
|b_{1,0}(\zeta)| &  \les \hbar^{\frac13}\int_{\hbar^{-\frac23}\zeta}^{\infty}
\la u\ra^{-\frac12}  |\tilde V(-\hbar^{\frac23}u,E)| \, du \\
&\les \hbar^{\frac13} \int_{\hbar^{-\frac23}\zeta}^{\infty}
\la u\ra^{-\frac12}  \la\hbar^{\frac23}u\ra^{-2} \, du
\lesssim   \la\zeta\ra^{-\frac32}
\end{align*}
uniformly in $\zeta\ge0$, $0<E<E_0$, and $0<\hbar<\hbar_0$. Due to
the linear nature of~\eqref{eq:b2},  a contraction argument now
yields the same bound for $b_1$; in fact, due to the derivative
bounds of Lemma~\ref{lem:zeta2} relative to~$E$, we obtain the more general estimate
\[
|\partial_E^k b_{j}(\zeta,E)| \le C_k\, E^{-k} \la\zeta\ra^{-\frac32} \quad \forall \;k\ge0
\]
uniformly in the parameters for both $j=1,2$. As for the first derivative in $\zeta$,
observe that
\begin{equation}\label{eq:bminprim}
\dot b_j(\zeta) = \frac{\hbar^{-1}}{
\psi_{j,0}^{2}(\zeta)} \int_\zeta^\infty
\psi_{j,0}^2(\eta)
 \tilde
V(\eta,E) (1+ \hbar b_{j}(\eta))\, d\eta
\end{equation}
In order to exploit the cancellation in this integral, one integrates by parts once.
To this end, write for $u\ge0$,
\[\begin{split}
 (\Ai+i\Bi)^2(-u) &= e^{\frac{4i}{3}\la u\ra^{\frac32}} \omega(u),\qquad |\omega(u)|\les \la u\ra^{-\frac12},\;|\omega'(u)|\les \la u\ra^{-\frac32}, \\
 \psi_{1,0}^2(\zeta;\hbar) &= e^{\frac{4i}{3}\la \hbar^{-\frac23} \zeta \ra^{\frac32}} \omega(\hbar^{-\frac23} \zeta)
\end{split}\]
Since
\[
 \psi_{1,0}^2(\zeta;\hbar)\,d\zeta = \frac{1}{2i} \hbar^{\frac23} \la \hbar^{-\frac23} \zeta \ra^{-\frac12}  \omega(\hbar^{-\frac23} \zeta)\,  d\Big[e^{\frac{4i}{3}\la \hbar^{-\frac23} \zeta \ra^{\frac32}}\Big]
\]
integration by parts yields
\begin{align}
 \dot b_1(\zeta) &= \frac{\hbar^{-\frac13}}{2i
\psi_{j,0}^{2}(\zeta)} \int_\zeta^\infty
\la \hbar^{-\frac23} \eta \ra^{-\frac12}  \omega(\hbar^{-\frac23} \eta)
 \tilde
V(\eta,E) (1+ \hbar b_{j}(\eta))\,  d\Big[e^{\frac{4i}{3}\la \hbar^{-\frac23} \eta \ra^{\frac32}}\Big]  \nonumber\\
&=  O\big(\hbar^{-\frac13} \la \hbar^{-\frac23} \zeta \ra^{-\frac12} \la \zeta\ra^{-2}  \big) - \label{eq:derivb1}\\
&\qquad\qquad
 - \frac{\hbar^{\frac23}}{2i
\psi_{j,0}^{2}(\zeta)} \int_\zeta^\infty e^{\frac{4i}{3}\la \hbar^{-\frac23} \eta \ra^{\frac32}}
O\big( \la \hbar^{-\frac23} \eta \ra^{-1}  \la\eta\ra^{-2}\big)
 \dot b_{1}(\eta)   \, d\eta  \label{eq:derivb1'}
\end{align}
The leading order here is given by \eqref{eq:derivb1}; indeed, if we
estimate the $\dot b_{1}(\eta)$ term in~\eqref{eq:derivb1'} by~\eqref{eq:derivb1}, then
$
 \eqref{eq:derivb1'} \less \hbar \la\zeta\ra^{-4},
$
which is much better than~\eqref{eq:derivb1}.
The conclusion is that
\[
 |\partial_\zeta\partial_E^k b_j(\zeta,E)|\les E^{-k} \hbar^{-\frac13} \la \hbar^{-\frac23} \zeta \ra^{-\frac12} \la \zeta\ra^{-2}
\]
as claimed.
\end{proof}

\section{The proof of Theorem~\ref{thm:main}}\label{sec:proof}

Let $f_{\pm}(x,E;\hbar)$ be the Jost solutions of $P(x,\hbar D)$ from~\eqref{eq:semiclass}. For ease
of notation, we shall first assume the symmetry $V(x)=V(-x)$ and later indicate how to treat the general case.
Also, as usual, we drop $\hbar$ from the arguments of functions.
Then $ f_-(x,E)=f_+(-x,E)$ so that the Wronskian of $f_+, f_-$ is
\[
 W(E)= -2f_+(0,E )f_+'(0,E )
\]
Next, from \eqref{eq:Aiasymp},  and with $\zeta=\zeta(x,E)$ as in~\eqref{eq:zeta} and $T_+(E)$ as in \eqref{eq:STdef},
\[
 f_+(x,E ) = \sqrt{\pi}\,E^{\frac14}\hbar^{-\frac16}e^{i(\frac{T_+(E)}{\hbar}+\frac{\pi}{4})} q^{-\frac14}(\zeta) \psi_2(\zeta,E )
\]
This is obtained by matching the asymptotic behavior of $f_+$ with that of $\psi_2(\zeta)$ as $x\to\infty$
and we used the relation $w=q^{\frac14}f$ from Lemma~\ref{lem:langer}.
We now connect $\psi_2$ to the basis $\phi_j(\zeta,E )$ of Proposition~\ref{prop:AiryI}:
\[
 \psi_2(\zeta,E ) = c_1(E ) \phi_1(\zeta,E ) + c_2(E ) \phi_2(\zeta,E )
\]
where
\[
 c_1(E ) = \frac{W(\psi_2(\cdot,E ),\phi_2(\cdot,E ))}{W(\phi_1(\cdot,E ), \phi_2(\cdot,E ))}, \quad c_2(E ) = - \frac{W(\psi_2(\cdot,E ),\phi_1(\cdot,E ))}{W(\phi_1(\cdot,E ), \phi_2(\cdot,E ))}
\]
By Proposition~\ref{prop:AiryI},
\[
 W(\phi_1(\cdot,E ), \phi_2(\cdot,E )) = -\hbar^{-\frac23} W(\Ai,\Bi) + O(\hbar^{-\frac23}) =-\pi^{-1}\hbar^{-\frac23}(1+O(\hbar))
\]
where we evaluated the Wronskian on the left-hand side at $\zeta=0$. Next, by Propositions~\ref{prop:AiryI} and~\ref{prop:AiryII},
\begin{equation}
 \label{eq:Wphipsi}
\begin{aligned}
  W(\psi_2(\cdot,E ),\phi_2(\cdot,E )) &= -\hbar^{-\frac23}[(\Ai(0)-i\Bi(0))\Bi'(0)\\
&\qquad -(\Ai'(0)-i\Bi'(0))\Bi(0)+O(\hbar)] \\
&= -\hbar^{-\frac23}[W(\Ai,\Bi) +O(\hbar)] \\
W(\psi_1(\cdot,E ),\phi_1(\cdot,E )) &= -\hbar^{-\frac23}[(\Ai(0)-i\Bi(0))\Ai'(0)\\
&\qquad -(\Ai'(0)-i\Bi'(0))\Ai(0)+O(\hbar)] \\
&= -\hbar^{-\frac23}[iW(\Ai,\Bi) +O(\hbar)]
\end{aligned}
\end{equation}
so that
\begin{equation}\label{eq:c1c2}
 c_1(E )=1+O(\hbar),\quad c_2(E )=-i+O(\hbar)
\end{equation}
where the $O(\cdot)$ terms satisfy $|\partial_E^k O(\hbar)|\le C_k\, E^{-k}\ $.
For the remainder of the proof, we set $\zeta_0:=\zeta(0,E)$.  Then
\begin{align*}
 f_+(0,E ) &= \sqrt{\pi}e^{i(\frac{T_+(E)}{\hbar}+\frac{\pi}{4})}\,E^{\frac14}\hbar^{-\frac16} q^{-\frac14}(\zeta_0)\psi_2(\zeta_0,E ) \\
&= \sqrt{\pi}e^{i(\frac{T_+(E)}{\hbar}+\frac{\pi}{4})}\,E^{\frac14}\hbar^{-\frac16} q^{-\frac14}(\zeta_0) [ c_1(E )\phi_1(\zeta_0,E ) + c_2(E ) \phi_2(\zeta_0,E ) ]\\
 f_+'(0,E ) &= \sqrt{\pi}e^{i(\frac{T_+(E)}{\hbar}+\frac{\pi}{4})}\,E^{\frac14}\hbar^{-\frac16}\zeta'(0)q^{-\frac14}(\zeta_0)\big[\psi_2'(\zeta_0,E )-\frac14 \frac{\dot q}{q}(\zeta_0)\psi_2(\zeta_0,E )  \big] \\
&=\sqrt{\pi}e^{i(\frac{T_+(E)}{\hbar}+\frac{\pi}{4})}\,E^{\frac14}\hbar^{-\frac16}\zeta'(0)q^{-\frac14}(\zeta_0)\big[c_1(E )\phi_1'(\zeta_0,E ) + c_2(E ) \phi_2'(\zeta_0,E ) \\
&\quad -\frac14 \frac{\dot q}{q}(\zeta_0)(c_1(E )\phi_1(\zeta_0,E ) + c_2(E ) \phi_2(\zeta_0,E ) )  \big]
\end{align*}
Recall from Lemma~\ref{lem:langer} that $\zeta'=q^{\frac12}$.
From $V'(0)=0$ we obtain
\[
 \dot q(\zeta_0)=\frac{Q_0(0)}{\zeta_0^2} = -\frac{q(\zeta_0)}{\zeta_0}
\]
and thus
\begin{align*}
& f_+(0,E )f_+'(0,E ) = i\pi E^{\frac12} e^{2i\frac{T_+(E)}{\hbar}}\hbar^{-\frac13}[ c_1(E )\phi_1(\zeta_0,E ) + c_2(E ) \phi_2(\zeta_0,E ) ]\times\\
&\qquad\qquad\qquad\qquad\qquad\qquad \times [c_1(E )\phi_1'(\zeta_0,E ) + c_2(E ) \phi_2'(\zeta_0,E )] \\
& \quad\qquad\qquad\qquad\qquad + \frac{i}{4}\pi E^{\frac12}e^{2i\frac{T_+(E)}{\hbar}} \hbar^{-\frac13}\zeta_0^{-1}[ c_1(E )\phi_1(\zeta_0,E ) + c_2(E ) \phi_2(\zeta_0,E ) ]^2
\end{align*}
From Proposition~\ref{prop:AiryI},
\begin{align*}
\phi_1(\zeta_0,E ) &= \Ai(-\hbar^{-\frac23}\zeta_0)(1+O(\hbar))\\
 \phi_2(\zeta_0,E ) &=  \Bi(-\hbar^{-\frac23}\zeta_0)(1+O(\hbar))\\
 \phi_1'(\zeta_0,E ) &= -\hbar^{-\frac23}\Ai'(-\hbar^{-\frac23}\zeta_0)(1+O(\hbar))+O(\hbar)|\zeta_0|^{\frac12}\Ai(-\hbar^{-\frac23}\zeta_0)\\
 \phi_2'(\zeta_0,E ) &= -\hbar^{-\frac23}\Bi'(-\hbar^{-\frac23}\zeta_0)(1+O(\hbar))+O(\hbar)|\zeta_0|^{\frac12}\Bi(-\hbar^{-\frac23}\zeta_0)
\end{align*}
which implies via the standard asymptotics of the Airy functions that
\begin{align*}
\phi_1(\zeta_0,E ) &= (4\pi)^{-\frac12}(\hbar^{-\frac23}|\zeta_0|)^{-\frac14}e^{-\frac23 \hbar^{-1}|\zeta_0|^{\frac32}}(1+O(\hbar))\\
 \phi_2(\zeta_0,E ) &= \pi^{-\frac12}(\hbar^{-\frac23}|\zeta_0|)^{-\frac14}e^{\frac23 \hbar^{-1}|\zeta_0|^{\frac32}}(1+O(\hbar))\\
 \phi_1'(\zeta_0,E ) &= \hbar^{-\frac23}(4\pi)^{-\frac12}(\hbar^{-\frac23}|\zeta_0|)^{\frac14}e^{-\frac23 \hbar^{-1}|\zeta_0|^{\frac32}}(1+O(\hbar))\\
 \phi_2'(\zeta_0,E ) &= -\hbar^{-\frac23}\pi^{-\frac12}(\hbar^{-\frac23}|\zeta_0|)^{\frac14}e^{\frac23 \hbar^{-1}|\zeta_0|^{\frac32}} (1+O(\hbar))
\end{align*}
Hence, using that $e^{-\hbar^{-1}|\zeta_0|^{\frac32}} = O(\hbar)$ where $\partial_E^k O(\hbar) = O(E^{-k}\hbar)$,
one concludes that
\begin{align*}
& \hbar^{-\frac13}[ c_1(E )\phi_1(\zeta_0,E ) + c_2(E ) \phi_2(\zeta_0,E ) ]\times\\
&\qquad\qquad\qquad\qquad\qquad\qquad\times [c_1(E )\phi_1'(\zeta_0,E ) + c_2(E ) \phi_2'(\zeta_0,E )] \\
& =  \pi^{-1} \hbar^{-1} e^{\frac43\hbar^{-1}|\zeta_0|^{\frac32}}(1+O(\hbar))
\end{align*}
as well as
\[
 \hbar^{-\frac13}\zeta_0^{-1}[ c_1(E )\phi_1(\zeta_0,E ) + c_2(E ) \phi_2(\zeta_0,E ) ]^2
 = -\pi^{-1}|\zeta_0|^{-\frac32}e^{\frac43\hbar^{-1}|\zeta_0|^{\frac32}}(1+O(\hbar))
\]
Since $T(E)=2T_+(E)$ we finally arrive at
\begin{align}
\nonumber W(E)=-2f_+(0,E )f_+'(0,E ) &= -2i e^{2i\frac{T_+(E)}{\hbar}}E^{\frac12}\hbar^{-1} e^{\frac43\hbar^{-1}|\zeta_0|^{\frac32}}(1+O(\hbar))  \\
&=  -\frac{2i\sqrt{E}}{\hbar} e^{\hbar^{-1}(S(E)+iT(E))} (1+O(\hbar)) \label{eq:Wasymp}
\end{align}
We used here that
\[
 \frac43|\zeta_0|^{\frac32} = 2\int_0^{x_1}\sqrt{V_0(\eta)-E}\,d\eta = S(E)
\]
All the $O(\hbar)$ appearing above behave as required under differentiation with respect to~$E$;
indeed, this is both due to the bounds of Propositions~\ref{prop:AiryI} and~\ref{prop:AiryII} as well as the aforementioned fact that
\[
 e^{-\frac23\hbar^{-1}|\zeta_0|^{\frac32}} = O(\hbar |\zeta_0|^{-\frac32}) = O(\hbar)
\]
has the required behavior
since $|\zeta_0|^{-\frac32} = O(|\log E|^{-\frac32})$ as $E\to0+$.
In view of~\eqref{eq:Sigma11W}, \eqref{eq:Wasymp} implies the sought after asymptotic relation for~$\Sigma_{11}$
in Theorem~\ref{thm:main}, see~\eqref{eq:sentries}.

\noindent In order to find $\Sigma_{12}$, and $\Sigma_{21}$ (i.e., the reflection coefficients), we need to also asymptotically
evaluate the following Wronskians:
\[
 W(f_{+}(\cdot, E ),\overline{f_{-}(\cdot, E )})=  W(\overline{f_{+}(\cdot, E )},f_{-}(\cdot, E ))=-2\,\text{Re}\,[f_+(0,E )\overline{f_+'(0,E )}].
\]
Using the same notations as in the computation of $W(E)$, we obtain
\begin{align*}
-2\textrm{Re}\,[ f_+(0,E )\overline{f_+'(0,E )}] &= -2\textrm{Re}\,\big\{\pi E^{\frac12} \hbar^{-\frac13}[ c_1(E )\phi_1(\zeta_0,E ) + c_2(E ) \phi_2(\zeta_0,E ) ]\times\\
&\qquad \times [\overline{c_1(E )}\phi_1'(\zeta_0,E ) + \overline{c_2(E )} \phi_2'(\zeta_0,E )]\big\} \\
&\qquad  - \frac{\pi}{2} E^{\frac12}\hbar^{-\frac13}\zeta_0^{-1}\vert c_1(E )\phi_1(\zeta_0,E ) + c_2(E ) \phi_2(\zeta_0,E ) \vert^2
\end{align*}
Finally, evaluating this expression as above, we obtain
\[
W(f_{+}(\cdot, E ),\overline{f_{-}(\cdot, E )})= -2\textrm{Re}\, [f_+(0,E )\overline{f_+'(0,E )}]  = \frac{2 \sqrt{E}}{\hbar} e^{\frac{S(E)}{\hbar}}(1+O(\hbar)).
\]
Forming the ratio between this formula and the one for $W(E)$ yields the desired expression for $\Sigma_{12}=\Sigma_{21}$, see~\eqref{eq:sentries}. Indeed,
\[
 r_{-}(E)=-\frac{W(\overline{f_{+}(\cdot, E )},f_{-}(\cdot, E ))}{ W(f_{+}(\cdot, E ),{f_{-}(\cdot, E )})} = -ie^{-i\hbar^{-1}T(E)}(1+O(\hbar))
\]
where $O(\hbar)$ behaves like a symbol with respect to~$E$, as usual. 
This concludes the proof of Theorem~\ref{thm:main} in the symmetric case. If $V(x)\ne V(-x)$, then only minor changes are
needed. Indeed, on $x\le 0$ we can still use the {\em same bases} $\phi_j, \psi_j$  from Section~\ref{sec:Airy} but
with $\zeta=\zeta(-x,E)$. This is due to the fact that the difference between the left-hand and right-hand branches of $V$ does not
affect the estimates from Section~\ref{sec:Airy} (since we are assuming inverse square decay at both ends and the constants $\mu_{\pm}$ have no effect on the leading order behavior).  Let
\[
  \tilde\zeta_0(E)^{\frac32} := \frac32 \int_{x_2(E)}^0 \sqrt{V_0(\eta)-E}\, d\eta
\]
Thus, in addition to the expressions for $f_+(0,E)$ and $f_+'(0,E)$
from above we now also have
\begin{align*}
 f_-(0,E )
&= \sqrt{\pi}e^{i(\frac{T_-(E)}{\hbar}+\frac{\pi}{4})}\,E^{\frac14}\hbar^{-\frac16} q^{-\frac14}(\tilde\zeta_0) [ c_1(E )\phi_1(\tilde\zeta_0,E ) + c_2(E ) \phi_2(\tilde\zeta_0,E ) ]\\
 f_-'(0,E ) &= - \sqrt{\pi}e^{i(\frac{T_-(E)}{\hbar}+\frac{\pi}{4})}\,E^{\frac14}\hbar^{-\frac16}\zeta'(0)q^{-\frac14}(\tilde\zeta_0)\big[c_1(E )\phi_1'(\tilde\zeta_0,E ) + c_2(E ) \phi_2'(\tilde\zeta_0,E ) \\
&\quad -\frac14 \frac{\dot q}{q}(\tilde\zeta_0)(c_1(E )\phi_1(\tilde\zeta_0,E ) + c_2(E ) \phi_2(\tilde\zeta_0,E ) )  \big]
\end{align*}
Inserting these expressions into
\[
 W(E) =  f_+(0,E ) f_-'(0,E ) - f_+'(0,E )f_-(0,E ),
\]
and using that
\[
 \frac23\big[\zeta_0^{\frac32} + \tilde\zeta_0^{\frac32}\big] =  \int_{x_2(E)}^{x_1(E)} \sqrt{V_0(\eta)-E}\, d\eta  =S(E)
\]
as well as $T(E)=T_+(E)+T_-(E)$, one again arrives at \eqref{eq:Wasymp}. The same comments apply to the off-diagonal
terms of the scattering matrix and we are done. As for the very last claim of the theorem concerning $V_0=V+\hbar^2 V_1$,
simply note that the main calculations entering into the above proof only make use of the leading order part of~$V_1$,
i.e., $\frac14\la x\ra^{-2}$ whereas the cubic piece gets absorbed into the error term. 

\section{From small to large energies}\label{sec:largeE}

In this section, we present an extension of
Theorem~\ref{thm:main} to the case of large energies. More specifically,  suppose $V$ is as in
Theorem~\ref{thm:main} but with the following additional properties:
\begin{itemize}
 \item $0<V(x)\le 1$ for all $x\in\R$, $V(0)=1$, $V'(0)=0$, $V''(0)=-1$
\item $V'(x)<0$ for all $x>0$, $V'(x)>0$ for all $x<0$
\end{itemize}
Note that this is precisely the kind of barrier potential considered by Ramond~\cite{Ramond} (but without any
analyticity assumptions).  For the purposes of this section we refer to it as a {\em simple barrier potential}.  
Even though Theorem~\ref{thm:main} by design only considers {\em small} energies $0<E<E_0$, it is natural to ask to what
extent it remains correct as $E_0\to1$. As already remarked before, for energies $E>\eps>0$ there is no difference between
$V$ and $V_0$ as far as Theorem~\ref{thm:main} is concerned. Indeed, switching from $V$ to $V_0$ only affects the error term. 
Moreover, for the kind of $V$ we are considering here, the theorem remains valid in any range $0<E<1-\eps$ with $\eps$ fixed.
This is due to the fact that in this range there is a unique pair of turning points $x_2(E), x_1(E)$ as before. The action $S(E;\hbar)$
lies between two positive constants (depending on~$\eps$) and the previous proof goes through without changes. Somewhat more
interesting and very relevant for  later applications, cf.~\cite{SSS1}, \cite{SSS2}, is the case where $\eps=\hbar^{\alpha}$. The
question is then how large  $\alpha\ge0$ can be allowed to be. First note that we can no longer expect the error term in~\eqref{eq:sentries} to be of the form $O(\hbar)$ in that case. Rather, it will  need to 
be $O(\hbar^\delta)$ for some $\delta=\delta(\alpha)>0$ and this condition will determine how large we can take $\alpha$. It
turns out that the range~$0\le \alpha<1$ is admissible here. In the following corollary, we use the notations introduced in Theorem~\ref{thm:main}. 

\begin{cor}\label{cor:main} 
Let $V$ be a simple barrier potential. 
For every $0<\alpha<1$ there exists  and $\hbar_0=\hbar_0(\alpha)$ small such that 
 for all $0<\hbar<\hbar_0$  and $0<E\le 1-\hbar^\alpha$
\begin{equation}\label{eq:sentries2}
\begin{aligned}
 \Sigma_{11}(E;\hbar) &= e^{-\frac{1}{\hbar}(S(E;\hbar)+iT(E;\hbar))} \big(1+ \hbar(1-E)^{-1}\,\sigma_{11}(E;\hbar)\big) \\
 \Sigma_{12} (E;\hbar)  &= -i e^{-\frac{2i}{\hbar} T_+(E;\hbar)} \big(1+\hbar (1-E)^{-1}\, \sigma_{12}(E;\hbar)\big)
\end{aligned}
\end{equation}
and the correction terms satisfy the bounds
\begin{equation}
\label{eq:errors2}
 |\partial_E^k\, \sigma_{11}(E;\hbar)|+|\partial_E^k\, \sigma_{12}(E;\hbar)| \le C_k\, \max(E^{-k},(1-E)^{-k})\quad\forall\; k\ge0,
\end{equation}
with a constant $C_k$ that only depends on $k$ and $V$.
\end{cor}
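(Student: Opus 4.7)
\medskip

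\noindent\textbf{Proof proposal.} The plan is to split the energy range into two overlapping regions. For $0<E\le E_0$ the statement is literally Theorem~\ref{thm:main} (with the factor $(1-E)^{-1}$ absorbed into a constant since $1-E$ is bounded below by $1-E_0$). So it suffices to treat the new regime $E_0\le E\le 1-\hbar^\alpha$, where the two turning points $x_1(E),x_2(E)$ cluster near $x=0$ rather than escape to $\pm\infty$. In this regime the $\hbar^2\la x\ra^{-2}/4$ modification of $V$ plays no role, so we work with $V$ itself throughout, and the entire analysis of Sections~\ref{sec:Langer}--\ref{sec:Airy} is to be repeated with a rescaling tailored to the quadratic maximum of $V$ at $x=0$.

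Concretely, since $V(0)=1$, $V'(0)=0$, $V''(0)=-1$, one has $V(x)=1-\tfrac12 x^2+O(x^3)$ and the turning points satisfy $x_1(E)=-x_2(E)=\sqrt{2(1-E)}(1+O(\sqrt{1-E}))$. I would introduce the rescaled variable $y=(1-E)^{-1/2}x$ and carry out the exact analogue of Lemma~\ref{lem:eta}, producing a normal-form change of variables $\xi=\xi(y,E)$ with $\xi(y_1(E),E)=1$ for which
\[
1-(1-E)^{-1}(V(x)-E)\;=\;\Big(\tfrac{d\xi}{dy}\Big)^2(1-\xi^{-2})\cdot(1+O(\sqrt{1-E}))
\]
on a fixed neighborhood of the turning points, together with symbol-type bounds in which each $E$-derivative costs a factor of $(1-E)^{-1}$ (this replaces the previous $E^{-1}$). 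Away from the barrier (roughly $|x|\gtrsim\eps$), $V-E$ is bounded below by a positive constant and the original estimates of Lemma~\ref{lem:zeta2} apply unchanged. Patching the two regions gives $q$ and $\tilde V$ in Lemma~\ref{lem:langer} with bounds of the form
\[
|\partial_E^k\partial_\zeta^\ell q|\lesssim (1-E)^{1-k}\la\zeta\ra^{-1-\ell},\qquad |\partial_E^k\partial_\zeta^\ell\tilde V|\lesssim (1-E)^{-k}\la\zeta\ra^{-2-\ell}
\]
in the region $\zeta\ge-1$, and analogous bounds of Lemma~\ref{lem:zeta2}-type in $\zeta_0\le\zeta\le-1$, with each $E$-differentiation costing~$(1-E)^{-1}$.

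Next, one reruns the solution of the perturbed Airy equation of Propositions~\ref{prop:AiryI} and~\ref{prop:AiryII}. The crucial point is that the effective small parameter governing the contraction in~\eqref{eq:a2} and~\eqref{eq:b2} is now $\hbar/(1-E)$ rather than $\hbar$, since the integrand $\hbar^{-1}\tilde V$ contributes a factor $\hbar^{-1}(1-E)^{-1}$ when absorbed against the $q$-bounds that control the various integrations by parts. Provided $\hbar/(1-E)\to 0$, a contraction argument identical to the original one gives $a_j,b_j$ of size $O(\hbar/(1-E))$ with $E$-derivatives bounded by $\hbar(1-E)^{-1-k}$. This is exactly where the condition $\alpha<1$ enters: on $E\le 1-\hbar^\alpha$ one has $\hbar/(1-E)\le\hbar^{1-\alpha}\to 0$.

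With the solutions of the perturbed Airy equation in hand, Section~\ref{sec:proof} goes through verbatim, except that each $O(\hbar)$ factor is replaced by $O(\hbar(1-E)^{-1})$, and the decisive identification
\[
\tfrac43|\zeta_0|^{3/2}\;=\;2\int_0^{x_1(E)}\sqrt{V(\eta)-E}\,d\eta\;=\;S(E)
\]
still holds, with $S(E)\sim\pi(1-E)/\sqrt{2}$ so that $|\zeta_0|^{3/2}/\hbar\gtrsim\hbar^{\alpha-1}\to\infty$; this in turn ensures $e^{-|\zeta_0|^{3/2}/\hbar}=O(\hbar^N)$ for every $N$, so the terms arising from the subleading Airy function can still be thrown into the error. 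The main obstacle is bookkeeping: one must verify that every error term produced throughout Sections~\ref{sec:Langer}--\ref{sec:proof} (in particular the many integrations by parts that gain a factor of $\hbar$) loses at most one additional factor of $(1-E)^{-1}$, and that the Airy-tail contributions $e^{-S(E)/\hbar}$ really are negligible on the entire range $E\le 1-\hbar^\alpha$. Both of these are ultimately consequences of the single condition $\hbar\ll 1-E$, i.e., $\alpha<1$, and this same condition dictates the form of the error bound in~\eqref{eq:sentries2}--\eqref{eq:errors2}.
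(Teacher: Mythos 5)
Your high-level strategy---restrict attention to $E_0\le E\le 1-\hbar^{\alpha}$, identify $\hbar/(1-E)\le\hbar^{1-\alpha}$ as the effective small parameter, and verify that the Airy tails $e^{-S(E)/\hbar}$ remain admissible---is the right frame, and the final error size $O(\hbar^{1-\alpha})$ is what the paper obtains. However, the route you sketch to get there has genuine problems.

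First, the proposed normal form is the wrong one for a quadratic maximum. Lemma~\ref{lem:eta} produces the identity $1-E^{-1}V_0(E^{-1/2}y)=(\xi')^2(1-\xi^{-2})$ because $V_0$ has an inverse-square \emph{tail}, so that after rescaling, $1-E^{-1}V_0$ approaches $1-c/y^{2}$, which matches the sign structure and zero of $1-\xi^{-2}$ at $\xi=1$. Near a non-degenerate barrier maximum the relevant rescaled quantity is, to leading order, $1-y^{2}$ (Weber/parabolic-cylinder structure), not $1-c/y^{2}$, and it has the \emph{opposite} sign in the forbidden region from $1-\xi^{-2}$. Your displayed identity therefore cannot hold, even modulo a $1+O(\sqrt{1-E})$ factor. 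This is in fact exactly what the last paragraph of Section~\ref{sec:largeE} warns about: for energies close to the barrier top, the Bessel normal form (and ultimately the Airy approximation itself once $E$ is within $O(\hbar)$ of the top) gives way to the Weber equation. The paper sidesteps this by never setting up a new normal-form lemma at all; it simply keeps the Liouville--Green variable $\zeta$ of Lemma~\ref{lem:langer} and estimates $q$ and $\tilde V$ directly at $E=1-\hbar^{\alpha}$ from the Taylor expansion of $V$ near the origin.

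Second, your claimed bounds on $q$ and $\tilde V$ near the turning point are incorrect and this is not merely cosmetic. In the region $0\le x\le x_1(E)$, i.e.\ $\zeta_0\le\zeta\le 0$ with $|\zeta_0|\sim(1-E)^{2/3}$, the paper computes $q\sim(1-E)^{1/3}$ and $|\tilde V|\lesssim(1-E)^{-4/3}$, whereas your bound $|\tilde V|\lesssim\la\zeta\ra^{-2}\lesssim 1$ misses the blow-up entirely and $q\lesssim(1-E)\la\zeta\ra^{-1}$ underestimates $q$ by a factor $(1-E)^{2/3}$. The correct balance that yields $\hbar^{1-\alpha}$ is not ``$\tilde V$ contributes one extra factor of $(1-E)^{-1}$'': it is the competition between the growth $|\tilde V|\sim(1-E)^{-4/3}$ and the shrinking integration range captured by $\la\hbar^{-2/3}\zeta_0\ra^{1/2}\sim\hbar^{(\alpha-1)/3}$. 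Plugging your $\tilde V\lesssim 1$ into the $a_{2,0}$ estimate would in fact produce the bound $\hbar^{1+\alpha/3}$, which is \emph{better} than $\hbar^{1-\alpha}$ and strictly inconsistent with the error size that you (correctly) want to end up with. That inconsistency signals the gap.

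To summarize what the paper actually does: it fixes $\delta=1-\alpha$, seeks a basis of the form $\phi_{j,0}(1+\hbar^{\delta}a_j)$, computes directly that in the barrier region $q\sim\hbar^{\alpha/3}$, $|\tilde V|\lesssim\hbar^{-4\alpha/3}$, and for $x\ge x_1$ that $|\tilde V|\lesssim x^{-8/3}\chi_{[x_1\le x\le1]}+\zeta^{-2}\chi_{[x\ge 1]}$, and then re-runs the Volterra iterations of Propositions~\ref{prop:AiryI}--\ref{prop:AiryII} to get $|a_j|,|b_j|\lesssim 1$ and $|\dot a_j|,|\dot b_j|\lesssim\hbar^{-2/3}$. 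No rescaled normal-form lemma is constructed; the point is that the exact balance $\hbar^{-2/3}$ vs.\ the Wronskian prefactor already closes the argument. If you want to pursue your route, you would need to replace the Bessel normal form by a Weber-type one and redo the estimates for $q$ and $\tilde V$ from scratch; as written, the bounds you assert will not pass.
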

\begin{proof} We will only sketch the proof as there is no point in repeating all the details of the proof of Theorem~\ref{thm:main}. 
In fact, inspection of the previous section shows that the main issue is to prove that Propositions~\ref{prop:AiryI} and~\ref{prop:AiryII}
remain valid albeit with errors of the form~$\hbar^{1-\alpha}$ rather than~$\hbar$ (we need to pay particular attention to the derivative $\partial_\zeta$) when $E=1-\hbar^\alpha$. 
We will freely use the notation from Section~\ref{sec:Langer} and~\ref{sec:Airy}. 
 By the preceding comments, it will suffice to consider the range $1-\eps<E\le 1-\hbar^\alpha$. In fact, it will be enough to set $E=1-\hbar^\alpha$ so that $x_1(E)\sim \hbar^{\frac{\alpha}{2}}$. The range $0<x<x_1(E)$ then corresponds to the region 
$-\hbar^{\frac{2\alpha}{3}}\les \zeta\le 0$. A simple calculation shows that $q\sim \hbar^{\frac{\alpha}{3}}$ in that range, as
well as $|\tilde V|\les \hbar^{-\frac{4\alpha}{3}}$ with the usual behavior under differentiation in~$E$.  In fact, for all 
$0\le x\le x_1(E)$ we have
\[
 V(x)-E= -\int_{x}^{x_1} V'(y)\, dy \sim x_1^2 - x^2 \sim (x_1-x)x_1
\]
and thus
\[
 \zeta \sim -x_1^{\frac13} (x_1-x), \qquad q=\frac{V-E}{-\zeta} \sim \frac{ x_1(x_1-x)}{x_1^{\frac13} (x_1-x)} = x_1^{\frac23}\sim \hbar^{\frac{\alpha}{3}}
\]
as claimed. Next, recall \eqref{eq:tilV}, viz. 
\begin{equation}
\tilde V=  \frac{1}{4}q^{-1} \langle x\rangle ^{-2}+\frac{3}{16} q^{-2}
\dot q^2 - \frac14 q^{-1} \ddot q
\end{equation}
Since $\dot q = q^{-\frac12} q'$ where $q'=\frac{dq}{dx}\sim x_1^{-\frac13}\sim q^{-\frac12}$, the second term here is of size
\[
 q^{-2}
\dot q^2 \les q^{-3} (q')^2 \les q^{-4}\sim \hbar^{-\frac{4\alpha}{3}}
\]
The other two terms are smaller whence $|\tilde V|\les \hbar^{-\frac{4\alpha}{3}}$ as claimed. 
Turning to Proposition~\ref{prop:AiryI}, we seek a basis of the form
\begin{equation}\nonumber
\begin{aligned}
  \phi_1(\zeta,E,\hbar) &=
  \Ai(\tau) [1+\hbar^\delta a_1(\zeta,E,\hbar)] \\
  \phi_2(\zeta,E,\hbar) &=
  \Bi(\tau) [1+\hbar^\delta a_2(\zeta,E,\hbar)]
\end{aligned}
\end{equation}
with $\delta:=1-\alpha$.  Proceeding as in the proof of Theorem~\ref{thm:main},
we arrive at the following analogue of~\eqref{eq:a20_bd}  
\[
 |a_{2,0}(\zeta)| \les \hbar^{\frac43-\delta}\int_{0}^{-\hbar^{-\frac23}\zeta}
\la u\ra^{-\frac12}  |\tilde V(-\hbar^{\frac23}u,E)| \, du 
\]
which yields
\[
 |a_{2,0}(\zeta)| \les \hbar^{\frac43-\delta} \la \hbar^{-\frac23}\zeta\ra^{\frac12} \hbar^{-\frac{4\alpha}{3}}
\les \hbar^{1-\alpha-\delta} 
\]
This shows that with our choice of $\delta$, we have
\[
 \sup_{\zeta(0,E)\le\zeta\le 0}|a_{2,0}(\zeta)|\les 1
\]
For the derivatives, the analogue of \eqref{eq:aminprim}, viz., 
\begin{equation}\nonumber
\dot a_2(\zeta) = \frac{\hbar^{-\delta}}{
\phi_{2,0}^{2}(\zeta,\hbar)} \int_\zeta^0
\phi_{2,0}^2(\eta,\hbar)
 \tilde
V(\eta,E) (1+ \hbar a_{2}(\eta))\, d\eta
\end{equation}
yields
\begin{align*}
|\dot a_2(\zeta,E,\hbar)| 
&\les \hbar^{\frac23-\delta} \Bi^{-2}(-\hbar^{-\frac23}\zeta) \int_{0}^{-\hbar^{-\frac23}\zeta} \Bi^2(u)\hbar^{-\frac{4\alpha}{3}}\, du \\
&\les \hbar^{\frac23-\delta-\frac{4\alpha}{3}} \la \hbar^{-\frac23}\zeta\ra^{-\frac12} \les \hbar^{-\frac23}
\end{align*}
where we again used that $\alpha<1$ in the final step. 
An analogous estimate holds for~$\phi_1$. 
We claim that these bounds are sufficient {\em provided}
the same type of estimates hold for the analogue of Proposition~\ref{prop:AiryII} at $\zeta=0$. Indeed, 
inspection of~\eqref{eq:Wphipsi} shows that in that case
\begin{align*}
  W(\psi_2(\cdot,E ),\phi_2(\cdot,E )) &= -\hbar^{-\frac23}[(\Ai(0)-i\Bi(0))\Bi'(0)\\
&\qquad -(\Ai'(0)-i\Bi'(0))\Bi(0)+O(\hbar^\delta)] \\
&= -\hbar^{-\frac23}[W(\Ai,\Bi) +O(\hbar^\delta)] \\
W(\psi_1(\cdot,E ),\phi_1(\cdot,E )) &= -\hbar^{-\frac23}[(\Ai(0)-i\Bi(0))\Ai'(0)\\
&\qquad -(\Ai'(0)-i\Bi'(0))\Ai(0)+O(\hbar^\delta)] \\
&= -\hbar^{-\frac23}[iW(\Ai,\Bi) +O(\hbar^\delta)]
\end{align*}
Note that there is an exact balance here between the $\hbar^{-\frac23}$ coming from the derivatives of the 
main contributions and the losses stemming from $\dot a_j, \dot b_j$. Hence, 
\[
 c_1(E )=1+O(\hbar^\delta),\quad c_2(E )=-i+O(\hbar^\delta)
\]
as desired. Since $\hbar^{-1}|\zeta_0|^{\frac32}\sim \hbar^{\alpha-1}$ and thus also
\[
 e^{-\hbar^{-1}|\zeta_0|^{\frac32}} = O(\hbar^{1-\alpha}) = O(\hbar^{\delta})
\]
the reader will easily check that the remainder of the proof in Section~\ref{sec:proof} goes through.

It therefore remains to deal with the oscillatory regime. In analogy with Proposition~\ref{prop:AiryII} we seek a basis 
\begin{equation}\nonumber
\begin{aligned}
  \psi_1(\zeta,E;\hbar) &=
  (\Ai(\tau)+i\Bi(\tau)) [1+\hbar^\delta b_1(\zeta,E;\hbar)] \\
  \psi_2(\zeta,E;\hbar) &=
  (\Ai(\tau)-i\Bi(\tau)) [1+\hbar^\delta b_2(\zeta,E;\hbar)]
\end{aligned}
\end{equation}
For this we need to understand $\tilde V$ on $\zeta\ge0$. First one checks that for
all $x\ge x_1(E)$, 
\[
 \zeta \sim \left\{ \begin{array}{ll} x_1^{\frac13}(x-x_1) & x_1\le x\le 2x_1 \\
                     x^{\frac43} & 2 x_1 \le x\ll 1 \\
                     x^{\frac23}  & x\gtrsim  1 
                    \end{array}\right.
\]
and thus 
\[
 q \sim \left\{ \begin{array}{ll} x^{\frac23} & x_1\le x\ll 1 \\
                                  x^{-\frac23} & x\gtrsim 1
                \end{array}\right.
\]
Hence, \eqref{eq:tilV} implies that
\[
 |\tilde V|\les x^{-\frac83}\chi_{[x_1\le x\le 1]} + \zeta^{-2}\chi_{[x\ge1]}
\]
Going through the proof of Proposition~\ref{prop:AiryII} shows that 
\[
 |b_j(0)|\les \hbar^{1-\alpha-\delta}\les 1, \quad |\dot b_j(0)|\les \hbar^{\frac23-\delta-\frac{4\alpha}{3}}\les \hbar^{-\frac23}
\]
as desired. The derivatives relative to $E$ are left to the reader. 
\end{proof}

Thus, the semi-classical approximation obtained in Theorem~\ref{thm:main} breaks down precisely at $E=1-\hbar$. 
As is well-known, the Airy equation is no longer the correct approximating equation  for energies close to the unique maximum $V(0)=1$
of a simple barrier potential.  In fact, there exists an analytic change of variables
which reduces the Schr\"odinger equation with such energies  to the Weber equation locally around the origin. 
Alternatively, Ramond~\cite{Ramond} invokes
micro-local methods and the Helffer-Sj\"ostrand normal form in that case.

\appendix

\section{A normal form reduction to Bessel's equation}

In this section we sketch an alternative route for the asymptotic analysis of Section~\ref{sec:Airy}.
It is based on Lemma~\ref{lem:eta} and reduces equation~\eqref{nonzeroenergy.eq} to
a Bessel equation rather than an Airy equation. However, we emphasize that these approaches
are in fact quite related as the Airy functions are used to describe Bessel functions $J_n$ and $Y_n$
in the large $n=\hbar^{-1}$ asymptotics very much in the spirit of Section~\ref{sec:Airy}, see~\cite{Olver}. A possible advantage of working with the Bessel
representation lies with the fact that they apply to all $x\in[\eps E^{-\frac12},\infty)$ which is
a region containing the turning point $x_1(E)$.
On the other hand, since they cannot be used on the region $[0,\eps E^{-\frac12}]$, one is again
faced with a connection problem as in Section~\ref{sec:Airy}. Moreover, we have found that using
distinct changes of variables in these two regions leads to a number of complications as compared to
the global action-based coordinates introduced in Lemma~\ref{lem:langer}.
For this, as well as other reasons, we
ultimately found it technically advantageous to
work with the Airy approximation directly, but we wish to sketch the Bessel method  since it seems
to be of independent interest.
 In this section, we shall use the notations of Lemma~\ref{lem:eta} and always work
on $y\ge1$ which transforms into $\xi\ge \xi_1(E)$, see~\eqref{eq:x1E}.
First, a preliminary technical lemma.

\begin{lem}
  \label{lem:muOmega} The function $\mu(\eta,E):=(\partial_\xi
y(\eta,E))^2(\partial_{yy} \xi)(y(\eta,E),E)$ satisfies
\begin{equation}
  \label{eq:mu_dec} |\partial_E^k\partial_\eta^j \mu(\eta,E)|\le C_{kj}\, E^{-k}\eta^{-j-3}
\end{equation}
and the positive smooth function
\begin{equation}\label{eq:Omega}
\Omega(\eta,E):= \exp\Big(-\int_\eta^\infty \mu(t,E)\, dt\Big)
\end{equation}
 satisfies $\frac{\Omega'}{\Omega}=\mu$ and
$\Omega=1+O(\eta^{-2})$ (we write $'=\partial_\eta$) with a symbol-type
$O(\eta^{-2})$.
\end{lem}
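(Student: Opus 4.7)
The plan is to use the expansions from Lemma~\ref{lem:eta} directly. Writing
\[
\xi(y,E) = y + \xi_0(E) + y^{-1}\rho_0(y,E), \qquad y(\xi,E) = \xi + y_0(E) + \xi^{-1}\tilde\rho_0(\xi,E),
\]
and differentiating, one obtains
\[
\partial_y\xi(y,E) = 1 + y^{-2}\tilde\rho_1(y,E), \qquad \partial_{yy}\xi(y,E) = y^{-3}\tilde\rho_2(y,E),
\]
where $\tilde\rho_1,\tilde\rho_2$ satisfy the same symbol-type bounds as $\rho_0$ in view of \eqref{eq:xi0rho0}. Similarly $\partial_\xi y(\eta,E) = 1 + \eta^{-2}\,\tilde\rho_3(\eta,E)$ with symbol behavior. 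Substituting $y = y(\eta,E)$ (which is comparable to $\eta$ for $\eta\ge 1$) into $\partial_{yy}\xi$ and applying the chain rule gives the pointwise estimate $|\mu(\eta,E)|\lesssim \eta^{-3}$.

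To upgrade to \eqref{eq:mu_dec} one differentiates the explicit formula
\[
\mu(\eta,E) = (1+\eta^{-2}\tilde\rho_3(\eta,E))^2\, y(\eta,E)^{-3}\,\tilde\rho_2(y(\eta,E),E),
\]
and uses the Fa\`a di Bruno formula together with \eqref{eq:ydiffinxE}: each $\partial_\eta$ applied to a factor of order $\eta^{-s}$ yields order $\eta^{-s-1}$, while each $\partial_E$ costs a factor $E^{-1}$ by \eqref{eq:xi0rho0}. The factor $y(\eta,E)^{-3}$ produces the leading $\eta^{-3}$ decay, and chain-rule contributions through $y(\eta,E)$ are harmless because $\partial_\eta y \sim 1$.

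For $\Omega$, the bound $|\mu(\cdot,E)|\lesssim \eta^{-3}$ shows that $\int_\eta^\infty \mu(t,E)\,dt$ converges absolutely and equals $O(\eta^{-2})$; differentiating the defining identity yields $\Omega'/\Omega = \mu$, and positivity is automatic since $\Omega=\exp(\cdot)$. Setting $F(\eta,E):=-\int_\eta^\infty \mu(t,E)\,dt$, one has $|F|\lesssim \eta^{-2}$ and, by~\eqref{eq:mu_dec},
\[
|\partial_E^k\partial_\eta^j F(\eta,E)| \le C_{kj}\,E^{-k}\,\eta^{-2-j}\quad\text{for all }k,j\ge0,
\]
where for $j\ge 1$ one differentiates under the integral and for $j=0$ one integrates the bound on~$\mu$. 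Since $\Omega-1 = e^F - 1 = F\sum_{n\ge 0}F^n/(n+1)!$, repeated differentiation with the Leibniz rule together with $|F|\lesssim \eta^{-2}\le 1$ for $\eta\ge 1$ gives
\[
|\partial_E^k\partial_\eta^j(\Omega(\eta,E)-1)|\le C_{kj}\,E^{-k}\,\eta^{-2-j},
\]
which is the desired symbol-type $O(\eta^{-2})$ behavior.

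The only nontrivial point is the chain-rule bookkeeping in step one: one needs that composing $\partial_{yy}\xi$ with $y=y(\eta,E)$ does not destroy the symbol behavior in~$\eta$. This is ensured by \eqref{eq:ydiffinxE}, which says that $\partial_\eta^\ell y(\eta,E)$ grows no faster than $\eta^{1-\ell}$, so each derivative in $\eta$ really does gain a factor $\eta^{-1}$ when it falls on the composed function. All other steps are routine.
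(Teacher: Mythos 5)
Your proof is correct and follows essentially the same route as the paper, which simply states that the lemma is an immediate consequence of Lemma~\ref{lem:eta} and that the $\eta^{-3}$ decay comes from \eqref{eq:yforlargex}. You have merely spelled out the routine chain-rule and Fa\`a di Bruno bookkeeping that the paper leaves implicit.
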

\begin{proof}
The $\eta^{-3}$ decay in \eqref{eq:mu_dec} is due to
\eqref{eq:yforlargex}. Otherwise, the lemma is an immediate
consequence of Lemma~\ref{lem:eta}.
\end{proof}

Now for the transformation of the equation with $V,V_0$ as in Theorem~\ref{thm:main}.
To motivate our way of obtaining the Bessel equation as an approximating equation, consider
the model operator
\[
 P(x,\hbar D):= -\hbar^2 \partial_x^2 + \la x\ra^{-2}
\]
It is tempting to introduce the Bessel operator
\[
 P_0(x,\hbar D):= -\hbar^2 \partial_x^2 + x^{-2}
\]
which should be a good approximation for large $x$. The problem here is that even though the error
decays like $x^{-4}$ it is not small compared to $\hbar$ unless $x>\hbar^{-\frac14}$. Since we need to
be able to send $\hbar$ and $E$ to zero {\em independently}, such an approximation is useless for the
case were $E$ is small but fixed and $\hbar\to0$. To idea behind our reduction to the Bessel equation
is essentially to let $\la x\ra$ be a new independent variable. The reader
will easily see that this is precisely what Lemma~\ref{lem:eta} does  (in addition, we scale out~$E$ and fix the
turning point to lie at~$1$). 

\begin{lem}
  \label{lem:transf} For any $0<E<E_0$ the following holds: $f(x)$
  is a smooth solution of
\[
-\hbar^2 f''(x) + V(x) f(x) = Ef(x)
\text{\ \ on\ \ }x> E^{-\frac12}
\]
iff $\phi(\xi) =\phi(\xi,E):= \Omega(\xi,E)^{\frac12} f(E^{-\frac12}
y(\xi,E))$, with $\Omega$ as in \eqref{eq:Omega},  is a smooth
solution of
\begin{equation}\label{eq:master}
 -\hbar^2 \phi''(\xi) + \big[\xi^{-2}(1-\hbar^2/4)-1  \big] \phi(\xi)  = \hbar^2 W_0(\xi,E)
 \phi(\xi) \text{\ \ on\ \ }\xi>\xi_1(E)
\end{equation}
with a potential $W_0$ satisfying
\begin{equation}\label{eq:W0}
|\partial_E^k \partial_\xi^\ell W_0(\xi,E)|\le C_{k,\ell} E^{-k}
\xi^{-3-\ell}
\end{equation}
for all $k,\ell\ge0$.
\end{lem}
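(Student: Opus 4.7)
The plan is to perform a two-stage transformation: first change the independent variable from $x$ to $\xi$ via the composition $x\mapsto y=E^{1/2}x\mapsto \xi=\xi(y,E)$ supplied by Lemma~\ref{lem:eta}, which is engineered so that the principal potential takes the Bessel form $\xi^{-2}-1$; then absorb the resulting first-order term by the Liouville rescaling $f=\Omega^{-1/2}\phi$, using precisely the factor $\Omega$ built in Lemma~\ref{lem:muOmega} for this purpose. The structural identity~\eqref{eq:xy_corr} is the reason why the leading part of the transformed potential collapses to $\xi^{-2}-1$, while the splitting $V=V_0-\tfrac{\hbar^2}{4}\la x\ra^{-2}$ feeds the extra $-\hbar^2\xi^{-2}/4$ contribution needed for the full symbol $\xi^{-2}(1-\hbar^2/4)-1$ appearing in~\eqref{eq:master}.

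Carrying this out explicitly, the chain rule converts the original equation into
\[
-\hbar^2 f_{\xi\xi} - \hbar^2 \mu\, f_\xi + \frac{V-E}{E\xi_y^2}\,f = 0,
\]
where the coefficient of $f_\xi$ equals $\xi_{yy}/\xi_y^2$ evaluated at $y=y(\xi,E)$, i.e.\ precisely the function $\mu$ of Lemma~\ref{lem:muOmega}. Applying \eqref{eq:xy_corr} to the $V_0$-piece and splitting off the Bessel term as $\hbar^2/(4E\la x\ra^2\xi_y^2)=\hbar^2\xi^{-2}/4+\tfrac{\hbar^2}{4}(A-\xi^{-2})$ with $A:=1/(E\la x\ra^2\xi_y^2)$ puts the potential in the form $\xi^{-2}(1-\hbar^2/4)-1-\tfrac{\hbar^2}{4}(A-\xi^{-2})$. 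Next, with $w:=\Omega^{-1/2}$ and $\Omega'/\Omega=\mu$, one has $w'/w=-\mu/2$ and $w''/w=\mu^2/4-\mu'/2$, so that substituting $f=w\phi$ and dividing by $w$ cancels the $\phi_\xi$ term exactly and produces an additional $\hbar^2(\mu'/2+\mu^2/4)\phi$ on the left. Regrouping yields~\eqref{eq:master} with
\[
W_0(\xi,E) = \tfrac14\bigl(A(\xi,E)-\xi^{-2}\bigr) - \tfrac12\mu'(\xi,E) - \tfrac14\mu^2(\xi,E).
\]

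The main obstacle is the symbol bound~\eqref{eq:W0}. The $\mu$-contributions decay like $\xi^{-4}$ and $\xi^{-6}$ respectively by Lemma~\ref{lem:muOmega}, which is comfortably better than required. The delicate piece is $A-\xi^{-2}$: since $E\la x\ra^2=E+y^2$, one must exploit the first two terms of the asymptotic expansion $y(\xi,E)=\xi+y_0(E)+\xi^{-1}\tilde\rho_0(\xi,E)$ from~\eqref{eq:yforlargex}, combined with $y_\xi=1+O(\xi^{-2})$ coming from the same expansion. Expanding
\[
A = \frac{y_\xi^2}{\xi^2+2y_0\xi+O(1)} = \xi^{-2}-2y_0\xi^{-3}+O(\xi^{-4})
\]
shows that the leading $\xi^{-2}$ pieces cancel exactly, so that $A-\xi^{-2}=O(\xi^{-3})$; termwise differentiation of this expansion produces the required symbol behavior in $\xi$, and the bounds~\eqref{eq:xi0rho0}--\eqref{eq:ydiffinxE} control derivatives in $E$ with the expected cost of a factor $E^{-1}$ per derivative. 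Assembling these estimates yields $|\partial_E^k\partial_\xi^\ell W_0|\le C_{k,\ell}E^{-k}\xi^{-3-\ell}$ and completes the proof.
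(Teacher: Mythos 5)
Your proof is correct and follows essentially the same route as the paper: change variables $x\mapsto\xi$ via Lemma~\ref{lem:eta}, invoke~\eqref{eq:xy_corr} to produce the Bessel potential $\xi^{-2}-1$, eliminate the first-order term with the Liouville factor $\Omega^{-1/2}$ from Lemma~\ref{lem:muOmega}, and then peel off $-\tfrac{\hbar^2}{4}\xi^{-2}$ from the transformed $\tfrac{\hbar^2}{4}\la x\ra^{-2}$ contribution, using the expansion~\eqref{eq:yforlargex} to show the remainder $A-\xi^{-2}$ decays like $\xi^{-3}$ with symbol behavior in $\xi$ and $E$. The only cosmetic difference is that the paper moves the $\hbar^2 V_1$ piece to the right-hand side before transforming and extracts the leading $-\tfrac{1}{4\xi^2}$ afterwards, while you keep the full potential together and split $A$ at the end; the resulting $W_0=\tfrac14(A-\xi^{-2})-\tfrac12\mu'-\tfrac14\mu^2$ agrees with the paper's.
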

\begin{proof}
Under the change of variables $g(y)=f(E^{-\frac12}y)$ the following
equations are equivalent, with $V_1(x)=-\frac14\la x\ra^{-2}$:
\begin{align*}
-\hbar^2 f''(x) + (V_0(x)+\hbar^2 V_1(x)) f(x) &= Ef(x) \\
 -\hbar^2 g''(y) +  (E^{-1}V_0(E^{-\frac12}y)-1) g(y) &= -\hbar^2 E^{-1}
 V_1(E^{-\frac12} y) g(y)
\end{align*}
Now let $\xi=\xi(y,E)$ be as in Lemma~\ref{lem:eta} and set
$\psi(\xi)=g(y(\xi,E))$, or equivalently, $\psi(\xi(y,E))=g(y)$. Then,
with $\mu$ as in Lemma~\ref{lem:muOmega},
\begin{equation}
  \label{eq:premaster}
  \begin{aligned}
& -\hbar^2 [\psi''(\xi) + (\partial_\xi y(\xi,E))^2(\partial_{yy}
\xi)(y(\xi,E),E)\psi'(\xi)]+ (\xi^{-2}-1) \psi(\xi) \\
& =-\hbar^2 [\psi''(\xi) + \mu(\xi,E)\psi'(\xi)]+ (\xi^{-2}-1) \psi(\xi) \\
& = - \hbar^2 (\partial_\xi y(\xi,E))^2 E^{-1}
 V_1(E^{-\frac12} y(\xi,E)) \psi(\xi)
\end{aligned}
\end{equation}
Let $\Omega$ be as in Lemma~\ref{lem:muOmega}. In view
of~\eqref{eq:premaster}, $\phi:= \Omega^{\frac12} \psi$ satisfies
the equation
\begin{equation}
  \label{eq:premaster2}   \begin{aligned}
& -\hbar^2 \phi''(\xi) + (\xi^{-2}-1) \phi(\xi) & = \hbar^2 W(\xi,E) \phi(\xi)
\end{aligned}
\end{equation}
with
\[
W(\xi,E) := -(\partial_\xi y(\xi,E))^2 E^{-1}
 V_1(E^{-\frac12} y(\xi,E)) -\frac12 \frac{\Omega''(\xi,E)}{\Omega(\xi,E)} + \frac14\Big(
\frac{\Omega'(\xi,E)}{\Omega(\xi,E)}\Big)^2
\]
The second part here involving $\Omega$ decays like $\xi^{-4}$,
whereas the first only decays like $\xi^{-2}$. We need to extract this
leading order decay: the asymptotic expansion
\[
V_1(\xi) = -\frac{1}{4\xi^2} + O(\xi^{-3})\text{\ \ as\ \
}\xi\to\infty
\]
and Lemma~\ref{lem:eta} imply that
\[
(\partial_\xi y(\xi,E))^2 E^{-1}
 V_1(E^{-\frac12} y(\xi,E)) = -\frac{1}{4\xi^2} + \xi^{-3} V_r(\xi,E)
\]
where
\[
|\partial_E^k \partial_\xi^\ell V_r(\xi,E)|\le C_{k,\ell} E^{-k}
\xi^{-\ell}
\]
In view of~\eqref{eq:premaster2}, this yields equation
\eqref{eq:master} and we are done.
\end{proof}

A fundamental system $\{\phi_{j,n}^{(0)}\}_{j=1}^2$ of the
homogeneous form of \eqref{eq:master}, i.e.,
\begin{equation}\label{eq:master_hom} -\hbar^2 \phi''(\xi) +
\big[\xi^{-2}(1-\hbar^2/4)-1  \big] \phi(\xi)  =0
\end{equation}
is given in terms of Hankel functions:
\[
\phi_{j,n}^{(0)}(\xi) = \xi^{\frac12} H_n^{(j)}(n\xi), \qquad j=1,2,\;
n:=\hbar^{-1}
\]
or, equivalently, by the Bessel functions
\[
\wt\phi_{1,n}^{(0)}(\xi) = \xi^{\frac12} J_n(n\xi),\quad \wt\phi_{2,n}^{(0)}(\xi)
= \xi^{\frac12} Y_n(n\xi)
\]
with Wronskian
\[
W(\wt\phi_{1,n}^{(0)},\wt\phi_{2,n}^{(0)}) = \frac{2}{\pi}, \quad W(\phi_{1,n}^{(0)},\phi_{2,n}^{(0)}) = \frac{4i}{\pi}
\]
Hence, the forward Green function of~\eqref{eq:master_hom} is
\begin{align*}
G_n(\xi,\xi') &:= \frac{\pi}{4i} [\phi_{1,n}^{(0)}(\xi)\phi_{2,n}^{(0)}(\xi')-\phi_{1,n}^{(0)}(\xi')\phi_{2,n}^{(0)}(\xi)]\chi_{[\xi<\xi']}\\
&= \frac{\pi}{2} [\wt\phi_{1,n}^{(0)}(\xi)\wt\phi_{2,n}^{(0)}(\xi')-\wt\phi_{1,n}^{(0)}(\xi')\wt\phi_{2,n}^{(0)}(\xi)]\chi_{[\xi<\xi']}\\
&= \frac{\pi}{2}(\xi\,\xi')^{\frac12}[ J_n(n\xi)Y_n(n\xi')-
J_n(n\xi')Y_n(n\xi)]\chi_{[\xi<\xi']}
\end{align*}
and thus a basis $\{\phi_{j,n}\}_{j=1}^2$ of \eqref{eq:master} is
given by the Volterra equation
\begin{equation}
  \label{eq:phij_perturb}
  \phi_{j,n}(\xi) = \phi_{j,n}^{(0)}(\xi) + \int_\xi^\infty G_n(\xi,\xi')W_0(\xi',E) \phi_{j,n}(\xi')\,
  d\xi'
\end{equation}
that one now needs to solve. This of course requires a thorough
understanding of the behavior of $J_n(n\xi)$ and
$Y_n(n\xi)$ for large $n$ on intervals of the form $\xi>\xi_0>0$ where $0<\xi_0\ll 1$ is
fixed, see~\cite{AS} and~\cite{Olver}. We leave it to the interested reader to pursue this direction.

\bibliographystyle{amsplain}

\end{document}